\documentclass[prd,twocolumn,superscriptaddress,floatfix,amsmath,amssymb,amsfonts,nofootinbib,longbibliography]{revtex4-2}

\usepackage{float} 
\usepackage{scalerel}
\usepackage[normalem]{ulem}
\usepackage[english]{babel}
\usepackage{graphicx}
\usepackage{dcolumn}% Align table columns on decimal point
\usepackage{bm}% bold math
\usepackage{blindtext}
\usepackage{verbatim}
\usepackage{relsize}
\usepackage{mathrsfs}
\usepackage{musicography}
\usepackage{amsmath}
\usepackage{blindtext}
\usepackage{cancel}
\usepackage{physics}
\usepackage{epstopdf}
\usepackage{mathtools}
\usepackage{blindtext}
\usepackage{tensor}
\usepackage{color}
\usepackage[usenames,dvipsnames]{pstricks}
\usepackage{epsfig}
\usepackage{pst-grad}
\usepackage{pst-plot} 
\usepackage{hyperref}
\usepackage{verbatim}
\usepackage{slashed}
\usepackage{dsfont}
\usepackage[english]{babel}
\usepackage{amsmath,amssymb,amsthm} 
\usepackage{lipsum}

\NewDocumentEnvironment{alignb}{b}{%
  \begin{align*}
 ~\refstepcounter{equation} #1 \tag{\theequation}
  \end{align*}
}{\ignorespacesafterend}

\newcommand{\w}{\omega}
\newcommand{\xx}{\mf{x}}

\newcommand{\mf}{\mathsf}%\mathsf is too long

\newcommand{\ii}{\mathrm{i}}
\newcommand{\ee}{\mathrm{e}}

\newcommand{\tc}[1]{\textsc{#1}}

\newcommand{\trans}[1]{{#1}^{\!\mathsf{T}}}
\newcommand{\ptrans}[1]{{#1}^{\!\mathsf{T}_\tc{B}}}

\newcommand{\sct}[1]{\scaleto{#1}{3.2pt}}

\allowdisplaybreaks[1] %<<<<<<<<<<<<<<<<<<<<<<
\newtheorem{proposition}{Proposition}

\begin{document}

\title{Optimization of entanglement harvesting with arbitrary temporal profiles:\\
the limit of second order perturbation theory}

\author{Marcos Morote-Balboa}
\email{mamo2246@student.su.se}

\affiliation{Nordita,
KTH Royal Institute of Technology and Stockholm University,
Hannes Alfvéns väg 12, 23, SE-106 91 Stockholm, Sweden}

\affiliation{Department of Physics, Stockholm University, SE-106 91 Stockholm, Sweden}

\author{T. Rick Perche}
\email{rick.perche@su.se}

\affiliation{Nordita,
KTH Royal Institute of Technology and Stockholm University,
Hannes Alfvéns väg 12, 23, SE-106 91 Stockholm, Sweden}

\begin{abstract}
    We study the protocol of entanglement harvesting when two local probes couple to the vacuum of a real scalar quantum field with arbitrary temporal profiles. We use a Hermite expansion to efficiently compute smeared field propagators in closed-form, recasting the negativity between the probes as a matrix product. We then optimize the protocol under different signalling conditions, enhancing entanglement harvesting by several orders of magnitude. This optimization would take current experimental proposals beyond the regime of second order perturbation theory.

   %When applying our results to different setups where the detectors are at different levels  of spacelike separation, our optimization enhances the protocol by several orders of magnitude, while keeping a negligible, or even non-existent, signalling.
   
    % By developing a Hermite-expansion method for quick computation of the final entanglement between the probes

    % in when the detectors are at different levels of spacelike separation.
   
   %When the probes are spacelike separated we find that oscillatory temporal profiles optimize the protocol, exemplifying the fact that superoscillating functions can push entanglement harvesting to the limit of perturbation theory. When allowing non-negligible but small signalling between the probes, we find a three order of magnitude increase in the negativity compared to a Gaussian switching function. Finally, we find parameters that prevent communication from causally connected probes, allowing them to harvest entanglement to the limit of second order perturbation theory without the need of superoscillations.
\end{abstract}

\maketitle

\section{Introduction}\label{section:intro}

Although the vacuum of a quantum field theory (QFT) shares an infinite amount of entanglement between a region and its complement~\cite{embezzlers2024}, there is, to date, no experimental evidence of spacelike vacuum entanglement. This contrast is due to the fact that, to access entanglement between complementary regions, one would require probing arbitrarily high energy modes localized at the boundary~\cite{areaLaw1993,areaLawReview2010,witten,kelly}. Physically, we are then limited to probing entanglement between bounded non-complementary regions, which is finite and decays exponentially with their separation~\cite{finiteEnt2009,KlcoUVIR,KlcoEntAllDist}. The most promising attempt in the direction of witnessing vacuum entanglement is the protocol of entanglement harvesting~\cite{Valentini1991,Reznik1,Pozas-Kerstjens:2015}, where local probes interact with a quantum field, extracting entanglement from the degrees of freedom that they couple to.

%spacetime regions where the interaction takes place.

% where two local probes couple to independent degrees of freedom and extract pre-existing entanglement from the field.

% couple to independent field degrees of freedom and extract entanglement a quantum field. 

In recent years, different proposals for experimental implementations of entanglement harvesting have been put forward, seeking to extract entanglement both from analogous systems~\cite{cisco2023harvesting,oberthaler} and from the electromagnetic field~\cite{Mo_Settembrini2022VacuumCorrelationsNatCommun,adamExperimentalEH2025}. Despite these proposals allowing probes to, in principle, extract entanglement from the vacuum, the resulting entanglement is typically inappreciable and barely large enough to be observed. While the proposals~\cite{cisco2023harvesting,oberthaler,adamExperimentalEH2025,Mo_Settembrini2022VacuumCorrelationsNatCommun} all consider setups that maximize entanglement extraction, these optimizations are typically limited to a few real parameters that cannot control the most essential part of local interactions in QFT: the profile of the interaction in spacetime. %In this manuscript we will show that considering a more general optimization of this sort can  

Indeed, despite the many entanglement harvesting setups considered in the literature (see e.g.~\cite{Valentini1991,Reznik1,reznik2,RalphOlson2,NickEdu2014,Pozas-Kerstjens:2015,Pozas2016,HarvestingQueNemLouko,HarvestingBHLaura,Cong2019,EricksonZero,HarvestingSuperposed,ericksonNew,HarvestingDelocalized,carol,HarvestingAccelerationRobb,hectorMass,boris,FullyRelativisticEH,adrianQuenchedHarv2025}), 
% Indeed, despite the many different theoretical setups in which entanglement harvesting has been considered (see e.g.~\cite{NickEdu2014,Pozas-Kerstjens:2015,Pozas2016,EricksonZero,HarvestingSuperposed,Cong2019,ericksonNew,carol,boris,waaaymore}), 
optimizations of the protocol usually consist of varying only a few parameters of the probes, such as their energy gap and separation in spacetime~\cite{RalphOlson2,Pozas-Kerstjens:2015,hectorMass,ericksonNew}, while keeping a constant temporal profile of the interaction (a single positive pulse,  typically a Gaussian~\cite{Pozas-Kerstjens:2015,Pozas2016,EricksonZero,HarvestingSuperposed,Cong2019,ericksonNew,carol,HarvestingAccelerationRobb,HarvestingBHLaura,hectorMass,boris,FullyRelativisticEH}). However, in~\cite{Reznik1}, it was argued that more general oscillatory switching functions could enhance the entanglement extracted. A complete optimization of entanglement harvesting must therefore also optimize over the spacetime profile of the interaction.

% allow one to 

% require one to treat even the most fundamental aspects of these detectors as degrees of freedom that can be optimized over.

%In this manuscript we will show that considering a more general optimization of this sort can 

%Despite the \trr{broad literature} on entanglement harvesting, the protocol has never been optimized in full generality. Existing studies typically optimize a few parameters within \trr{a fixed family of setups}, such as the detectors' energy gap, separation, or \trr{the width} of a prescribed switching function. However, the temporal profile of the interaction is itself a central ingredient of the protocol, \trr{and the spacetime regions where the detectors couple to the field are usually taken as fixed from the start}, \too{most commonly a single positive pulse such as a Gaussian}. 
%\tmm{A genuine optimization should therefore treat the switching function as part of the problem, rather than as an external choice.}

In this manuscript we will optimize the protocol of entanglement harvesting when considering arbitrary temporal profiles for the detectors-field coupling. We will carry this optimization over subspaces spanned by Hermite functions, which {will allow us to express the entanglement and signaling quantifiers between the detectors as simple matrix products, whose components we obtain in closed-form. When applying our results to different setups with detectors at different levels of spacelike separation, our optimization enhances the protocol by several orders of magnitude, while keeping a negligible, or even non-existent, signalling. To quantify the causal contact between the probes, we define the signalling-to-entanglement ratio, a stricter signalling quantifier than the previously proposed communication-mediated entanglement estimator~\cite{ericksonNew}. Extrapolating our results to experimental setups, we then argue that such an optimization would take entanglement harvesting beyond the regime of second order perturbation theory.

This manuscript is organized as follows. In Section~\ref{section:basics-aqft} we review propagators in a real scalar QFT and how they encode quantum correlations and communication. In Section~\ref{sec:iii} we review the protocol of entanglement harvesting and present an explicit example with Gaussian switching functions that will be used as a comparison throughout the manuscript. Section~\ref{sec:iv} presents our Hermite-expansion method for computing the relevant smeared propagators in QFT for arbitrary temporal profiles. In Section~\ref{sec:optimization} we apply our method to optimize the protocol of entanglement harvesting under different signalling conditions. Section~\ref{sec:limit} discusses the limitations of the second order perturbative approach to the protocol, as well as potential experimental consequences of our results. The conclusions of our work can be found in Section~\ref{sec:conclusions}.

\phantom{a}

\section{Propagators in Quantum Field Theory}\label{section:basics-aqft}

%There are many ways of axiomatizing quantum field theory, such as canonical quantization, path integrals, algebraic and topological descriptions, among others. 
In this section we will introduce the propagators relevant to our analysis and set the conventions in QFT that we will use throughout the manuscript.
%In this Section we will set the conventions in QFT that we will use throughout the manuscript. In this work, we will adopt the algebraic approach, which will allow us to conveniently describe the relevant propagators in the protocol of entanglement harvesting. 
We will focus on the theory of a real scalar quantum field in 3+1 Minkowski spacetime $\mathcal{M}$. %\shout{paragraph?} 

Consider a classical real scalar field $\phi(\mf x)$ satisfying the massless Klein-Gordon equation of motion
\begin{equation}
    \Box \phi = 0,
\end{equation}
where $\Box$ is the d'Alembertian. This equation uniquely defines the retarded Green's function $G_{\tc{r}}$ by the condition
\begin{equation}
   \Box G_{\tc{r}} f = f,
\end{equation}
for any source function $f\in C_0^\infty(\mathcal{M})$. The propagated function $G_\tc{r}f$ is supported in the causal future of the support of $f$.

We then promote $\phi(\xx)$ to an operator-valued distribution $\phi(\xx) \to \hat \phi(\xx)$, which we can expand in terms of creation and annihilation operators associated with the basis of plane wave solutions:
\begin{equation}
    \hat \phi(\xx) =  \frac{1}{(2\pi)^{\frac{3}{2}}}\int \frac{\dd^3\bm k}{\sqrt{2 |\bm k|}} \left( \hat a_{\bm k} \ee^{\ii \mf{k} \cdot \xx} + \hat a^\dagger_{\bm k} \ee^{-\ii \mf{k} \cdot \xx}\right),
\end{equation}
where $\mf k^0 \equiv |\bm k|$, $\mf k \cdot \xx = -|\bm k| t + \bm k \cdot \bm x$, and $(t,\bm x)$ is any set of inertial coordinates. The creation and annihilation operators satisfy the usual canonical commutation relations
\begin{equation}
    [\hat{a}_{\bm k},\hat{a}^\dagger_{\bm k'}] = \delta^{(3)} ( \bm k - \bm k'), \quad [\hat{a}_{\bm k},\hat{a}_{\bm k'}] = [\hat{a}^\dagger_{\bm k},\hat{a}^\dagger_{\bm k'}] = 0,
\end{equation} 
and uniquely define the Minkowski vacuum $\ket{0}$ by the condition $\hat{a}_{\bm k}\ket{0} = 0$ for all $\bm k$. Due to the fact that the field operator $\hat \phi(\xx)$ is only meaningful in the distributional sense, one introduces the (well defined) smeared field operator $\hat{\phi} : C_0^\infty(\mathcal{M}) \rightarrow C_0^\infty(\mathcal{M})$ as:
\begin{equation}
     \hat{\phi}(f) = \int \dd V f(\xx) \hat{\phi}(\xx),
\end{equation}
where $f$ is a test function whose profile gives the shape of the region where $\phi$ is smeared\footnote{One can also extend the action of the distribution $\hat{\phi}$ to more general spaces of test functions~\cite{Fredenhagen2015utr}.}. Operators of the form $\hat{\phi}(f)$ are the observables a local probe has access to when interacting with the field in the spacetime region defined by $|f|$.

% All the relevant observables of the theory can be written as products of these smeared field operators, thus a state is fully defined by the expectation values of products of the form $\hat{\phi}(f_1)...\hat{\phi}(f_n)$. 

 %Therefore, it is convenient to define the vacuum expectation value of the product of $n$ field operators, which we call the $n$-point function:
% \begin{equation}
%    \langle \hat{\phi}(f_1) \dots \hat{\phi}(f_n) \rangle \equiv \bra{0} \hat{\phi}(f_1) \dots \hat{\phi}(f_n) \ket{0}.
% \end{equation}

%1) We will focus on the vacuum

%2) Being a Gaussian state, it is fully determined by its two point function.

In this work, we will be interested in studying vacuum entanglement between finite spacetime regions, hence we will exclusively focus on the vacuum state $\ket{0}$. Being a Gaussian state, it is fully determined by its two-point function
\begin{equation}\label{eq:wightman-kernel}
    W(\xx, \xx') \!= \!\langle\hat{\phi}(\mf x) \hat{\phi}(\mf x')\rangle \!=\! \frac{1}{(2 \pi)^3} \!\int \!\dfrac{\dd^3 \bm k}{2 |\bm k|} \ee^{-\ii |\bm k| (t-t') + \ii \bm{k} \cdot (\bm x - \bm x')},
\end{equation}
where $\langle\hat{\mathcal{O}}\rangle \equiv \bra{0}\!\hat{\mathcal{O}}\!\ket{0}$ denotes the vacuum expectation value. When smeared against test functions, it defines the Wightman function
%In the case of free fields, the vacuum is a Gaussian state and all the odd $n$-point correlations functions vanish; the even $n$-point functions, on the other hand, reduce to the $2$-point function following Wick's theorem. Therefore, we can center our attention of the $2$-point function, which we call the Wightman function:
\begin{equation}
    W(f,g) = \langle \hat{\phi}(f) \hat{\phi}(g) \rangle   = \int \dd V \dd V' f(\xx) g(\xx') W(\xx, \xx').
\end{equation}
Moreover, we can use the decomposition $\hat{\phi}(\mf x) \hat{\phi}(\mf x') = \tfrac{1}{2}\{\hat{\phi}(\mf x),\hat{\phi}(\mf x')\} + \tfrac{\ii}{2} [\hat{\phi}(\mf x),\hat{\phi}(\mf x')]$ 
% \begin{equation}
%     \hat{\phi}(\mf x) \hat{\phi}(\mf x') = \tfrac{1}{2}\{\hat{\phi}(\mf x),\hat{\phi}(\mf x')\} + \tfrac{\ii}{2} [\hat{\phi}(\mf x),\hat{\phi}(\mf x')]
% \end{equation}
%can be used to 
to split the Wightman function into its symmetric and antisymmetric parts:
\begin{equation}\label{eq:wightman-function-H-E}
    W(f,g) \equiv \frac{1}{2} \bigg( H(f,g) + \ii E(f,g) \bigg),
\end{equation}
where $E$ is the (anti-symmetric) causal propagator and $H$ is the (symmetric) Hadamard distribution:
\begin{equation}\label{eq:hadamard-function}
    E(f,g) \equiv - \ii \langle [ \hat \phi(f), \hat \phi(g)]\rangle, \quad H(f,g) \equiv \langle \{ \hat \phi(f), \hat \phi(g)\}\rangle.
\end{equation}
%Notice that $H$ is symmetric while $E$ is anti-symmetric 
The kernel of the causal propagator can also be written in terms of the classical Green's functions $E(\mf x,\mf x') = G_\tc{r}(\mf x,\mf x') - G_\tc{r}(\mf x',\mf x)$, evidencing the fact that the antisymmetric part of the Wightman function is state independent. This implies that all the state dependence of the two-point function is encoded in the Hadamard term $H$. %This fact will be essential when discussing vacuum entanglement in QFT.
%and fully determined by the commutation relations of the theory. $H$ then contains all of the state dependence, while $E$ contains information about causal propagation within the theory. %Moreover, the Green's functions can also be used to define the causal propagator $E = G_\tc{r} - G_\tc{a}$, which maps compactly supported functions to solutions of the equations of motion $f\mapsto \phi = Ef$.
Another relevant propagator for describing interactions in QFT is the Feynman propagator $G_{\tc{f}}$, defined as the time-ordered two-point function:
\begin{equation}
    G_{\tc{f}}(\xx,\xx') = \theta(t-t') W(\xx,\xx') + \theta(t'-t) W(\xx',\xx),
\end{equation}
as well as its smeared version $G_\tc{f}(f,g)$. Here $\theta (t)$ is the Heaviside theta function, used to implement the time-ordering.
%  When smeared against test functions, it can be written as
% \begin{equation}
%     G_{\tc{f}}(f,g) \equiv \int \dd V \dd V' f(\xx) g(\xx') G_{\tc{f}}(\xx, \xx')
% \end{equation}
 %Using that the supports of the retarded Green's function is confined to the causal future, 
Similar to Eq.~\eqref{eq:wightman-function-H-E}, the Feynman propagator can be split as
\begin{equation}\label{eq:feynman-propagator}
    G_{\tc{f}}(f,g) = \frac{1}{2} \bigg( H(f,g) + \ii \Delta(f,g)\bigg),
\end{equation}
where we defined the symmetric propagator
\begin{equation}\label{eq:symmetric-propagator}
    \Delta(f,g) = G_{\tc{r}}(f,g)+G_{\tc{r}}(g,f).
\end{equation}
Similar to the decomposition of the Wightman function, all the state dependence of $G_\tc{f}$ is encoded in the Hadamard function, while $\Delta$ encodes the symmetric exchange of information between the supports of $f$ and $g$. %\\ \\ %sorry
The different propagator decompositions discussed here are particularly relevant for distinguishing genuine vacuum correlations from causal communication within the protocol of entanglement harvesting.

\section{Entanglement Harvesting}\label{sec:iii}

% \shout{start this part with smeared, but argue that the detectors tend to be much smaller than their light-crossing time and switch to pointlike on the example}

% \begin{itemize}
%     \item Brief review of Entanglement Harvesting (talk a bout detectors, etc.~\cite{Pozas-Kerstjens:2015,FullyRelativisticEH,closedform2024})
%     \item Mention how to quantify signalling vs. genuine entanglement extracted~\cite{ericksonNew,closedform2024}.
%     \item The example with the ``standard Gaussian" pointlike.
% \end{itemize}

In this section we will discuss the protocol of entanglement harvesting~\cite{Valentini1991,Reznik1,Pozas-Kerstjens:2015}, where two non-communicating probes can extract entanglement from a quantum field. We will review the protocol in Subsection~\ref{sub:reviewEH}, define a strict quantifier for the entanglement genuinely extracted from the field in Subsection~\ref{sub:strict}, and present an explicit example of entanglement harvesting with Gaussian time profiles in subsection~\ref{example-harvesting}.

\vspace{-4mm}

%\subsection{Setup}

%In the language of local algebras of Sec.~\ref{section:basics-aqft}, ideally, each probe will interact with independent localized algebras $\mathcal{A}(\mathcal{O}_\tc{a})$ and $\mathcal{A}(\mathcal{O}_\tc{b})$ in order to extract entanglement from the field degrees of freedom localized therein. 

\subsection{Review of entanglement harvesting}\label{sub:reviewEH}

The protocol of entanglement harvesting considers two localized probes that interact with a quantum field for finite times, in an attempt to extract the entanglement between the regions that the detectors couple to. When the probes become entangled after coupling to the field, one can only claim that entanglement between the probes was extracted from the field when communication between them is negligible~\cite{ericksonNew,quantClass}. Thus, one typically considers spacelike separated interactions, which can then be used to infer entanglement between independent field degrees of freedom.  

The standard formulation of the protocol describes the probes as particle detectors, according to the Unruh-DeWitt (UDW) model~\cite{Unruh1976,DeWitt}. This model has been shown to accurately describe interactions between atoms and an external electromagnetic field~\cite{Pozas2016,Nicho1,richard}, nucleons and the neutrino fields~\cite{neutrinos,carol}, general systems with gravity~\cite{pitelli,boris}, and can model general non-relativistic quantum systems~\cite{Unruh-Wald,generalPD}, as well as localized quantum fields~\cite{Unruh-Wald,QFTPD}. Moreover, these models have a wide range of application in relativistic quantum information protocols, ranging from quantum energy teleportation~\cite{teleportation,Hotta2011,HottaDistance,nichoTeleport,borisTeleportRev} and quantum collect calling~\cite{Jonsson2,PRLHyugens2015,collectCalling}, to quantum computing~\cite{Martin-Martinez_Aasen_Kempf_2013,Martin-Martinez_Sutherland_2014,martin-martinez2015,Layden_Martin-Martinez_Kempf_2016,phil} and, more relevant to this work, entanglement harvesting~\cite{Valentini1991,Reznik1,reznik2,NickEdu2014,Pozas-Kerstjens:2015,Pozas2016,EricksonZero,HarvestingSuperposed,Cong2019,ericksonNew,carol,HarvestingAccelerationRobb,HarvestingBHLaura,HarvestingDelocalized,HarvestingQueNemLouko,adrianQuenchedHarv2025,hectorMass,boris,FullyRelativisticEH}. 

For simplicity, we will focus on the case where the probes are two-level UDW detectors undergoing comoving inertial motion in Minkowski spacetime. %These detectors provide accurate results whenever only two energy levels are relevant in the interaction. 
Each detector (labelled A and B) can therefore be thought of as a qubit traveling along a timelike trajectory $\mf z_i(t)$ parametrized by their common proper inertial time $t$ for $i\in \{\tc{A},\tc{B}\}$. Their respective free Hamiltonians generating evolution with respect to $t$ are given by

\vspace{-4mm}

\begin{equation}
    \hat{H}_i = \Omega \, \, \hat{\sigma}_i^+ \hat{\sigma}_i^-,
\end{equation}

\vspace{0.5mm}

\noindent where $\hat{\sigma}_i^\pm$ are the two-level raising and lowering operators acting on the ground and excited states $\ket{g_i}$ and $\ket{e_i}$, and $\Omega \geq 0$ is their energy gap.

The detectors interact with the field in distinct regions of spacetime defined by spacetime smearing functions $\Lambda_i(\mf x)$, which we assume to be localized in both space and time. The interaction can then be described by the local Hamiltonian densities:
\begin{equation}\label{eq:interaction-hamiltonian}\
    \hat{h}_{I, i}(\xx) = \lambda \, \big( \Lambda^+_i(\xx) \hat{\sigma}_i^+ +\Lambda^-_i(\xx) \hat{\sigma}_i^-\big)\, \hat{\phi}(\xx), 
\end{equation}
where $\lambda$ is a dimensionless coupling constant and we defined $\Lambda^\pm_i(\xx) = \Lambda_i(\xx) \, \ee^{\pm\ii\Omega  t}$. It is also typical to assume that the smearing functions factor as $\Lambda_i(\xx) \equiv \chi_i(t) F_i(\bm{x})$, where $\chi_i(t)$ are switching functions that control the temporal profile of the interactions and $F_i(\bm x)$ are the spatial smearing functions of the detectors~\cite{eduardo,us}. For instance, for the protocol of entanglement harvesting, one could consider sufficiently separated $F_\tc{a}(\bm x)$ and $F_\tc{b}(\bm x)$ with $\chi_\tc{a}(t) = \chi_\tc{b}(t)$, resulting in effectively spacelike separated detectors.

%Effectively, each detector couples to the algebra of observables in $\mathcal{O}_i$, defined as the causal hull of $\text{supp}(\Lambda_i)$. This allows the detectors to locally probe the field degrees of freedom in $\mathcal{A}(\mathcal{O}_\tc{a})$ and $\mathcal{A}(\mathcal{O}_\tc{b})$, as well as correlations between these regions. 
For a concrete example, let us assume that the initial state of each detector is $\hat{\rho}_{i, 0} = \hat{\sigma}_i^- \hat{\sigma}_i^+ = \ket{g_i} \!\bra{g_i}$ and that the field is initially in the vacuum state $\ket{0}$. 
The initial detectors-field state is then given by 

\vspace{-4mm}

\begin{equation}
    \hat{\rho}_0 = \hat{\rho}_{\tc{d}, 0} \otimes \ket{0}\!\!\bra{0}, \,\,\,\, \hat{\rho}_{\tc{d}, 0} = \hat{\rho}_{\tc{a}, 0} \otimes \hat{\rho}_{\tc{b}, 0}.
\end{equation}

\vspace{0.5mm}

This state evolves to a final state $\hat{\rho} = \hat{U}_I  \hat{\rho}_0\hat{U}_I^\dagger$, where $\hat{U}_I$ is the time evolution operator

\vspace{-4mm}

\begin{equation}\label{eq:interaction-hamiltonian}
    \hat{U}_I =  \mathcal{T}_t \,\text{exp}\Big(-\ii \!\int\! \dd V \hat{h}_I(\xx)\Big),
\end{equation}

\vspace{0.5mm}

\noindent where
\begin{equation}
    \hat{h}_I (\xx) = \hat{h}_{I, \tc{a}}(\xx) + \hat{h}_{I, \tc{b}}(\xx),
\end{equation}

\vspace{0.5mm}

\noindent and $\mathcal{T}_t\exp$ denotes the time ordered exponential with respect to the time parameter $t$ (see~\cite{us2} for details).  To leading order in  $\lambda$, the final state of the detectors, given by $\hat{\rho}_\tc{d} = \Tr_\phi ( \hat{U}_I \hat{\rho}_0 \hat{U}_I^\dagger)$, can be written (in the $\{ \ket{g_\tc{a}g_\tc{b}}, \ket{g_\tc{a}e_\tc{b}}, \ket{e_\tc{a} g_\tc{b}}, \ket{e_\tc{a}e_\tc{b}} \}$ basis) as

\vspace{-2.5mm}

\begin{equation}\label{eq:rhodmatrix}
    \hat{\rho}_\tc{d} = \begin{pmatrix}
        1 -  W_{\tc{a}\tc{a}}^{\sct{-+}} - W_{\tc{b}\tc{b}}^{\sct{-+}}  & 0 & 0 & -( G_{\tc{a}\tc{b}}^{\sct{++}})^* \\
        0 & W_{\tc{b}\tc{b}}^{\sct{-+}} & W_{\tc{a}\tc{b}}^{\sct{-+}} & 0  \\
        0& W_{\tc{b}\tc{a}}^{\sct{-+}} & W_{\tc{a}\tc{a}}^{\sct{-+}} &0 \\
       -G_{\tc{a}\tc{b}}^{\sct{++}} &0&0&0
     \end{pmatrix} + \mathcal{O}(\lambda^4),
\end{equation}

\vspace{0.5mm}

\noindent where we denote

\vspace{-4mm}

\begin{equation}
    G_{i\,j}^{\sct{++}} = \lambda^2 G_{\tc{f}}(\Lambda_{i}^+,\Lambda_{j}^+), \,\,\,\, W_{\,i\,i}^{\sct{-+}} = \lambda^2 W(\Lambda_{i}^-,\Lambda_{i}^+).
\end{equation}

\vspace{0.5mm}

\noindent Notice that this leading-order description is accurate whenever the second order terms are sufficiently small to allow for a perturbative treatment, and the higher order corrections can also be neglected. We assume $\lambda$ to be arbitrarily small through our theoretical description, and in Section~\ref{sec:limit} we will discuss cases where this perturbative treatment can accurately describe physical systems, as well as the cases where it fails. See~\cite{Pozas-Kerstjens:2015} for explicit computations and typical orders of magnitude of the terms of order $\lambda^4$. 

One can quantify the leading order entanglement between the detectors through the negativity, a faithful entanglement quantifier for a system of two qubits~\cite{VidalNegativity,plenio}. This quantifier is inspired by Peres-Horodecki's positive partial transpose (PPT) criterion~\cite{Peres_1996,Horodecki_1996}, which states that $\hat{\rho}_\tc{d}$ is separable if its partial transpose is a positive semi-definite matrix, i.e. if all its eigenvalues are positive. In other words, if at least one eigenvalue of $\ptrans{\hat{\rho}}_\tc{d}$ is negative, the state $\hat{\rho}_\tc{d}$ is guaranteed to be entangled. The negativity $\mathcal{N}$ quantifies how negative these eigenvalues are through the expression

\vspace{-4mm}

\begin{equation}
     \mathcal{N}(\hat{\rho}_\tc{d}) = \sum_{\gamma_i < 0} |\gamma_i|,
\end{equation}

\vspace{0.5mm}

\noindent where $\gamma_i$ are the eigenvalues of  $\ptrans{\hat{\rho}}_\tc{d}$. If $ \mathcal{N}(\hat{\rho}_\tc{d})=0$, the state is separable, otherwise, if $ \mathcal{N}(\hat{\rho}_\tc{d})>0$, it is entangled, and the specific value of $\mathcal{N}$ tells us by how much the PPT criterion is violated.
To leading order in $\lambda$, $\ptrans{\hat{\rho}}_\tc{d}$ has only one potentially negative eigenvalue

\vspace{-2mm}

\begin{equation}\label{eq:gamma-}
    \gamma_- = {{\sqrt{|G_{\tc{a}\tc{b}}^{\sct{++}}|^2+\left(\frac{ W_{\tc{a}\tc{a}}^{\sct{-+}}-W_{\tc{b}\tc{b}}^{\sct{-+}}}{2}\right)^2 } - \frac{W_{\tc{a}\tc{a}}^{\sct{-+}} + W_{\tc{b}\tc{b}}^{\sct{-+}}}{2}}},
\end{equation}

\vspace{1.5mm}

\noindent so that $\mathcal{N}(\hat{\rho}_\tc{d}) = \text{max} \left( 0, \gamma_- \right)  + \mathcal{O}(\lambda^4)$. In the case of identical inertial comoving detectors, which we will assume from this point on, the negativity simply reduces to

\vspace{-4mm}

\begin{equation}\label{eq:negground}
     \mathcal{N}(\hat{\rho}_\tc{d}) = \text{max} \Big( \,0\,, |G_{\tc{a}\tc{b}}^{\sct{++}}| - W^{\sct{-+}} \Big)  \\
    + \mathcal{O}(\lambda^4),
\end{equation}

\vspace{0.5mm}

\noindent where $W^{\sct{-+}} \equiv W_{\tc{a}\tc{a}}^{\sct{-+}} = W_{\tc{b}\tc{b}}^{\sct{-+}}$. 

\vspace{1mm}

Notice that, under the assumption that $\lambda$ is an infinitesimal parameter, the fourth order corrections to the negativity are negligible compared to the terms displayed in Eq.~\eqref{eq:negground}. However, given that the negativity involves a subtraction of second order terms, it is naturally smaller than the second order corrections to $\hat{\rho}_\tc{d}$ in Eq.~\eqref{eq:rhodmatrix}, and when considering finite (small) values for the coupling constant, the fourth order corrections might affect $\mathcal{N}(\hat{\rho}_\tc{d})$ non-trivially. This will be at the core of the discussion in Section~\ref{sec:limit} on the limitations of second order perturbation theory to describe entanglement harvesting protocols.

\vspace{1mm}

For the negativity to be non-zero, the non-local term $G_{\tc{a}\tc{b}}^{\sct{++}}$, encoding non-local field correlations, has to be larger than the local terms $W_{\tc{a}\tc{a}}^{\sct{-+}}$, $W_{\tc{b}\tc{b}}^{\sct{-+}}$, representing the detectors' acquired noise due to local vacuum fluctuations.
%The non-local term is what produces the entanglement: it quantifies the probability amplitude that the quantum field simultaneously excited both detectors, and it arises from the fact that both detectors are interacting with the same quantum field.
%On the other hand, the local terms represent the probability that each detector becomes excited due to its interaction with local vacuum fluctuations of the field. 
This local noise is also the excitation probability of a single detector:

\vspace{-4mm}

\begin{equation}
    \Tr\big(\hat{\rho}_\tc{d} \ket{e_i}\!\! \bra{e_i}\big) = W_\tc{\:\!i\,i}^{\sct{-+}},
\end{equation}

\vspace{0.5mm}

which also determines the leading order entanglement of each detector with the field.

\vspace{1mm}

The negativity of Eq.\eqref{eq:negground} then quantifies the total amount of entanglement acquired by the probes after their interaction with the field. Importantly, it alone is not enough to analize whether this entanglement was genuinely harvested from the field, as we will discuss in the following subsection.

\subsection{Quantifying genuine entanglement harvesting}\label{sub:strict}
\vspace{2mm}
Entanglement between the detectors alone is not enough to conclude that there was pre-existing vacuum entanglement between the regions they couple to. Indeed, there are two ways in which the detectors can become entangled, and only one of them corresponds to genuine entanglement harvesting~\cite{ericksonNew}. %\\ \\%sorry
The first possibility is that the detectors become entangled by sharing quantum information through the field, in which case the field merely acts as a mediator, effectively producing a direct retarded coupling between the probes~\cite{quantClass}. %\\%sorry
The other possibility is when the interaction regions are effectively causally disconnected, avoiding field-mediated communication. In this case, the entanglement between the detectors arises solely from pre-existing entanglement in the field. This is the case where the protocol of entanglement harvesting takes place.

%Unfortunately, compact support is not realizable in practice, 
Although, ideally, one would always consider spacelike separated interaction regions in entanglement harvesting, these idealized regions are defined by compactly supported spacetime smearing functions that can be hard to implement in practice. Therefore, in most cases, there will be some small level of communication between the detectors. This raises the question of how to quantify signalling and genuinely harvested entanglement. One way of addressing this questions involves separating the smeared Feynman propagator as in Eq.~\eqref{eq:feynman-propagator},

\vspace{-4mm}

\begin{equation}\label{eq:GABHABDELTAAB}
    G_{\tc{ab}}^{\sct{++}} = \tfrac{1}{2}H_{\tc{Ab}}^{\sct{++}} + \tfrac{\ii}{2}\Delta_{\tc{ab}}^{\sct{++}},
\end{equation}

\vspace{0.5mm}

\noindent where we defined $H_{\tc{ab}}^{\sct{++}} = \lambda^2 H (\Lambda_{\tc{a}}^+,\Lambda_{\tc{b}}^+ )$  and $\Delta_{\tc{ab}}^{\sct{++}} = \lambda^2 \Delta (\Lambda_{\tc{a}}^+,\Lambda_{\tc{b}}^+)$. In this case, the Hadamard function $H_{\tc{ab}}^{\sct{++}}$ encodes the state-dependent correlations between the regions, while $\Delta_{\tc{ab}}^{\sct{++}}$ encodes the symmetric exchange of information between the detectors\footnote{For instance, two spacelike separated interaction regions imply $\Delta_{\tc{ab}}^{\sct{++}} = 0$.}. Therefore, if one wants to ensure that the detectors effectively cannot communicate, one must consider situations where $\Delta_{\tc{ab}}^{\sct{++}}$ is negligible. 

Concretely, the condition for genuine harvesting was formulated as $\frac{1}{2} |\Delta_{\tc{ab}}^{\sct{++}}| \ll \mathcal{N}(\hat{\rho}_\tc{d})$ in~\cite{closedform2024}. 
More generally, using the fact that the Hadamard propagator encodes the state dependence of the field, as well as the genuine vacuum effects in QFT, we can introduce a quantifier for genuine entanglement extraction.% \\%sorry
We define the \textit{signalling-to-entanglement ratio} (SER) as the ratio between the negativity when neglecting all local smeared propagators and the non-local state-dependent terms, denoted by ${\mathcal{N}^\circ}(\hat{\rho}_\tc{d})$\footnote{Explicitly, in the expression for $\mathcal{N}(\hat{\rho}_\tc{d})$, terms that involve two-point functions smeared against the same detector (such as  $W_{\,i\,i}^{\sct{-+}}$ in Eq.~\eqref{eq:rhodmatrix}) and the state dependent terms (involving field anti-commutators) smeared against functions associated to the different detectors A and B ($H_{\tc{ab}}^{++}$ in Eq.~\eqref{eq:GABHABDELTAAB}).}, and the total negativity:
\begin{equation}\label{eq:SNR}
     \Theta(\hat{\rho}_\tc{d}) = \begin{cases} \displaystyle{\frac{\mathcal{N}^\circ(\hat{\rho}_\tc{d})}{\mathcal{N}(\hat{\rho}_\tc{d})}}, & \text{if} \,\,\,\mathcal{N}(\hat{\rho}_\tc{d}) \neq 0 ,\\
     \,\,\,0\,, &\text{otherwise.}
     \end{cases}
\end{equation}

Notice that in Eq.~\eqref{eq:gamma-}, when setting the local terms $W_{\,i\,i}^{\sct{-+}}$ and the non-local anti-commutator term $H_{\tc{ab}}^{\sct{++}}$ to zero, we find $\mathcal{N}^\circ(\hat{\rho}_\tc{d})= \tfrac{1}{2}|\Delta_{\tc{ab}}^{\sct{++}}|$. This ensures that when the detectors start in the ground state, the condition $\Theta(\hat{\rho}_\tc{d})\ll 1$ is equivalent to \mbox{$\frac{1}{2} |\Delta_{\tc{ab}}^{\sct{++}}| \ll \mathcal{N}(\hat{\rho}_\tc{d})$}. Overall, $\Theta \sim 1$ implies that most of the entanglement acquired by the detectors is due to communication, while $\Theta \sim 0$ characterizes genuine entanglement harvesting.

It is important to compare the SER defined above with the other standard quantifier for genuine harvested entanglement, the communication-mediated entanglement estimator (CMEE), introduced in~\cite{ericksonNew}. In essence, the SER is a more strict estimator of genuine entanglement harvesting than the CMEE. Indeed, the CMEE only sets the non-local Hadamard terms ($H(\Lambda_\tc{a}^\pm,\Lambda_\tc{b}^{\pm})$) to zero, while keeping the local noise terms ($H(\Lambda_i^\pm,\Lambda_i^{\pm})$). For instance, when the detectors start their interaction in the ground state, this would be equivalent to keeping the term $-(W_\tc{aa}^{\sct{-+}} + W_\tc{bb}^{\sct{-+}})$ in Eq.~\eqref{eq:gamma-}, decreasing the overall value of the numerator in Eq.~\eqref{eq:SNR}. Moreover, the definition used in~\cite{ericksonNew}, relies on the specific form of the negativity in Eq.~\eqref{eq:negground}, which is only valid when the detectors' local noise is identical, and they are prepared in the ground state. On the other hand, the definition of the signalling-to-entanglement ratio $\Theta(\hat{\rho}_\tc{d})$ is valid for general initial states of the detectors, general interaction regions  and general energy gaps. A detailed comparison between the SER and CMEE estimators in different cases will be presented in~\cite{SERCMEE}.

%$ \mathcal{N}(\hat{\rho}_\tc{d}) > 0$ and .
%In other words, we want the negativity to be non-zero while making sure that this stems from non-local correlation, and not the signalling terms ($\Delta$).
%\subsection{Correlations}
%The main objective of this work is to characterize the genuine entanglement present in the vacuum between two causally disconnected regions of spacetime.
%In the protocol of entanglement harvesting, one picks interaction regions $\Lambda_i(\mf x)$ 
%places each UDW detector in a region and let them interact locally with the quantum field vacuum for a finite period of time (this is of extreme importance, since \tbb{vacuum fluctuations are asymptotically zero}, and it is enforced by means of the switching function $\chi(t)$).
\subsection{Canonical example}\label{example-harvesting}

We will now discuss a concrete example of entanglement harvesting from the Minkowski vacuum using Gaussian spacetime smearing functions. We further assume that the detectors' interactions are switched on and off simultaneously in their comoving frame ($\chi_\tc{a}(t) = \chi_\tc{b}(t) = \chi(t)$), and that their shape is identical and separated by $\bm L$ ($F_\tc{b}(\bm x) = F_\tc{a}(\bm x - \bm L)$). Explicitly,
% %  We also assume that the interaction regions factor as
% % \begin{equation}\label{eq:smearing-function}
% %     \Lambda_i(\xx) \equiv \chi_i(t) F_i(\bm{x}),
% % \end{equation}
% % where $F_i(\bm{x})$ defines the spatial smearing  of each detector and $\chi_i(t)$ are the switching functions, deter   mining the interaction profile in time. Equation~\ref{eq:smearing-function} corresponds to the assumption of rigid detectors~\cite{eduardo,us}. 
% For an explicit example, we consider the spacetime smearing functions
\begin{align}
    \Lambda_\tc{a}(\xx) &= \dfrac{1}{(2 \pi \sigma^2)^\frac{3}{2}} \ee^{-\frac{t^2}{2T_0^2}}  \, \ee^{-\frac{|\bm{x}|^2}{2\sigma^2}}, \\
    \Lambda_\tc{b}(\xx) &= \dfrac{1}{(2 \pi \sigma^2)^\frac{3}{2}} \ee^{-\frac{t^2}{2T_0^2}}  \, \ee^{-\frac{|\bm{x}-\bm{L}|^2}{2\sigma^2}},
\end{align}
where the parameters $\sigma$ and $T_0$ control the spatial width and time duration of the interaction. Moreover, assuming that the the detectors' light-crossing time is much smaller than the interaction time $T_0$, we have $\sigma\ll T_0$. In this regime, one can effectively consider the pointlike limit $\sigma\to 0$, where $\Lambda_\tc{a}(\mf x) \to \ee^{- t^2/2T_0^2} \delta^{(3)}(\bm x)$ and \mbox{$\Lambda_\tc{b}(\mf x) \to \ee^{- t^2/2T_0^2} \delta^{(3)}(\bm x - \bm L)$}. For these parameter choices, closed-form expressions for $G_\tc{ab}^{\sct{++}}$ and $W_\tc{aa}^{\sct{-+}} = W_\tc{bb}^{\sct{-+}}$ were found in~\cite{Ng1,closedform2024}.
% :
% \begin{equation}
%     G_{\tc{a}\tc{b}}^{\sct{++}} = \frac{\lambda^2 T}{4 L \sqrt{\pi}} \ee^{-\frac{L^2}{4 T^2}-\Omega^2 T^2} \left(\text{erfi} \left( \frac{L}{2 T} \right) -\ii \right),
% \end{equation}
% \begin{equation}
%     W_{\tc{a}\tc{a}}^{\sct{-+}}=W_{\tc{b}\tc{b}}^{\sct{-+}} = \frac{\lambda^2\ee^{-\Omega^2T^2}}{4 \pi} \left(1-\sqrt{\pi} \,\Omega T \ee^{\Omega^2T^2}\text{erfc}(\Omega T)\right),
% \end{equation}
% and hence, the negativity can be expressed as:

% \begin{widetext}
% \begin{equation}
% \mathcal{N}(\hat{\rho}_{\tc{a}\tc{b}}) = \text{max} \left\{0, \frac{\lambda^2 \ee^{- \Omega^2 T^2}}{4\pi}\left(\sqrt{\pi} \ee^{-\frac{L^2}{4T^2}}\frac{T}{L}\sqrt{1 + \text{erfi} \left( \frac{L}{2 T} \right)^2}\right) + \sqrt{\pi} \,\Omega T \ee^{\Omega^2T^2}\text{erfc}(\Omega T) - 1\right\} + \mathcal{O}(\lambda^4).
% \end{equation}
% \end{widetext}
% Moreover, the signalling estimator can also be found in closed form:
% \begin{equation}
%  \tfrac{1}{2}\Delta^{\sct{++}}_{\tc{ab}} = - \frac{\lambda^2 T}{4 L \sqrt{\pi}}    \ee^{-\frac{L^2}{4 T^2}-\Omega^2 T^2} .
% \end{equation}

\begin{figure}[h!]
    \centering
    \includegraphics[width = \linewidth]{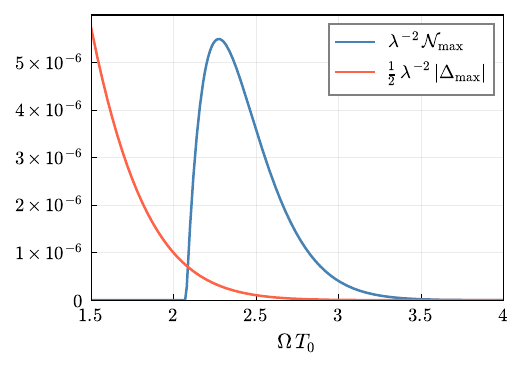}
    \vspace*{-10mm}
    \caption{Negativity and signalling contribution for two identical UDW point-like detectors, interacting with a real massless scalar field in Gaussian spacetime regions separated by $|\bm L|=5 T_0$.
    }
    \label{figure:example-gauss}
\end{figure}

In Fig.~\ref{figure:example-gauss}, we plot the leading order negativity and signalling of the two detectors as a function of $\Omega T_0$ when $|\bm L| = 5 T_0$. Notice that the values displayed in the plot are all scaled by $\lambda^{-2}$, as the leading order corrections are proportional to $\lambda^2$. In this example, we can see values of $\Omega$ such that the condition for genuine harvesting $\Theta(\hat{\rho}_\tc{d}) \ll 1$ is approximately satisfied, as we have $\frac{1}{2} |\Delta_{\tc{ab}}^{\sct{++}}| \ll \mathcal{N}(\hat{\rho}_\tc{d})$. For instance, at the peak, $\Omega T_0 \approx 2.3$, signalling amounts to approximately $5\%$ of the negativity. %\\ \\ %sorry
This means that if a setup of this sort is implemented in practice, one would be able to infer vacuum entanglement between the regions the detectors couple to. However, it would be natural to question whether the small order of magnitude of $~10^{-6} \lambda^2$ would allow any phenomena of this type to be observed in practice. In the remaining of this manuscript we will find optimal switching functions for the protocol, which will improve this result by several orders of magnitude. %While for a the given switching function $\chi(t)$, it is possible to optimize the adimensional parameters $(\Omega T, L/T)$ that define this setup, ideally, one would also optimize over all possible choices of $\chi(t)$. %However, such an optimization would be numerically very costly, given the lack of closed-form expressions for the setup and the oscillatory nature of the integrals defining $G_{\tc{a}\tc{b}}^{\sct{++}}$.

\newcommand{\lone}{\vspace{-4mm}}

\newcommand{\ltwo}{\vspace{0.5mm}}

\section{Computing Entanglement for Arbitrary Time Profiles}\label{sec:iv}

In this section we will describe a method through which one can obtain closed-form expressions for the negativity and signalling estimator in entanglement harvesting for arbitrary temporal profiles in 3+1 Minkowski spacetime. Our method expands a given switching function in a basis of $L^2(\mathbb{R})$ and computes $G_\tc{ab}^{\sct{++}}$, $W_\tc{aa}^{\sct{-+}}$, $W_\tc{bb}^{\sct{-+}}$ and $\Delta_\tc{ab}^{\sct{++}}$ through generating functions.

Throughout this section we will consider spacetime smearing functions of the form

\lone

\begin{align}\label{eq:arbitrary-smearing-function}
   & \Lambda_{\tc{a}}(\xx) \equiv \chi(t) F(\bm x), \quad \Lambda_{\tc{b}}(\xx) \equiv \chi(t) F(\bm x - \bm L),
\end{align}

\ltwo

\noindent where we assume that $\chi$ and $F$ are rapidly decreasing $L^2$ functions. We then consider a basis of real smooth functions $\{\chi_n\}$ of $L^2(\mathbb{R})$, so that we can expand $\chi$ as

\lone

\begin{equation}\label{eq:chi}
   \chi(t) = \smash{\sum_{n=0}^{\infty}} c_n \chi_n(t),\phantom{\Big|} 
\end{equation}

\ltwo

\noindent where $c_n = \langle \chi_n,\chi\rangle_{L^2}$ %\trr{and $T_0$ is a parameter with units of time that ensures that the components $c_n$ are dimensionless}
. Using this expression in Eq.~\eqref{eq:arbitrary-smearing-function}, we have

\lone

\begin{align}\label{eq:lambda-ring}
     \Lambda_\tc{a}(\xx) \equiv \smash{\sum_{n=0}^\infty} c_n {\Lambda}_{\tc{a},n}(\xx),\phantom{\Big|}  
\end{align}

\ltwo

\noindent where ${\Lambda}_{\tc{a},n}(\mf x) = \chi_n(t)F(\bm x)$, and similarly for $\Lambda_\tc{b}(\mf x) = \chi_n(t)F(\bm x - \bm L)$. 

We can now use this basis expansion to compute the relevant propagators in the protocol of entanglement harvesting. Indeed, define

\lone

\begin{equation}\label{eq:lambda-ring-pm}
    {\Lambda}_{i,n}^\pm(\mf x) = \ee^{\pm \ii \Omega t}{\Lambda}_{i,n}(\mf x),
\end{equation}

\ltwo

\noindent and let $P$ be a general propagator (e.g. $G_\tc{f}$, $W$, $H$, or $\Delta$). We can then write

\lone

\begin{align}\label{eq:coefficients-polynomial-development}
    P(\Lambda_{\tc{a}}^\pm,\Lambda_{\tc{b}}^\pm)=\smash{\sum_{n=0}^\infty \sum_{m=0}^\infty} c_n  c_m \, \underbrace{P ({\Lambda}^\pm_{\tc{a}, n},{\Lambda}^\pm_{\tc{b}, m})}_{\displaystyle{P_{nm}}}.  \phantom{\Big|}
\end{align}

\ltwo

\noindent
For convenience, we now truncate the basis $\{\chi_n\}$ for $n \leq N$, obtaining a partial description of our switching functions that can be made arbitrarily precise in the limit $N\to\infty$. This allows us to write the smeared propagator in terms of matrix products:

\lone

\begin{equation}\label{eq:matrix-product}
    P(\Lambda_{\tc{a}}^\pm,\Lambda_{\tc{b}}^\pm) \approx \trans{\bm c} \mf P \,\bm c,
\end{equation}

\ltwo

\noindent
where $\bm c$ is the $N$ dimensional vector with components $c_n$ and $\mf P$ is the $N\times N$ matrix with components $P_{nm}$. 
Equation~\eqref{eq:matrix-product} effectively replaces the computation of non-trivial multidimensional integrals by simple matrix products. However, it requires one to have precomputed all the coefficients $c_n$, as well as the more challenging matrix elements $P_{nm}$. Although the $c_n$ are given by one dimensional integrals and can usually be computed efficiently, this is not always the case for $P_{nm}$. However, by making specific choices for both $F(\bm x)$ and the basis $\chi_n(t)$, it is possible to find closed-form expressions for the components of the matrix $\mf P$. Indeed, by picking

\lone

\begin{equation}\label{eq:choice-F}
    F(\bm x) = \frac{\ee^{- \frac{|\bm x|^2}{2\sigma^2}}}{(2\pi \sigma^2)^{3/2}},
\end{equation}

\ltwo

\noindent

and the Hermite basis $\{h_n(t,T)\}$,  given by
\begin{equation}\label{eq:h(t)}
    h_n(t,T) =\frac{\pi^{-1/4}}{\sqrt{2^n n!\, T}} H_n(t/T) \, \ee^{-\frac{t^2}{2T^2}},
\end{equation}

\ltwo

\noindent
we can find closed-form expressions for $P_{nm}$ in terms of a generating function $P(\alpha,\beta)$. In fact, in Appendix~\ref{app:matrix}, we find closed-form expressions for the generator

\lone

\begin{equation}
    P(\alpha,\beta) = P (\Lambda^\pm_{\alpha}, \Lambda^\pm_{\beta})
\end{equation}

\ltwo

\noindent
in the pointlike limit $\sigma\to 0$, where

\lone

\begin{align}
    \Lambda^\pm_{\alpha}(\mf x) &= \ee^{\alpha t}\ee^{\pm \ii \Omega t} \ee^{- \frac{t^2}{2T^2}} \delta^{(3)}(\bm x),\\ \nonumber\\
    \Lambda^\pm_{\beta}(\mf x') &= \ee^{\beta t'}\ee^{\pm \ii \Omega t'} \ee^{- \frac{t'{}^2}{2T^2}} \delta^{(3)}(\bm x' - \bm L).
\end{align}

\ltwo

\noindent
Noticing that $t^n = \left.\frac{\dd}{\dd a}\ee^{at}\right|_{a=0}$, the fact that

\lone

\begin{equation}\label{eq:feynman-trick}
     H_n (t/T) = \left.\bigg[ H_n \bigg( \frac{1}{T}\dfrac{\dd}{\dd \alpha} \bigg) \ee^{\alpha t}\bigg]\right|_{\alpha = 0}
\end{equation}

\ltwo

\noindent
then allows us to write (see Appendix~\ref{app:matrix} for details)

\lone

\begin{align}\label{eq:Pnm}
    P_{nm} &= \Bigg[\frac{\pi^{-1/2}}{\sqrt{2^{n+m} n! \, m! } \,T} \\
    &\quad \quad \times H_n \bigg( \frac{1}{T}\dfrac{\dd}{\dd \alpha} \bigg)H_m \bigg( \frac{1}{T}\dfrac{\dd}{\dd \beta} \bigg) P(\alpha,\beta)\Bigg]\Bigg\lvert_{\alpha=\beta=0}.\nonumber
\end{align}

\ltwo

\noindent
One can also evaluate the smeared local propagators $P(\Lambda_\tc{a}^\pm,\Lambda_\tc{a}^\pm)$ and $P(\Lambda_\tc{b}^\pm,\Lambda_\tc{b}^\pm)$ by taking $\bm L = 0$. %\\ \\ %sorry 
With these expressions, computing the matrices is just a matter of taking the different derivatives of the generator $P(\alpha,\beta)$ with respect to the parameters $\alpha$ and $\beta$%\footnote{Although this can become computationally expensive, these derivatives only need to be computed once, and can then be reused for any switching functions.}
.  Once the corresponding matrices for the Feynman propagator ($\mf G$) and Wightman function ($\mf W$) are computed (see Appendix~\ref{app:leibniz} for explicit expressions), the negativity can be expressed in terms of simple matrix products:
%\begin{equation}\label{eq:concurrence}
%    \mathcal{N}(\bm a, \bm b) = |\trans{\bm{a}}\, G \bm b| - \sqrt{(\trans{\bm{a}} W \bm a) %(\trans{\bm{b}} W \bm b)},
%\end{equation} where $G$ and $W$ are defined in Eqs.~\eqref{eq:Gnm}-\eqref{eq:Wnm}.

\lone

\begin{equation}\label{eq:concurrence}
    \mathcal{N}(\bm c) = |\trans{\bm{c}} \mf G \,\bm c| - \trans{\bm{c}} \mf W \,\bm c,
\end{equation}

\ltwo

\noindent
whenever it is non-zero. Notice that the matrix $\mf G$ is symmetric with complex components and $\mf W$ is Hermitian, due to the facts that $G_\tc{f}(f,g) = G_\tc{f}(g,f)$ and $W(f,g) = W(g^*,f^*)^*$. Analogously, the matrices corresponding to the Hadamard and symmetric propagators $\mf H$ and $\mf \Delta$ are both symmetric with complex components. %\\ \\ %sorry
Using this technique, we developed an online tool that expands an arbitrary switching function and computes the relevant smeared propagators within fractions of seconds~\cite{EHEapp}, allowing the user to export the generated data. 

%\shout{tease spatial smearing?} \trr{This method can be applied to the spatial smearing function. In App.~\ref{}, we also give closed-form expressions for the propagators using a fixed Gaussian switching function and arbitrary spatial shape. As it turns out, in most scenarios the spatial shape of the detectors is fixed (\cite{perche2026bosepolaronsrelativisticunruhdewitt}) and the switching function is the part that contributes the most. Therefore, we will center our optimization on the switching function.}

\section{Optimizing Entanglement Harvesting}\label{sec:optimization}

In this section, we will optimize the protocol of entanglement harvesting by finding switching functions that maximize the leading order extracted negativity when pointlike detectors are at different levels of causal separation. However, we note that this problem is not yet well posed in the context of Sections~\ref{sec:iii} and~\ref{sec:iv}, as we have an arbitrary coupling constant, and scaling the switching function could increase the negativity arbitrarily. Therefore, we must first find a criterion to compare the negativity that is invariant under a rescaling of the total coupling constant $\lambda \chi(t)$. Ideally, we want $\text{max}_t|\chi(t)|$ to be approximately the same for all switching functions involved. If there exists a fixed interval where $\chi(t)$ is supported (or effectively supported), then this condition can be achieved by normalizing our switching functions with respect to any norm. For convenience, we will choose the $L^2$ norm, where
\begin{equation}
    \chi(t) = \smash{\sum_{n=0}^N} c_n \chi_n(t)\quad \Longrightarrow \quad||\chi||^2 = \trans{\bm c}\bm c.\phantom{\Big|}
\end{equation}
To ensure a fair comparison with the canonical example $\chi_\text{can}(t) = \ee^{-t^2/2T_0^2}$ (\ref{example-harvesting}), we will impose $||\chi|| = ||\chi_\text{can}||$, so that the rescaled negativity can be written as
\begin{equation}\label{eq:N-tilde}
    \widetilde{\mathcal{N}}(\bm c) = \sqrt{\pi} \, \frac{|\trans{\bm{c}} \mf G \,\bm c| - \trans{\bm{c}} \mf W\, \bm c}{\trans{\bm c}\bm c},
\end{equation}
Throughout this section, we will maximize $\widetilde{\mathcal{N}}(\bm c)$, first when the detectors are spacelike separated, second when there is a small but non-negligible amount of signalling between them, and third when the detectors are causally connected, but the signalling between them is exactly zero.

We highlight once again that the optimization presented in this section takes place within the regime of second order perturbation theory, so that, in principle, $\lambda$ should be thought of as an infinitesimal parameter that ensures that higher order corrections are negligible. 

Overall, the improvements presented here should be understood relative to the canonical example~(\ref{example-harvesting}). Explicitly, the negativity increases we find can be understood as physical for setups that can be described within second order perturbation theory after the optimization is considered. In Section~\ref{sec:limit} we will discuss in more detail the validity of second order perturbation theory in entanglement harvesting in light of the results of this section.

%\vspace{-6mm}

\subsection{Spacelike separated regions}\label{sec:spacelike}

In this subsection, we will find the optimal switching function for entanglement harvesting with spacelike separated regions. While, in principle, the setup presented in Section~\ref{sec:iv} does not allow one to have genuine spacelike separation due to the tails of Hermite functions, we can approximate compact support arbitrarily well by tuning the scaling parameter $T$ in~\eqref{eq:h(t)}. Indeed, any compactly supported function $\chi\in L^2([-a,a])$ can be expanded as
\begin{equation}\label{eq:chiNtoInfty}
    \chi(t) = \lim_{N\to\infty} \smash{\sum_{n=0}^N} \langle h_{n,N}(t),\chi\rangle h_{n,N}\phantom{\Big|}
\end{equation}
where $h_{n,N}(t) = h_n (t,T_N)$ and the sequence $\{T_N\}$ behaves asymptotically as $T_N \sim a/\sqrt{2N+1}$. Thus, any choice of the form 
\begin{equation}\label{eq:TN}
    \smash{T_N = \frac{L/2}{\sqrt{2N+1}}f(N),} \phantom{\Big|}
\end{equation}
where $f(N)\to 1$, would ensure that for sufficiently large $N$, detectors separated by a distance $L$ with switching functions $\chi$ of the form of Eq.~\eqref{eq:chiNtoInfty} are effectively causally disconnected. Throughout this subsection we will use
\begin{equation}\label{eq:f(N)}
    f(N) = \frac{\sqrt{2N+1}}{2 + \sqrt{2N+1}},
\end{equation}
as this choice has provided us with a consistent scaling even for relatively small values of $N$. We can quantify the maximal tails of any basis function $h_{n,N}(t)$ outside of the support region $[-L/2,L/2]$ for a given $N$ by computing 
\begin{equation}
    I(N) = \max_{n\leq N}\int_{\Gamma}\dd t \, |h_{n,N}(t)|^2 ,
\end{equation}
where $\Gamma = \mathbb{R}\setminus[-L/2,\,L/2]$. The maximum is achieved at $n=N$, and we find that $I(N)$ decreases exponentially in $N$, with $I(0) \sim 10^{-6}$, decaying to $10^{-13}$ at $N=200$. Alternatively, we can estimate the signalling between the coupling regions by computing 
\begin{equation}
    \Delta_{NN}/{ H_{NN}} \equiv {\Delta}(\Lambda_{\tc{a},N},\Lambda_{\tc{b},N})/{ H(\Lambda_{\tc{a},N},\Lambda_{\tc{b},N})},
\end{equation}
representing the ratio between signalling and field correlations between the interaction regions defined by $h_{n,N}(t)$. The ratio decreases exponentially in $N$, with $\Delta_{NN}/H_{NN} \sim 10^{-5}$ for $N = 0$ and $10^{-45}$ for $N = 200$. These results allow us to safely consider that the corresponding interaction regions are spacelike separated for large enough values of $N$.% In Fig.~\ref{figure:tails2} we can see that both quantities become negligible for large $N$, with $\ \Delta_{NN}/ H_{NN}$ decreasing exponentially.
%\vspace{-0.3mm}

% \begin{figure}[h!]
% \centering
% \includegraphics[width=\linewidth]{Figures/dual_axis_plot.pdf}
% \caption{Compact support and causal communication estimators $I(N)$ and $\Delta_{NN}/H_{NN}$ as a function of $N$.}
% \label{figure:tails2}
% \end{figure} 
Let us now recall the decomposition of the Feynman propagator in Eq.~\eqref{eq:feynman-propagator}, which we can write in terms of the matrices $\mf G, \mf H$, and  $\mathsf{\Delta}$ as
\begin{equation}
    \mf G = \tfrac{1}{2}\mf H + \tfrac{\ii}{2}\mf \Delta.   
\end{equation}
With the choice of $T_N$ as in Eqs.~\eqref{eq:TN} and~\eqref{eq:f(N)}, we can now effectively assume $\mathsf{\Delta}$ to be zero, yielding $\mf G \approx \frac{1}{2} \mf H$, and the negativity in Eq.~\eqref{eq:N-tilde} can be approximated by
\begin{equation}
    \widetilde{\mathcal{N}}^+(\bm c) \equiv \sqrt{\pi} \, \frac{\frac{1}{2}|\trans{\bm{c}} \mf H \,\bm c| - \trans{\bm{c}} \mf W \,\bm c}{\trans{\bm c}\bm c}.
\end{equation}
This quantity was first defined in~\cite{ericksonNew} as the \textit{harvested negativity}, as it considers that all correlations stem from the state dependent part of the non-local Feynman propagator. When maximizing the negativity in the protocol of entanglement harvesting, this is also the term that must be optimized to maximize genuine entanglement extraction from the field\footnote{Notice that if one attempted to maximize $\widetilde{\mathcal{N}}(\bm c)$, one would also indirectly be maximizing the communication between the detectors, encoded in $\trans{\bm c} \mf \Delta \bm c$.}.
We can now simply maximize $\widetilde{\mathcal{N}}^+$, leading to the optimization problem
\begin{equation}\label{eq:N+}
    \widetilde{\mathcal{N}}_\text{max}^+ \equiv \max_{\bm c \in \mathbb{R}^N}\widetilde{\mathcal{N}}^+(\bm c) 
    =\max_{\trans{\bm c}\bm c=1 } \sqrt{\pi} \bigg( \frac{1}{2}|\trans{\bm{c}} \mf H \,\bm c| - \trans{\bm{c}} \mf W \,\bm c\bigg).
\end{equation}
Although the expression above resembles an operator norm, the absolute value in $|\trans{\bm c} \mf H \,\bm c|$ requires more care. 
Using $|z| = \max_{\theta}(\Re(z)\cos\theta +\Im(z)\sin\theta)$, we can write 
\begin{equation}\label{eq:M-theta}
 \smash{\widetilde{\mathcal{N}}_\text{max}^+= \max_{\theta} \max_{\trans{\bm c}\bm c=1 }\mf M^+(\theta),}\phantom{\big|}
\end{equation}
where 
\begin{equation}\label{eq:matrix-M}
    \mf M^+(\theta) =\tfrac{1}{2} \big( \Re(\mf H) \cos\theta + \Im(\mf H)\sin\theta\big) - \Re(\mf W),
\end{equation}
where we note that for real vectors $\trans{\bm c} \mf W \bm c = \trans{\bm c}\! \Re(\mf W) \bm c$. Thus, the problem reduces to finding the largest eigenvalue of the real symmetric matrix $\mf M^+(\theta)$ for each $\theta\in[0,2\pi)$. We will now study the behavior of the negativity and signalling within the subspaces
\begin{equation}
    V_N = \smash{\bigg\{ \chi\in L^2(\mathbb{R}) : \chi(t) = \sum_{n=0}^N c_n h_{n,N}(t)\bigg\}.}\phantom{\Big|}
\end{equation}
\vspace*{-5mm}
\begin{figure}[h!]
\centering\includegraphics[width=8.6cm]{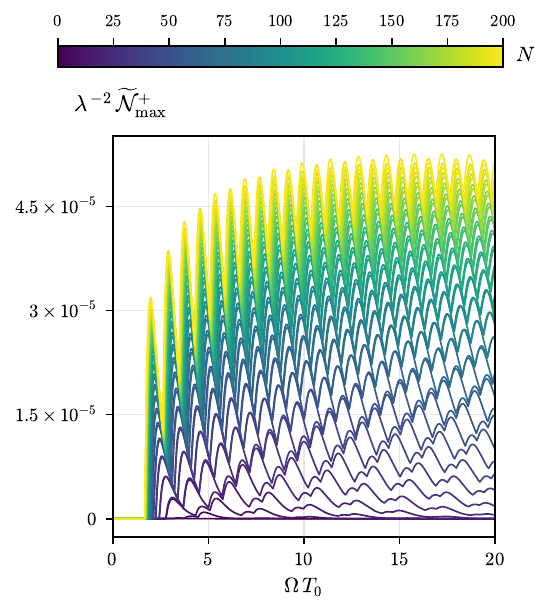}
    \vspace*{-10mm}
    \caption{Maximum negativity for detectors separated by a distance of $L = 5T_0$ within the subspace spanned by $\{h_{n,N}\}_N$ as a function of $\Omega$ for $N \in \{0, 200 \}$.}
\label{figure:neg-evolution-N}
\vspace{-1mm}
\end{figure}
\vspace{-3.4mm}

In Fig.~\ref{figure:neg-evolution-N}, we plot the maximum value of $\widetilde{\mathcal{N}}^+$ for switching functions within the subspaces $V_N$ as a function of $\Omega T_0$ when the detectors are separated by a distance of $|\bm L| = 5 T_0$. This choice allows us to directly compare our results with the canonical example discussed in Subsection~\ref{example-harvesting}. As the dimension $N$ of the subspaces $V_N$ increases, so does the height of the negativity peaks, as well as the value of $\Omega T_0$ for which the maximum negativity is achieved. We also see that for larger values of $N$ more  negativity peaks start to appear\footnote{For comparison, notice that the $N=0$ case corresponds to a rescaled version of Fig.~\ref{figure:example-gauss}, as $T_{N=0} = 5T_0/6$. The shorter interaction time also results in smaller negativity compared to the canonical example in Subsection~\ref{example-harvesting}.}. Notice that, even with the effective spacelike separation between the detectors implemented here, optimizing the negativity can yield values of the order of $10^{-5} \lambda^2$ for $N=200$, one order of magnitude larger than the canonical Gaussian example of Section~\ref{example-harvesting}. Also notice that, in accordance to the no-go theorems of~\cite{HarvestingQueNemLouko,nogo}, we see that no entanglement can be harvested at spacelike separation with gapless detectors ($\Omega = 0$).
%For each $N$, we see peaks for different Omegas, the peaks go up, no entanglement for Omega = 0 (cite~\cite{HarvestingQueNemLouko,nogo}), the value of Omega for which the actual maximum is acquired goes up for large N (for small N, the negativity is approximately decreasing with Omega, but then it starts to take longer in Omega to decrease).
\vspace*{-3mm}
\begin{figure}[h!]
\centering
\includegraphics[width=\linewidth]{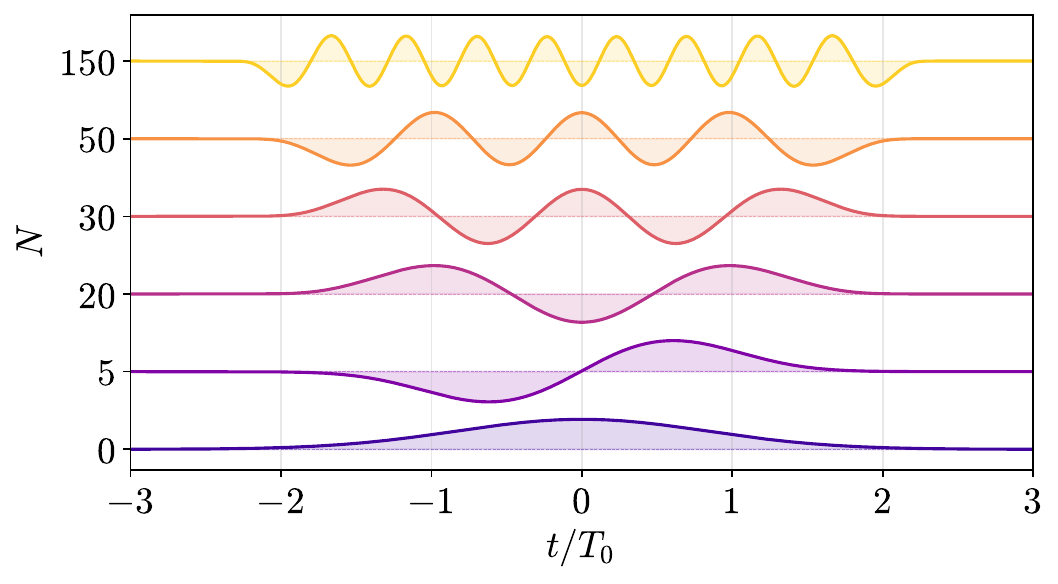}
\vspace*{-9mm}
\caption{Plot of the maximizing functions of $\widetilde{\mathcal{N}}$ within the subspace spanned by $\{h_{n,N}\}_N$ for detectors separated by a distance of $L = 5T_0$.}
\label{figure:oscillations}
\end{figure} 
\vspace*{-2mm}

In Fig.~\ref{figure:oscillations}, we show the shape of the optimal switching functions for different values of $N$. We note that, as $N$ increases, the functions become more oscillatory. The combination of oscillations in the switching function with the energy gap $\Omega$ determine the field frequencies that contribute the most to the detectors' final state. It is then no surprise that, in Fig~\ref{figure:neg-evolution-N}, higher values of $\Omega$ become relevant only for sufficiently large $N$, when the oscillations start taking place. Moreover, our results suggest that for large enough $N$, the optimal switching functions become superoscillatory, similar to the example given in~\cite{Reznik1}. %\shout{Mention the resonance with the peaks}
\begin{figure}[h!]
\centering
\includegraphics[width=\linewidth]{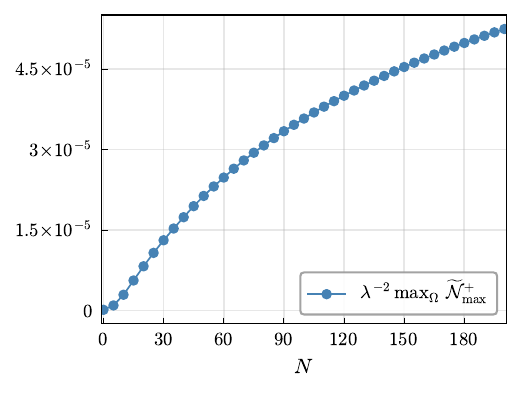}
\vspace*{-7mm}
\caption{Maximum negativity $\widetilde{\mathcal{N}}^+$  for detectors separated by a distance of $L = 5T_0$ on the subspace spanned by $\{h_{n,N}\}_N$ as a function of $N$.}
\label{figure:neg-evolution-N-real}
\end{figure} 

In Fig.~\ref{figure:neg-evolution-N-real}, we plot the maximal negativity $\widetilde{\mathcal N}_{\text{max}}^+(\Omega)$ also maximized over $\Omega$ for each $N$, corresponding to the largest value of Fig.~\ref{figure:neg-evolution-N} for each $N$. Notice that the plot does not asymptote with $N$. Indeed, in~\cite{Reznik1} the authors showed that a specific choice of superoscillating functions could result in negativities that increase as $\mathcal{N}^+ \sim \lambda^2 \sqrt{N_{\text{osc}}}$, with $N_{\text{osc}}$ being the number of oscillations of the function. Given that our basis expansion spans all compactly supported switching functions in the limit $N\to \infty$, we can conclude that Fig.~\ref{figure:neg-evolution-N-real} also diverges to infinity in this limit.
%\vspace{-6.5mm}

\begin{figure}[h!]
\centering
   \includegraphics[width=\linewidth]{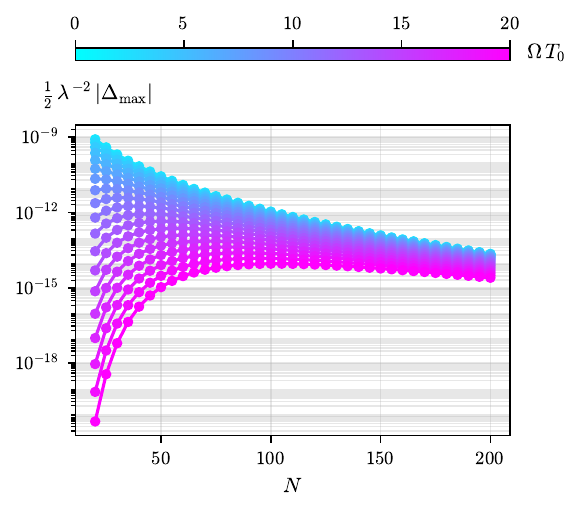}
   \vspace*{-10mm}
    \caption{Maximum eigenvalue of $\mf \Delta$ within the subspace spanned by $\{h_{n,N}\}_N$ for detectors separated by a distance of $L = 5T_0$ as a function of $N$ for different values of $\Omega$.}
\label{figure:delta_max}
\vspace*{-4mm}
\end{figure}

In Fig.~\ref{figure:delta_max} we plot the signalling estimator as a function of $N$ for different values of $\Omega$. Notice that, for all parameters used in this section, the signalling is negligible compared to the maximum negativities displayed in Fig.~\ref{figure:neg-evolution-N-real}. We see that the signalling increases for small values of $N$, which is explained by the fact that $T_N$ is not yet in the asymptotic behavior~\eqref{eq:TN}. However, at large enough $N$, we enter the regime in which the basis $\{h_{n,N}\}_N$ can be approximated to span compactly supported functions, where the signalling decreases exponentially with $N$ and reaches $10^{-14}$ at $N=200$. This results in a signalling-to-entanglement ratio of $\Theta \approx 10^{-8}$ for the maximal negativity values.

% \tmm{Describe behaviour of Delta as a function of $N$ and $\Omega$. Explain why for small $N$ we have $\Delta$ increasing because $T_N$ is not in the asymptotic behaviour yet. Mention that it is fully deacreasing with Omega, and mention that it is orders and orders (how many) of magnitude less than the negativities we are seeing in the other cases\footnote{\tmm{Notice that the $N=0$ case corresponds to a single Gaussian with standard deviation $T_{N=0} = 5T_0/6$, resulting in a shorter interaction time and, consequently, smaller negativity, compared to the canonical example in Subsection~\ref{example-harvesting}.}}.}

%In this subsection we studied the protocol of entanglement harvesting with two spacelike separated probes that couple to the field with arbitrary temporal profiles. In doing so, we found  

In general, it is not an easy task to find parameters and switching functions that allow spacelike separated probes to harvest entanglement from the 3+1 dimensional Minkowski vacuum (for examples see e.g.~\cite{NickEdu2014,HarvestingDelocalized,adrianQuenchedHarv2025,CongHorizons2020,patriciaAndI}). The results of this subsection showcased the advantage provided by the Hermite expansion method presented in Section~\ref{sec:iv}, providing numerous examples of switching profiles and overall parameters that can realize entanglement harvesting at spacelike separation. Moreover, we found setups where causally disconnected detectors harvest negativities one order of magnitude larger than the example with Gaussian switching functions, that has signalling-to-entanglement ratio of $\Theta \approx 5\%$.

%\shout{stopped here!}
%While it is hard conclude about such claims, as this would require a more thorough analysis of the asymptotic behavior of the negativity for large $N$, which is a highly non-trivial task (App.~\ref{app:asymp}), 

%Therefore, as we increase $N$, we recover a larger and larger maximum negativity, reaching even an order of magnitude over the standard Gaussian case at $N=200$. 
%Moreover, if we actually plot the largest negativity as a function of $N$, we can observe that it does not seem to asymptote for the values of $N$ considered (Fig.~\ref{figure:neg-evolution-N-real}).

%In this section we have shown that, by using the method presented in Sec.~\ref{sec:iv}, it it possible to harvest more entanglement using compactly supported switching functions that one would normally retrieve with a non compactly supported function, such as a Gaussian. In the next section we will relax the support region and use the same method to recover even more negativity using less compactly supported functions, while making sure the signalling stays below a certain threshold.

\subsection{Approximately causally disconnected regions}

Strictly speaking, it is not necessary for detectors to be fully causally disconnected to extract genuine vacuum entanglement from the field. Indeed, in the canonical example of Subsection~\ref{example-harvesting}, we had a non-negligible signalling-to-entanglement ratio (SER) of the order of $\Theta \approx 5\%$ at the peak of negativity. Even though in this case one cannot claim to extract entanglement from strictly spacelike separated regions, the SER is still sufficiently small for one to conclude that the detectors became entangled due to vacuum correlations.

%Looking back to the canonical case, we recall that the presence of some signalling could be beneficial, as it increases the amount of negativity that can be harvested. This begs the question: can we get even more orders of magnitude than in the previous section by allowing some signalling 

% In this section, we will find the optimal switching functions for entanglement harvesting from spacelike separated regions.

As a matter of fact, small signalling can increase the total entanglement acquired by the detectors, as it allows them to probe the field for longer periods of time. In this section, we take advantage of this fact to improve on the canonical example of Subsection~\ref{example-harvesting}, while keeping the SER $\Theta$ below $5\%$. To achieve this, we will apply the same method as in Subsection~\ref{sec:spacelike}, increasing the detectors' interaction times. Explicitly, we make $T_N \mapsto T_N+\delta T$ for small $\delta T$, rescaling the Hermite functions $h_{n}(t,T_N)$ and slightly increasing the signalling between the probes. To find the functions that maximize the genuine entanglement extracted from the field ($\widetilde{\mathcal{N}}^+$ in~\eqref{eq:N+}), we again find the maximal eigenvalues of $\mf M^+ (\theta)$ within the corresponding rescaled subspaces $V_N$ for each $\theta$. 

Explicitly, for each $N$, we pick multiple small values of $\delta T$ ensuring that the condition $\Theta\leq 5\%$ is satisfied. Importantly, for large values of $N$, the functions in the subspace $V_N$ become effectively compactly supported in $[-L/2,L/2]$, and the oscillating maximizing functions become very steep close to the boundary. For this reason, even very small rescaling parameters $\delta T$ yield non-trivial signalling, breaking the $\Theta \leq 5\%$ condition. Our results showed that $N\approx 50$ was an ideal balance between high negativity and low signalling with $\delta T/T_N \approx 2.4\%$.

\begin{figure}[h!]
    \centering
    \includegraphics[width=\linewidth]{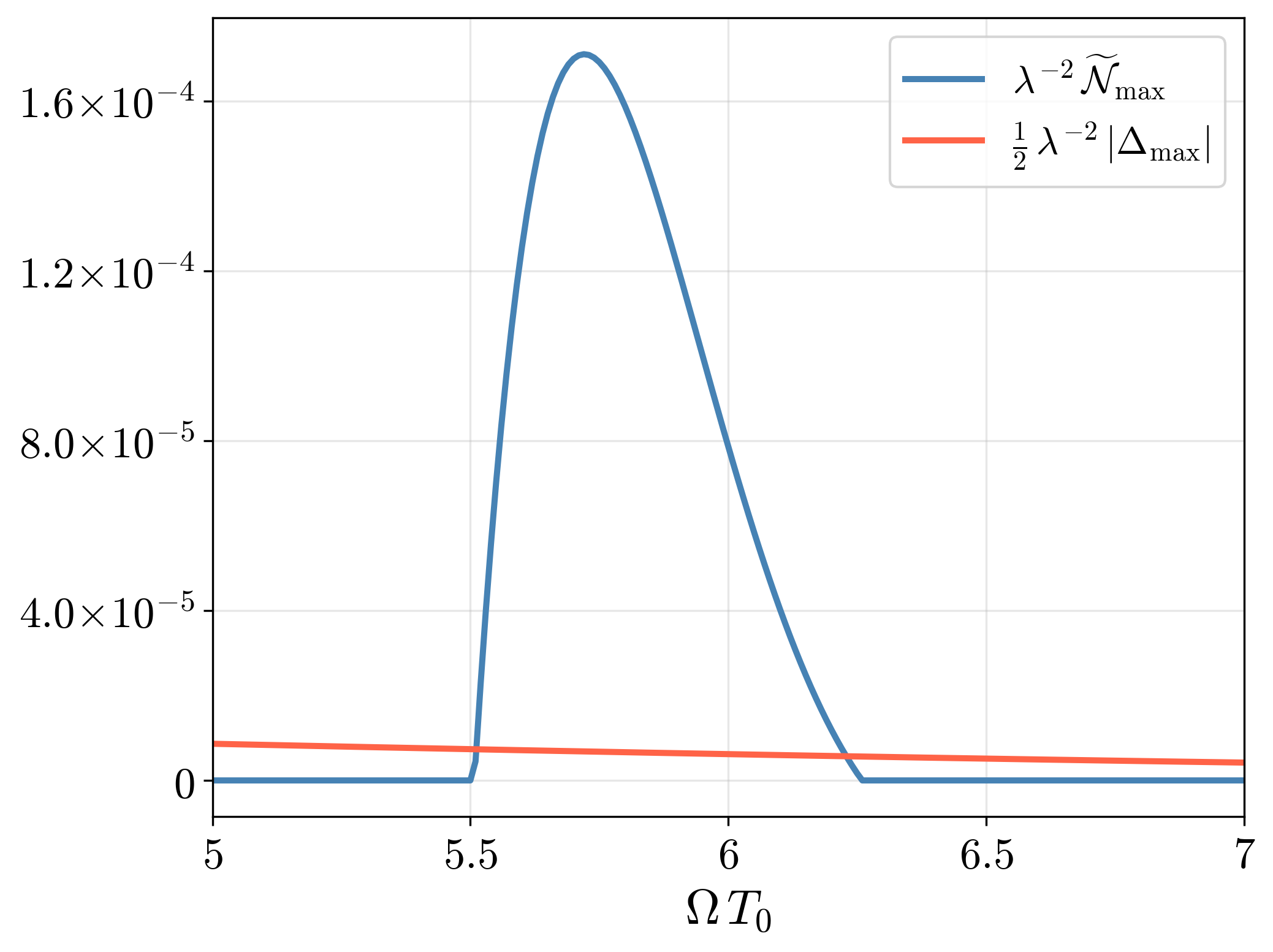}
    \vspace*{-9mm}
    \caption{Negativity plot for the optimal approximately disconnected switching function with $\Theta\leq 5\%$ at $N=50$ for detectors separated by a distance of $L = 5T_0$.}
\label{figure:best_peak_n50}
\end{figure}

Figure~\ref{figure:best_peak_n50} displays the equivalent negativity plot to the canonical example in Fig.~\ref{figure:example-gauss} for the optimal function when the detectors are separated by a distance $L = 5 T_0$ as a function of $\Omega T_0$ with $N=50$. In this example, the negativity peak reaches $10^{-4} \lambda^2$, while the signalling-to-entanglement at the peak is $4\%$, resulting in an improvement of two orders of magnitude. Also notice that the energy gap required in this case is larger than the one required in the Gaussian example, due to the oscillations in the switching functions (displayed in spacetime in Fig.~\ref{figure:best_traj_n50}).

%This non-zero signalling now comes from the fact that our optimal switching function is no longer compactly supported, as shown in Fig.~\ref{figure:best_traj_n50}, although it is still an oscillating function.
%In Figs.~\ref{figure:best_traj_n50}-\ref{figure:best_peak_n50}, we give the optimal switching function for $N=50$ and its corresponding negativity plot. The switching function has now tails outside of the support region, but it is still oscillating. 

\begin{figure}[h!]
    \centering
   \includegraphics[width=\linewidth]{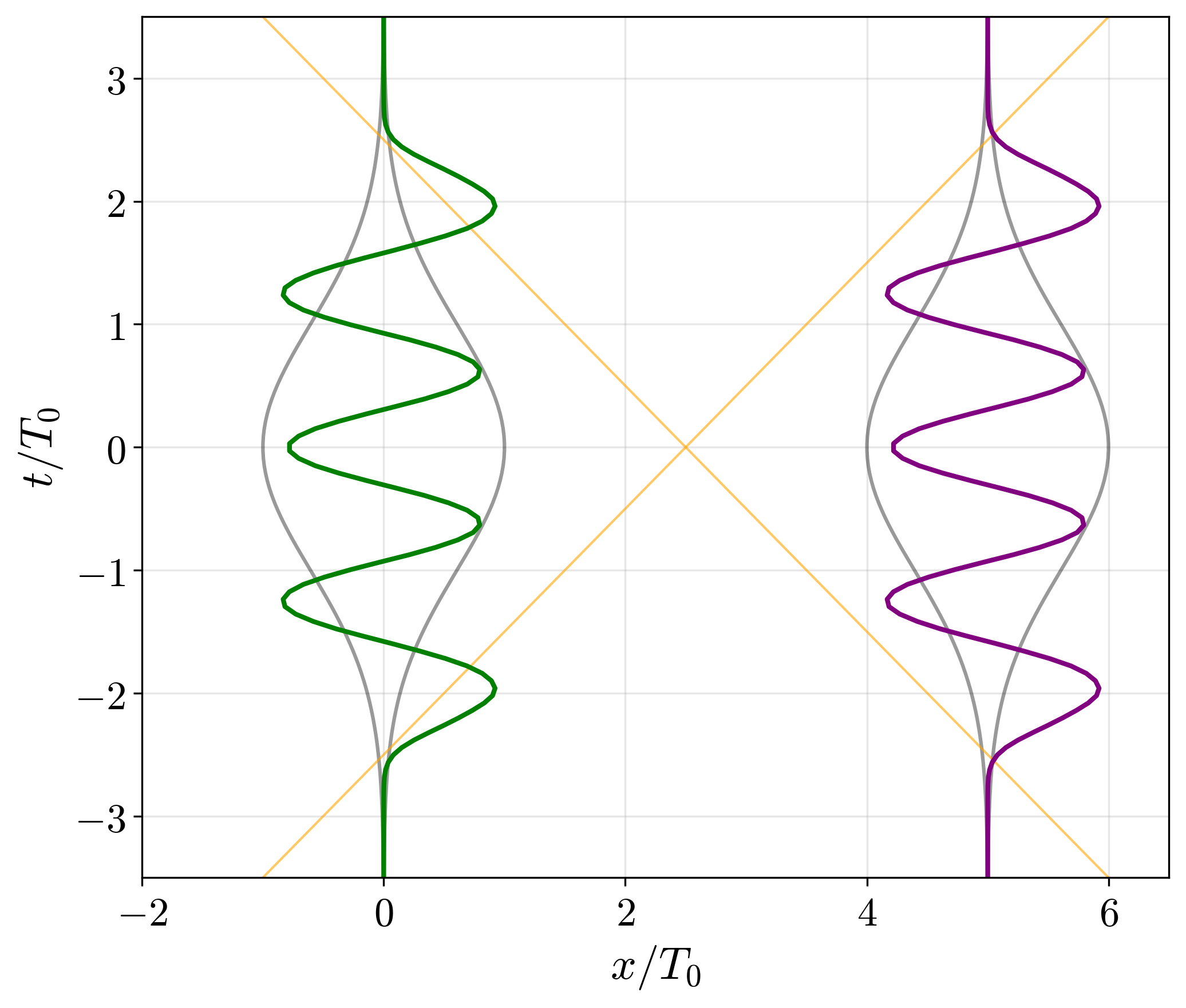}
   \vspace*{-7mm}
    \caption{Spacetime depiction of the optimal switching function at $N=50$ with the constraint that $\Theta\leq 5\%$ for detectors separated by a distance $L = 5T_0$. The profile of the switching functions are plotted in the horizontal axis around the spatial center of their interactions. The canonical Gaussian example is displayed in gray in the background for comparison and the orange lines depict the lightcone of the middle event between the interaction regions.}
\label{figure:best_traj_n50}
\end{figure}

In this subsection we applied our Hermite-expansion method to the case where the detectors can signal as much as in the canonical example of Subsection~\ref{example-harvesting}. By keeping the same level of signalling as in this example, we have shown that entanglement harvesting can be improved by two orders of magnitude. Moreover, when compared to the spacelike separated setup of Subsection~\ref{sec:spacelike}, we could also increase the entanglement between the probes by one order of magnitude by allowing small but non-negligible communication. 

\subsection{Non-communicating causally connected regions}

Even when the detectors are fully causally connected, it is possible that, for a specific choice of parameters, they do not communicate at leading order ($\Delta_{\tc{a}\tc{b}}^{\sct{++}} = 0$). This can be understood as a revival~\cite{RevivalJC1993,RevivalReview2004,RelQuantRevival2010} on the part\footnote{For an explicit example of this decomposition for gapless detectors, see~\cite{closedform2024}.} of the time evolution operator associated with the signalling between the qubits, when it effectively acts as an identity operator. In this case one cannot use the detectors to quantify entanglement between spacelike separated regions, as they couple to overlapping field degrees of freedom. Nevertheless, if the signalling-to-entanglement ratio is exactly zero, one can still claim that the detectors became entangled due to pre-existing entanglement in the field. In this section, we will use our Hermite-expansion method to find particular cases in which the probes, even if causally connected, do not communicate, but can harvest even more entanglement than the previously discussed cases.

%We will lift the constraint that the function need be compactly supported and we find particular cases in which the probes, even if causally connected, do not communicate but can harvest entanglement to the limit of perturbation theory without the need of superoscillations.

Notice that when considering $T$ sufficiently larger than $T_0$, the normalization condition based on the $L^2$ norm does not ensure that the peaks of the switching function are of the same order of magnitude as the examples we previously considered. More precisely, if $\chi$ is effectively supported in the region $[-\alpha L/2,\alpha L/2]$, the condition $||\chi|| = ||\chi_\text{can}||$ would result in \mbox{$\text{max}_t|\chi(t)| \sim \alpha^{-1}\text{max}_t|\chi_\text{can}(t)| = \alpha^{-1}$}. That is, increasing the interaction time would effectively decrease the total strength of the coupling $\lambda\chi(t)$. Therefore, an accurate comparison with the canonical case (assuming coupling constants of the same order of magnitude) would be achieved by the condition $||\chi|| = \alpha||\chi_\text{can}||$. Once imposed, it rescales the expressions for the negativity by a factor of $\alpha^2$ relative to the expression of Eq.~\eqref{eq:N+}. The same rescaling occurs for the signalling term $\Delta_\tc{ab}^{\sct{++}}$. In essence, when rescaling the time duration by $\alpha$, we will maximize the function
\begin{equation}
    \widetilde{\mathcal{N}}_\alpha(\bm c) \equiv \alpha^2\sqrt{\pi}  \, \frac{|\trans{\bm{c}} \mf G \,\bm c| - \trans{\bm{c}} \mf W \,\bm c}{\trans{\bm c}\bm c}
\end{equation}
over all $\bm c \in \mathbb{R}^N$ such that $\trans{\bm c}\mf \Delta\, \bm c = 0$ for a fixed $N$.  The constraint $\trans{\bm c}\mf \Delta \,\bm c = 0$ prevents this optimization from being recast as a simple eigenvalue problem, so we instead use an adapted quasi-Newton optimizer method to find the optimal switching function.

% We partially implement the constraint by splitting the signalling matrix onto its real and imaginary parts $\mathsf{\Delta} = \mathsf{\Delta}_\text{Re} + \ii \,\mathsf{\Delta}_\text{Im}$. 

% Ideally, one would find a specific $\bm c \equiv \bm c^+ + \bm c^-$ for which the positive and negative two terms cancel each other, i.e. $\trans{\bm c^+}\mathsf{\Delta}^+ \bm c^+ = \trans{\bm c^-}\mathsf{\Delta}^- \bm c^-$.

% Then, we compute the norm of each of these matrices, and we apply 

% To find $\bm c$ such that $\trans{\bm c}\mf \Delta_\text{Re} \bm c =\trans{\bm c}\mf \Delta_\text{Im} \bm c = 0$ we first split each of these real matrices onto their positive and negative parts $\mathsf{\Delta}_\text{Re} = \mathsf{\Delta}^+_\text{Re} - \mathsf{\Delta}^-_\text{Re}$, $\mathsf{\Delta}_\text{Im} = \mathsf{\Delta}^+_\text{Im} - \mathsf{\Delta}^-_\text{Im}$.  However, it becomes non-trivial to find a vector that works for both the real and imaginary part. Therefore, we perform this technique on one of the two, and then we use numerical methods

%making the signalling and negativity skyrocket. We will now give a prospect for finding a specific switching function where the full negativity is harvested, but the probes do no longer communicate, giving an effective zero signalling. In other words, we allow the signalling to occur in order to increase the negativity, and then we look for a switching function that effectively interrupts communication.

\begin{figure}[h!]
    \centering
   \includegraphics[width=\linewidth]{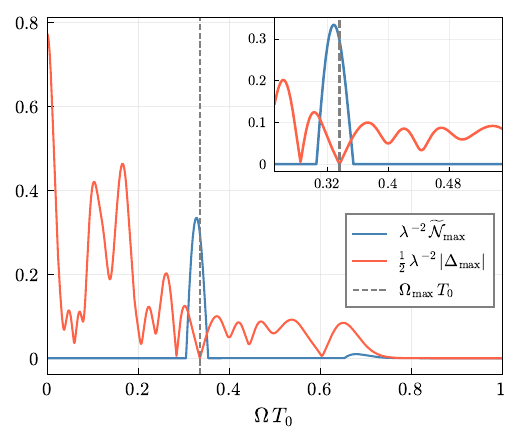}
   \vspace*{-10mm}
    \caption{Negativity and signalling contribution for the switching function found with our adapted quasi-Newton method for $N=25$, $T = 10 \,T_0$ and  $L = 5T_0$.}
\label{figure:best-neg-perturbation-theory}
\end{figure}

In Fig.~\ref{figure:best-neg-perturbation-theory}, we plot the negativity and signalling estimator as a function of $\Omega T_0$ for the optimal switching function with $N=25$ for an interaction lasting 10 times longer than the canonical example ($T=10T_0$). As expected, this setup has large signalling for most values of $\Omega$. However, at $\Omega_\text{max} T_0 \approx 0.336$, the signalling cancels exactly, corresponding to a point where $\Delta_\tc{ab}^{\sct{++}}$ changes in sign, while the negativity reaches $10^{-1}\lambda^2$.

Applying our Hermite-expansion method to causally connected regions, we were able to find specific switching functions and parameters such that the detectors cannot signal to each other to leading order in $\lambda$. The obtained negativity in this scenario was five orders of magnitude larger than the canonical example, and is expected to be even larger when the detectors probe the field for even larger times.

\section{The limit of second order perturbation theory}\label{sec:limit}

Entanglement harvesting is mostly treated within the regime of second order perturbation theory\footnote{We note that there are a few exceptions~\cite{Pozas-Kerstjens:2015,NickEdu2014,beyondPertDet2013,Farming,PetarNonpertEH,JoseNonPertMethod}}, using a formalism and expressions similar to the ones reviewed in Section~\ref{sec:iii}. The reasons for this approach are two-fold. On the one hand, this is the simplest theoretical framework to study entanglement harvesting, requiring the computation of a few integrals corresponding to the smeared propagators $G_\tc{f}$ and $W$. On the other hand, it is argued that the quantities defining the final state of the detectors (e.g. $\lambda^2 G_\tc{f}(\Lambda_\tc{a}^+,\Lambda_\tc{b}^+)$ and $\lambda^2 W(\Lambda_i^-,\Lambda_i^+)$), and thus, their entanglement, are small. Indeed, in the canonical Gaussian example, the relevant propagators are of the order of $ 10^{-5}\lambda^2$, so that even in the extreme case of $\lambda = 1$, leading order results yield a precise description of the protocol. In fact,~\cite{Pozas-Kerstjens:2015} shows that, when going to fourth order in the Gaussian canonical example, the next order terms are $\sim 10^{-10}\lambda^4$. 

Although the perturbative treatment seems to be enough, recent experimental proposals have been put forward to harvest entanglement from both analogue systems~\cite{cisco2023harvesting,oberthaler} and the electromagnetic vacuum~\cite{adamExperimentalEH2025,Mo_Settembrini2022VacuumCorrelationsNatCommun}. These approaches are all similar in the sense that the proposed experimental parameters yield negativity values that are, in principle, measurable with current methods. In the particular case of~\cite{cisco2023harvesting} and~\cite{oberthaler}, when using realistic experimental values for the coupling constant, one obtains negativities for the final state of the probes of the order of $\mathcal{N}\sim 10^{-4}$. This is possible because, in these cases, the coupling of probes happens to an analogue of the momentum of a scalar field $\hat{\pi}(\mf x)$, rather than to the field $\hat{\phi}(\mf x)$. For this type of interaction, the coupling constant is dimensionful, allowing experimental parameters of the setup to rescale the results. 

Another common feature of the experimental proposals~\cite{cisco2023harvesting,oberthaler,adamExperimentalEH2025,Mo_Settembrini2022VacuumCorrelationsNatCommun} is the fact that they all considered one short interaction pulse, corresponding to an even positive switching function. Although in some of the proposed experiments it is not possible to have the oscillating switching functions found in Section~\ref{sec:optimization}, the proposal of~\cite{oberthaler} is an exception. In this case, two localized polarons (playing the role of Unruh-DeWitt detectors) interact with the density fluctuations of a background Bose-Einstein condensate (analogous to the momentum of a real scalar field). The coupling between the probes and the condensate is controlled by an external magnetic field that can oscillate around a given value, allowing both positive and negative values for the switching function. 

One would hope that the results of Section~\ref{sec:optimization} would also carry on to momentum-coupled detectors in experiments that can implement oscillating switching functions. Indeed, our method was able to improve the negativities by two orders of magnitude compared to the canonical Gaussian protocol used in~\cite{oberthaler}, while keeping the same relative signalling. When considering non-communicating causally connected probes, the improvement was by five orders of magnitude instead. Similar improvements in the momentum-coupled case would take setups that predict negativities of the order of $10^{-4}$ beyond leading order perturbation theory. Moreover, one could further enhance the extracted entanglement by utilizing multiple detectors~\cite{sergiHarvesting}, improving the protocol even further.

Indeed, we can see that negativities of the order of $10^{-3}$ are already outside of the leading order regime by analyzing plots of $\mathcal{N}$ as a function of $\Omega$. For a given profile function, as $\Omega$ increases, the peak negativity \mbox{$\mathcal{N}(\Omega_\text{max}) = \lambda^2 G_\tc{f}(\Lambda_\tc{a}^+,\Lambda_\tc{b}^+)-\lambda^2 W(\Lambda_i^-,\Lambda_i^+)$} typically happens shortly after the point at which $\lambda^2 G_\tc{f}(\Lambda_\tc{a}^+,\Lambda_\tc{b}^+)$ and $\lambda^2 W(\Lambda_i^-,\Lambda_i^+)$ acquire the same value. The negativity, given by their difference, then tends to be one order of magnitude smaller than the propagators. That is, when the negativity is of the order of $10^{-3}$ or $10^{-2}$, the propagators will be of order $10^{-2}$ or $10^{-1}$, respectively. Thus, $(\lambda^2 W(\Lambda_i^-,\Lambda_i^+))^2$ and $(\lambda^2 G_\tc{f}(\Lambda_\tc{a}^+,\Lambda_\tc{b}^+))^2$ would have a similar magnitude to $\mathcal{N}$ and our perturbative treatment would not provide an accurate description to the experiment. These cases require considering higher order corrections, or, might even leave the perturbative regime altogether~\cite{JoseNonPertMethod}. However, our theoretical results still accurately describe the experimental setup~\cite{oberthaler} if one considers smaller values of the coupling constant, ensuring that the relevant corrections to the state are of order $10^{-3}$. This would increase the lifetime of probes and decrease experimental noise effects~\cite{MO_RbK_loss}, while keeping the description within second order perturbation theory and measurable negativity values.

The idea that local probes could be used to extract vacuum entanglement from a quantum field was first proposed in the context of atoms coupled to electromagnetism in the 90's~\cite{Valentini1991}. In the early studies of the protocol, it seemed evident that the effect was so small that the perturbative regime would be sufficient to describe it. However, recent experimental proposals would already almost reach the limit of second order perturbation theory. When combined with our results, the protocol could potentially be taken to the non-perturbative regime, begging a novel approach that could describe entanglement harvesting outside of perturbation theory.

\section{Conclusions}\label{sec:conclusions}

We optimized the protocol of entanglement harvesting and argued that, when applied to proposed experiments, our results can push the protocol beyond the standard perturbative regime. Our method uses a truncated Hermite expansion to efficiently compute smeared QFT propagators in closed-form, allowing us to consider arbitrary switching functions. With this method, the computation of negativity becomes a simple matrix product and optimizing genuine entanglement harvesting reduces to an eigenvalue problem. For any given setup, our method can pinpoint the optimal switching function that maximizes entanglement extraction, and we show that it can increase the harvested entanglement by several orders of magnitude in three different scenarios; when the probes are spacelike separated, approximately causally disconnected, and causally connected but non-communicating.

In the case of spacelike separated probes, we implement causal disconnection by rescaling the Hermite basis, and find oscillating switching functions that can harvest one order of magnitude more entanglement than the canonical example of a Gaussian switching function. We then adapt this method to the case where the probes are not fully spacelike separated by slightly increasing the temporal support of the Hermite basis. This gains two orders of magnitude compared to the Gaussian example, while ensuring less signaling. We estimate the communication between the probes through the signalling-to-entanglement ratio, a stricter quantifier of genuine entanglement harvesting than the estimator previously defined in~\cite{ericksonNew}. Finally, we consider probes that are causally connected and find specific parameters for which they are unable to communicate, finding a five order of magnitude increase of the extracted entanglement compared to the canonical example.

Up to this point, entanglement harvesting has only been treated perturbatively, resulting in an infinitesimal amount of extracted entanglement, and posing an experimental challenge in practice. However, extrapolating our conclusions to recent experimental proposals pushes the protocol beyond leading order perturbation theory, and perhaps even beyond the perturbative regime altogether. Our results make it clear that a non-perturbative treatment is the only way forward in the path to make entanglement harvesting useful in quantum information and quantum computing.

\acknowledgments

MMB and TRP are thankful to Markus K. Oberthaler for insightful discussions, to Alexander Flink for essential input on the numerical part of the work, and to Sebastian Holm\'en for reviewing a first draft of the manuscript. TRP is thankful for financial support from the Olle Engkvist Foundation (no.225-0062). Nordita is partially supported by Nordforsk. 

\phantom{test}

\bibliography{references}

@article{generalPD,
  title = {Localized nonrelativistic quantum systems in curved spacetimes: A general characterization of particle detector models},
  author = {Perche, T. Rick},
  journal = {Phys. Rev. D},
  volume = {106},
  issue = {2},
  pages = {025018},
  numpages = {20},
  year = {2022},
  month = {Jul},
  publisher = {American Physical Society},
  doi = {10.1103/PhysRevD.106.025018},
  url = {https://link.aps.org/doi/10.1103/PhysRevD.106.025018}
}

@article{Unruh1976,
  title = {Notes on black-hole evaporation},
  author = {Unruh, W. G.},
  journal = {Phys. Rev. D},
  volume = {14},
  issue = {4},
  pages = {870--892},
  numpages = {0},
  year = {1976},
  month = {Aug},
  publisher = {American Physical Society},
  doi = {10.1103/PhysRevD.14.870},
  url = {https://link.aps.org/doi/10.1103/PhysRevD.14.870}
}

@book{DeWitt,
	Address = {Cambridge, UK},
	Author = {B. DeWitt},
	Publisher = {Cambridge University Press},
	Title = {General Relativity; an Einstein Centenary Survey},
	Year = {1980}}

@article{us,
  title = {General relativistic quantum optics: Finite-size particle detector models in curved spacetimes},
  author = {Mart\'{\i}n-Mart\'{\i}nez, Eduardo and Perche, T. Rick and de S. L. Torres, Bruno},
  journal = {Phys. Rev. D},
  volume = {101},
  issue = {4},
  pages = {045017},
  numpages = {10},
  year = {2020},
  month = {Feb},
  publisher = {American Physical Society},
  doi = {10.1103/PhysRevD.101.045017},
  url = {https://link.aps.org/doi/10.1103/PhysRevD.101.045017}
}

@article{neutrinos,
  title = {Neutrino flavor oscillations without flavor states},
  author = {Torres, Bruno de S. L. and Rick Perche, T. and Landulfo, Andr\'e G. S. and Matsas, George E. A.},
  journal = {Phys. Rev. D},
  volume = {102},
  issue = {9},
  pages = {093003},
  numpages = {11},
  year = {2020},
  month = {Nov},
  publisher = {American Physical Society},
  doi = {10.1103/PhysRevD.102.093003},
  url = {https://link.aps.org/doi/10.1103/PhysRevD.102.093003}
}

@article{nogo,
  title = {General no-go theorem for entanglement extraction},
  author = {Simidzija, Petar and Jonsson, Robert H. and Mart\'{\i}n-Mart\'{\i}nez, Eduardo},
  journal = {Phys. Rev. D},
  volume = {97},
  issue = {12},
  pages = {125002},
  numpages = {16},
  year = {2018},
  month = {Jun},
  publisher = {American Physical Society},
  doi = {10.1103/PhysRevD.97.125002},
  url = {https://link.aps.org/doi/10.1103/PhysRevD.97.125002}
}

@misc{cisco2023harvesting,
      title={Vacuum entanglement probes for ultra-cold atom systems}, 
      author={Cisco Gooding and Allison Sachs and Robert B. Mann and Silke Weinfurtner},
      year={2023},
      eprint={2308.07892},
      archivePrefix={arXiv},
      primaryClass={quant-ph}
}

@article{phil,
  title = {Universal Quantum Computer from Relativistic Motion},
  author = {LeMaitre, Philip A. and Perche, T. Rick and Krumm, Marius and Briegel, Hans J.},
  journal = {Phys. Rev. Lett.},
  volume = {134},
  issue = {19},
  pages = {190601},
  numpages = {7},
  year = {2025},
  month = {May},
  publisher = {American Physical Society},
  doi = {10.1103/PhysRevLett.134.190601},
  url = {https://link.aps.org/doi/10.1103/PhysRevLett.134.190601}
}

@misc{adamExperimentalEH2025,
      title={Towards an experimental implementation of entanglement harvesting in superconducting circuits: effect of detector gap variation on entanglement harvesting}, 
      author={Adam Teixidó-Bonfill and Xi Dai and Adrian Lupascu and Eduardo Martín-Martínez},
      year={2025},
      eprint={2505.01516},
      archivePrefix={arXiv},
      primaryClass={quant-ph},
      url={https://arxiv.org/abs/2505.01516}, 
}

@article{Mo_Settembrini2022VacuumCorrelationsNatCommun,
  title   = {Detection of quantum-vacuum field correlations outside the light cone},
  author  = {Settembrini, Francesca Fabiana and Lindel, Frieder and Herter, Alexa Marina and Buhmann, Stefan Yoshi and Faist, J{\'e}r{\^o}me},
  journal = {Nature Communications},
  volume  = {13},
  number  = {1},
  pages   = {3383},
  year    = {2022},
  doi     = {10.1038/s41467-022-31081-1},
  url     = {https://doi.org/10.1038/s41467-022-31081-1}
}

@article{PRLHyugens2015,
  title = {{Violation of the Strong Huygen's Principle and Timelike Signals from the Early Universe}},
  author = {Blasco, Ana and Garay, Luis J. and Mart\'{\i}n-Benito, Mercedes and Mart\'{\i}n-Mart\'{\i}nez, Eduardo},
  journal = {Phys. Rev. Lett.},
  volume = {114},
  issue = {14},
  pages = {141103},
  numpages = {5},
  year = {2015},
  month = {Apr},
  publisher = {American Physical Society},
  doi = {10.1103/PhysRevLett.114.141103},
  url = {https://link.aps.org/doi/10.1103/PhysRevLett.114.141103}
}

@article{collectCalling,
  title = {Timelike information broadcasting in cosmology},
  author = {Blasco, Ana and Garay, Luis J. and Mart\'{\i}n-Benito, Mercedes and Mart\'{\i}n-Mart\'{\i}nez, Eduardo},
  journal = {Phys. Rev. D},
  volume = {93},
  issue = {2},
  pages = {024055},
  numpages = {17},
  year = {2016},
  month = {Jan},
  publisher = {American Physical Society},
  doi = {10.1103/PhysRevD.93.024055},
  url = {https://link.aps.org/doi/10.1103/PhysRevD.93.024055}
}

@misc{oberthaler,
      title={Bose polarons as relativistic {Unruh-DeWitt} detectors: Entanglement harvesting from {Bose-Einstein} condensates}, 
      author={T. Rick Perche and Francesco Gozzini and Markus K. Oberthaler},
      year={2026},
      eprint={2512.21381},
      archivePrefix={arXiv},
      primaryClass={quant-ph},
      url={https://arxiv.org/abs/2512.21381}, 
}

@article{JoseNonPertMethod,
  title = {Nonperturbative method for particle detectors with continuous interactions},
  author = {Polo-G\'omez, Jos\'e and Mart\'{\i}n-Mart\'{\i}nez, Eduardo},
  journal = {Phys. Rev. D},
  volume = {109},
  issue = {4},
  pages = {045014},
  numpages = {22},
  year = {2024},
  month = {Feb},
  publisher = {American Physical Society},
  doi = {10.1103/PhysRevD.109.045014},
  url = {https://link.aps.org/doi/10.1103/PhysRevD.109.045014}
}

@article{us2,
  title = {Broken covariance of particle detector models in relativistic quantum information},
  author = {Mart\'{\i}n-Mart\'{\i}nez, Eduardo and Perche, T. Rick and Torres, Bruno de S. L.},
  journal = {Phys. Rev. D},
  volume = {103},
  issue = {2},
  pages = {025007},
  numpages = {14},
  year = {2021},
  month = {Jan},
  publisher = {American Physical Society},
  doi = {10.1103/PhysRevD.103.025007},
  url = {https://link.aps.org/doi/10.1103/PhysRevD.103.025007}
}

@article{martin-martinez2015,
  title = {Causality issues of particle detector models in {QFT} and quantum optics},
  author = {Mart\'{\i}n-Mart\'{\i}nez, Eduardo},
  journal = {Phys. Rev. D},
  volume = {92},
  issue = {10},
  pages = {104019},
  numpages = {18},
  year = {2015},
  month = {Nov},
  publisher = {American Physical Society},
  doi = {10.1103/PhysRevD.92.104019},
  url = {https://link.aps.org/doi/10.1103/PhysRevD.92.104019}
}

@misc{QFTPD,
      title={{Particle Detectors from Localized Quantum Field Theories}}, 
      author={T. Rick Perche and Jos\'e Polo-G\'omez and Bruno de S. L. Torres and Eduardo Mart\'in-Mart\'inez},
      year={2023},
      eprint={2308.11698},
      archivePrefix={arXiv},
      primaryClass={quant-ph}
}

@article{Jonsson2,
  title = {Information Transmission Without Energy Exchange},
  author = {Jonsson, Robert H. and Mart\'{i}n-Mart\'{i}nez, Eduardo and Kempf, Achim},
  journal = {Phys. Rev. Lett.},
  volume = {114},
  issue = {11},
  pages = {110505},
  numpages = {5},
  year = {2015},
  month = {Mar},
  publisher = {American Physical Society},
  doi = {10.1103/PhysRevLett.114.110505},
  url = {https://link.aps.org/doi/10.1103/PhysRevLett.114.110505}
}

@article{eduardo,
  title = {Relativistic quantum optics: The relativistic invariance of the light-matter interaction models},
  author = {Mart\'{i}n-Mart\'{i}nez, Eduardo and Rodriguez-Lopez, Pablo},
  journal = {Phys. Rev. D},
  volume = {97},
  issue = {10},
  pages = {105026},
  numpages = {11},
  year = {2018},
  month = {May},
  publisher = {American Physical Society},
  doi = {10.1103/PhysRevD.97.105026},
  url = {https://link.aps.org/doi/10.1103/PhysRevD.97.105026}
}

@article{Nicho1,
  title = {{$\hat{\bm p}\cdot\hat{\bm{A}}$} vs {$\hat{\bm x}\cdot\hat{\bm{E}}$}: Gauge invariance in quantum optics and quantum field theory},
  author = {Funai, Nicholas and Louko, Jorma and Mart\'{\i}n-Mart\'{\i}nez, Eduardo},
  journal = {Phys. Rev. D},
  volume = {99},
  issue = {6},
  pages = {065014},
  numpages = {19},
  year = {2019},
  month = {Mar},
  publisher = {American Physical Society},
  doi = {10.1103/PhysRevD.99.065014},
  url = {https://link.aps.org/doi/10.1103/PhysRevD.99.065014}
}

@article{HarvestingSuperposed,
  title = {Entanglement amplification between superposed detectors in flat and curved spacetimes},
  author = {Foo, Joshua and Mann, Robert B. and Zych, Magdalena},
  journal = {Phys. Rev. D},
  volume = {103},
  issue = {6},
  pages = {065013},
  numpages = {19},
  year = {2021},
  month = {Mar},
  publisher = {American Physical Society},
  doi = {10.1103/PhysRevD.103.065013},
  url = {https://link.aps.org/doi/10.1103/PhysRevD.103.065013}
}

@article{HarvestingAccelerationRobb,
  title = {Does acceleration assist entanglement harvesting?},
  author = {Liu, Zhihong and Zhang, Jialin and Mann, Robert B. and Yu, Hongwei},
  journal = {Phys. Rev. D},
  volume = {105},
  issue = {8},
  pages = {085012},
  numpages = {9},
  year = {2022},
  month = {Apr},
  publisher = {American Physical Society},
  doi = {10.1103/PhysRevD.105.085012},
  url = {https://link.aps.org/doi/10.1103/PhysRevD.105.085012}
}

@article{HarvestingDelocalized,
  title = {Entanglement harvesting with coherently delocalized matter},
  author = {Stritzelberger, Nadine and Henderson, Laura J. and Baccetti, Valentina and Menicucci, Nicolas C. and Kempf, Achim},
  journal = {Phys. Rev. D},
  volume = {103},
  issue = {1},
  pages = {016007},
  numpages = {14},
  year = {2021},
  month = {Jan},
  publisher = {American Physical Society},
  doi = {10.1103/PhysRevD.103.016007},
  url = {https://link.aps.org/doi/10.1103/PhysRevD.103.016007}
}

@article{HarvestingQueNemLouko,
  title = {Degenerate detectors are unable to harvest spacelike entanglement},
  author = {Pozas-Kerstjens, Alejandro and Louko, Jorma and Mart\'{\i}n-Mart\'{\i}nez, Eduardo},
  journal = {Phys. Rev. D},
  volume = {95},
  issue = {10},
  pages = {105009},
  numpages = {11},
  year = {2017},
  month = {May},
  publisher = {American Physical Society},
  doi = {10.1103/PhysRevD.95.105009},
  url = {https://link.aps.org/doi/10.1103/PhysRevD.95.105009}
}

@article{VidalNegativity,
  title = {Computable measure of entanglement},
  author = {Vidal, G. and Werner, R. F.},
  journal = {Phys. Rev. A},
  volume = {65},
  issue = {3},
  pages = {032314},
  numpages = {11},
  year = {2002},
  month = {Feb},
  publisher = {American Physical Society},
  doi = {10.1103/PhysRevA.65.032314},
  url = {https://link.aps.org/doi/10.1103/PhysRevA.65.032314}
}

@article{ericksonNew,
  title = {When entanglement harvesting is not really harvesting},
  author = {Tjoa, Erickson and Mart\'{\i}n-Mart\'{\i}nez, Eduardo},
  journal = {Phys. Rev. D},
  volume = {104},
  issue = {12},
  pages = {125005},
  numpages = {21},
  year = {2021},
  month = {Dec},
  publisher = {American Physical Society},
  doi = {10.1103/PhysRevD.104.125005},
  url = {https://link.aps.org/doi/10.1103/PhysRevD.104.125005}
}

@article{boris,
  title = {Harvesting entanglement from the gravitational vacuum},
  author = {Perche, T. Rick and Ragula, Boris and Mart\'{\i}n-Mart\'{\i}nez, Eduardo},
  journal = {Phys. Rev. D},
  volume = {108},
  issue = {8},
  pages = {085025},
  numpages = {58},
  year = {2023},
  month = {Oct},
  publisher = {American Physical Society},
  doi = {10.1103/PhysRevD.108.085025},
  url = {https://link.aps.org/doi/10.1103/PhysRevD.108.085025}
}

@inbook{Fredenhagen2015utr,
    author = "Fredenhagen, Klaus",
    editor = "Brunetti, Romeo and Dappiaggi, Claudio and Fredenhagen, Klaus and Yngvason, Jakob",
    title = "{An Introduction to Algebraic Quantum Field Theory}",
    booktitle = "{Advances in algebraic quantum field theory}",
    doi = "10.1007/978-3-319-21353-8_1",
    pages = "1--30",
    year = "2015"
}

@Article{Cong2019,
author={Cong, Wan
and Tjoa, Erickson
and Mann, Robert B.},
title={Entanglement harvesting with moving mirrors},
journal={J. High Energy Phys.},
year={2019},
month={Jun},
day={07},
volume={2019},
number={6},
pages={21},
issn={1029-8479},
doi={10.1007/JHEP06(2019)021},
url={https://doi.org/10.1007/JHEP06(2019)021}
}

@article{RalphOlson2,
  title = {Extraction of timelike entanglement from the quantum vacuum},
  author = {Olson, S. Jay and Ralph, Timothy C.},
  journal = {Phys. Rev. A},
  volume = {85},
  issue = {1},
  pages = {012306},
  numpages = {7},
  year = {2012},
  month = {Jan},
  publisher = {American Physical Society},
  doi = {10.1103/PhysRevA.85.012306},
  url = {https://link.aps.org/doi/10.1103/PhysRevA.85.012306}
}

@article{hectorMass,
  title = {Entanglement harvesting: Detector gap and field mass optimization},
  author = {Maeso-Garc\'{\i}a, H\'ector and Perche, T. Rick and Mart\'{\i}n-Mart\'{\i}nez, Eduardo},
  journal = {Phys. Rev. D},
  volume = {106},
  issue = {4},
  pages = {045014},
  numpages = {15},
  year = {2022},
  month = {Aug},
  publisher = {American Physical Society},
  doi = {10.1103/PhysRevD.106.045014},
  url = {https://link.aps.org/doi/10.1103/PhysRevD.106.045014}
}

@article{Pozas2016,
  title = {Entanglement harvesting from the electromagnetic vacuum with hydrogenlike atoms},
  author = {Pozas-Kerstjens, Alejandro and Mart\'{i}n-Mart\'{i}nez, Eduardo},
  journal = {Phys. Rev. D},
  volume = {94},
  issue = {6},
  pages = {064074},
  numpages = {27},
  year = {2016},
  month = {Sep},
  publisher = {American Physical Society},
  doi = {10.1103/PhysRevD.94.064074},
  url = {https://link.aps.org/doi/10.1103/PhysRevD.94.064074}
}

@article{NickEdu2014,
	doi = {10.1088/0264-9381/31/21/214001},
	url = {https://doi.org/10.1088/0264-9381/31/21/214001},
	year = 2014,
	month = {oct},
	publisher = {{IOP} Publishing},
	volume = {31},
	number = {21},
	pages = {214001},
	author = {Eduardo Mart{\'{\i}}n-Mart{\'{\i}}nez and Nicolas C Menicucci},
	title = {Entanglement in curved spacetimes and cosmology},
	journal = {Class. Quantum Gravity}
}

@article{Pozas-Kerstjens:2015,
	Author = {Pozas-Kerstjens, Alejandro and Mart\'{i}n-Mart\'{i}nez, Eduardo},
	Doi = {10.1103/PhysRevD.92.064042},
	Issue = {6},
	Journal = {Phys. Rev. D},
	Month = {Sep},
	Numpages = {18},
	Pages = {064042},
	Publisher = {American Physical Society},
	Title = {Harvesting correlations from the quantum vacuum},
	Url = {http://link.aps.org/doi/10.1103/PhysRevD.92.064042},
	Volume = {92},
	Year = {2015}}

@article{reznik2,
  title = {Long-range entanglement in the {D}irac vacuum},
  author = {Silman, J. and Reznik, B.},
  journal = {Phys. Rev. A},
  volume = {75},
  issue = {5},
  pages = {052307},
  numpages = {5},
  year = {2007},
  month = {May},
  publisher = {American Physical Society},
  doi = {10.1103/PhysRevA.75.052307},
  url = {https://link.aps.org/doi/10.1103/PhysRevA.75.052307}
}

@article{HarvestingBHLaura,
	doi = {10.1088/1361-6382/aae27e},
	url = {https://doi.org/10.1088%2F1361-6382%2Faae27e},
	year = 2018,
	month = {oct},
	publisher = {{IOP} Publishing},
	volume = {35},
	number = {21},
	pages = {21LT02},
	author = {Laura J Henderson and Robie A Hennigar and Robert B Mann and Alexander R H Smith and Jialin Zhang},
	title = {Harvesting entanglement from the black hole vacuum},
	journal = {Class. Quantum Gravity}
}

@article{Reznik1,
	Author = {Benni Reznik and Alex Retzker and Jonathan Silman},
	Eid = {042104},
	Journal = {Phys. Rev. A},
	Number = {4},
	Numpages = {4},
	Pages = {042104},
	Publisher = {APS},
	Title = {{Violating Bell's inequalities in vacuum}},
	Url = {http://link.aps.org/abstract/PRA/v71/e042104},
	Volume = {71},
	Year = {2005}}

@article{Ng1,
  title = {New techniques for entanglement harvesting in flat and curved spacetimes},
  author = {Ng, Keith K. and Mann, Robert B. and Mart\'{\i}n-Mart\'{\i}nez, Eduardo},
  journal = {Phys. Rev. D},
  volume = {97},
  issue = {12},
  pages = {125011},
  numpages = {8},
  year = {2018},
  month = {Jun},
  publisher = {American Physical Society},
  doi = {10.1103/PhysRevD.97.125011},
  url = {https://link.aps.org/doi/10.1103/PhysRevD.97.125011}
}

@article{Valentini1991,
	Author = {Antony Valentini},
	Doi = {http://dx.doi.org/10.1016/0375-9601(91)90952-5},
	Issn = {0375-9601},
	Journal = {Phys. Lett. A},
	Number = {6-7},
	Pages = {321 - 325},
	Title = {Non-local correlations in quantum electrodynamics},
	Url = {http://www.sciencedirect.com/science/article/pii/0375960191909525},
	Volume = {153},
	Year = {1991}}

@article{Farming,
	Author = {Mart\'{i}n-Mart\'{i}nez, Eduardo and Brown, Eric G. and Donnelly, William and Kempf, Achim},
	Doi = {10.1103/PhysRevA.88.052310},
	Issue = {5},
	Journal = {Phys. Rev. A},
	Month = {Nov},
	Numpages = {15},
	Pages = {052310},
	Publisher = {American Physical Society},
	Title = {Sustainable entanglement production from a quantum field},
	Url = {http://link.aps.org/doi/10.1103/PhysRevA.88.052310},
	Volume = {88},
	Year = {2013}}

@article{witten,
  title = {APS Medal for Exceptional Achievement in Research: Invited article on entanglement properties of quantum field theory},
  author = {Witten, Edward},
  journal = {Rev. Mod. Phys.},
  volume = {90},
  issue = {4},
  pages = {045003},
  numpages = {38},
  year = {2018},
  month = {Oct},
  publisher = {American Physical Society},
  doi = {10.1103/RevModPhys.90.045003},
  url = {https://link.aps.org/doi/10.1103/RevModPhys.90.045003}
}

@article{Unruh-Wald,
	Author = {Unruh, William G. and Wald, Robert M.},
	Doi = {10.1103/PhysRevD.29.1047},
	Issue = {6},
	Journal = {Phys. Rev. D},
	Month = {Mar},
	Pages = {1047--1056},
	Publisher = {American Physical Society},
	Title = {{What happens when an accelerating observer detects a Rindler particle}},
	Volume = {29},
	Year = {1984}
}

@article{richard,
  title = {Quantum delocalization, gauge, and quantum optics: Light-matter interaction in relativistic quantum information},
  author = {Lopp, Richard and Mart\'{i}n-Mart\'{i}nez, Eduardo},
  journal = {Phys. Rev. A},
  volume = {103},
  issue = {1},
  pages = {013703},
  numpages = {20},
  year = {2021},
  month = {Jan},
  doi = {10.1103/PhysRevA.103.013703},
  url = {https://link.aps.org/doi/10.1103/PhysRevA.103.013703}
}

@article{HottaDistance,
  title = {Quantum energy teleportation without a limit of distance},
  author = {Hotta, Masahiro and Matsumoto, Jiro and Yusa, Go},
  journal = {Phys. Rev. A},
  volume = {89},
  issue = {1},
  pages = {012311},
  numpages = {6},
  year = {2014},
  month = {Jan},
  publisher = {American Physical Society},
  doi = {10.1103/PhysRevA.89.012311},
  url = {https://link.aps.org/doi/10.1103/PhysRevA.89.012311}
}

@article{nichoTeleport,
  title = {Engineering negative stress-energy densities with quantum energy teleportation},
  author = {Funai, Nicholas and Mart\'{\i}n-Mart\'{\i}nez, Eduardo},
  journal = {Phys. Rev. D},
  volume = {96},
  issue = {2},
  pages = {025014},
  numpages = {16},
  year = {2017},
  month = {Jul},
  publisher = {American Physical Society},
  doi = {10.1103/PhysRevD.96.025014},
  url = {https://link.aps.org/doi/10.1103/PhysRevD.96.025014}
}

@article{teleportation,
  title = {Quantum measurement information as a key to energy extraction from local vacuums},
  author = {Hotta, Masahiro},
  journal = {Phys. Rev. D},
  volume = {78},
  issue = {4},
  pages = {045006},
  numpages = {9},
  year = {2008},
  month = {Aug},
  publisher = {American Physical Society},
  doi = {10.1103/PhysRevD.78.045006},
  url = {https://link.aps.org/doi/10.1103/PhysRevD.78.045006}
}

@misc{Hotta2011,
  title={{Q}uantum {E}nergy {T}eleportation: {A}n {I}ntroductory {R}eview}, 
  author={Masahiro Hotta},
  year={2011},
  eprint={1101.3954},
  archivePrefix={arXiv},
  primaryClass={quant-ph}
}

@article{carol,
  title = {Harvesting entanglement from complex scalar and fermionic fields with linearly coupled particle detectors},
  author = {Perche, T. Rick and Lima, Caroline and Mart\'{\i}n-Mart\'{\i}nez, Eduardo},
  journal = {Phys. Rev. D},
  volume = {105},
  issue = {6},
  pages = {065016},
  numpages = {24},
  year = {2022},
  month = {Mar},
  publisher = {American Physical Society},
  doi = {10.1103/PhysRevD.105.065016},
  url = {https://link.aps.org/doi/10.1103/PhysRevD.105.065016}
}

@article{pitelli,
  title = {Angular momentum based graviton detector},
  author = {Pitelli, J. P. M. and Perche, T. Rick},
  journal = {Phys. Rev. D},
  volume = {104},
  issue = {6},
  pages = {065016},
  numpages = {9},
  year = {2021},
  month = {Sep},
  publisher = {American Physical Society},
  doi = {10.1103/PhysRevD.104.065016},
  url = {https://link.aps.org/doi/10.1103/PhysRevD.104.065016}
}

@article{B,
  title = {Spin Entanglement Witness for Quantum Gravity},
  author = {Bose, Sougato and Mazumdar, Anupam and Morley, Gavin W. and Ulbricht, Hendrik and Toro\ifmmode \check{s}\else \v{s}\fi{}, Marko and Paternostro, Mauro and Geraci, Andrew A. and Barker, Peter F. and Kim, M. S. and Milburn, Gerard},
  journal = {Phys. Rev. Lett.},
  volume = {119},
  issue = {24},
  pages = {240401},
  numpages = {6},
  year = {2017},
  month = {Dec},
  publisher = {American Physical Society},
  doi = {10.1103/PhysRevLett.119.240401},
  url = {https://link.aps.org/doi/10.1103/PhysRevLett.119.240401}
}

@article{areaLaw1993,
  title = {Entropy and area},
  author = {Srednicki, Mark},
  journal = {Phys. Rev. Lett.},
  volume = {71},
  issue = {5},
  pages = {666--669},
  numpages = {0},
  year = {1993},
  month = {Aug},
  publisher = {American Physical Society},
  doi = {10.1103/PhysRevLett.71.666},
  url = {https://link.aps.org/doi/10.1103/PhysRevLett.71.666}
}

@article{areaLawReview2010,
  title = {Colloquium: Area laws for the entanglement entropy},
  author = {Eisert, J. and Cramer, M. and Plenio, M. B.},
  journal = {Rev. Mod. Phys.},
  volume = {82},
  issue = {1},
  pages = {277--306},
  numpages = {0},
  year = {2010},
  month = {Feb},
  publisher = {American Physical Society},
  doi = {10.1103/RevModPhys.82.277},
  url = {https://link.aps.org/doi/10.1103/RevModPhys.82.277}
}

@Article{kelly,
author={de S. L. Torres, Bruno
and Wurtz, Kelly
and Polo-G{\'o}mez, Jos{\'e}
and Mart{\'i}n-Mart{\'i}nez, Eduardo},
title={Entanglement structure of quantum fields through local probes},
journal={J. High Energy Phys.},
year={2023},
month={May},
day={08},
volume={2023},
number={5},
pages={58},
issn={1029-8479},
doi={10.1007/JHEP05(2023)058},
url={https://doi.org/10.1007/JHEP05(2023)058}
}

@article{quantClass,
  title = {Role of quantum degrees of freedom of relativistic fields in quantum information protocols},
  author = {Perche, T. Rick and Mart\'{\i}n-Mart\'{\i}nez, Eduardo},
  journal = {Phys. Rev. A},
  volume = {107},
  issue = {4},
  pages = {042612},
  numpages = {20},
  year = {2023},
  month = {Apr},
  publisher = {American Physical Society},
  doi = {10.1103/PhysRevA.107.042612},
  url = {https://link.aps.org/doi/10.1103/PhysRevA.107.042612}
}

@article{EricksonZero,
  title = {Vacuum entanglement harvesting with a zero mode},
  author = {Tjoa, Erickson and Mart\'{\i}n-Mart\'{\i}nez, Eduardo},
  journal = {Phys. Rev. D},
  volume = {101},
  issue = {12},
  pages = {125020},
  numpages = {10},
  year = {2020},
  month = {Jun},
  publisher = {American Physical Society},
  doi = {10.1103/PhysRevD.101.125020},
  url = {https://link.aps.org/doi/10.1103/PhysRevD.101.125020}
}

@article{max,
doi = {10.1088/1361-6382/ac1b08},
url = {https://dx.doi.org/10.1088/1361-6382/ac1b08},
year = {2021},
month = {sep},
publisher = {IOP Publishing},
volume = {38},
number = {19},
pages = {195029},
author = {Maximilian H Ruep},
title = {Weakly coupled local particle detectors cannot harvest entanglement},
journal = {Class. Quantum Gravity}
}

@misc{sergiHarvesting,
      title={Bipartite entanglement harvesting with multiple detectors}, 
      author={Santeri Salomaa and Esko Keski-Vakkuri and Sergi Nadal-Gisbert},
      year={2026},
      eprint={2604.13869},
      archivePrefix={arXiv},
      primaryClass={quant-ph},
      url={https://arxiv.org/abs/2604.13869}, 
}

@misc{EHEapp,
    title = {Entanglement {H}arvesting {E}xplorer},
    author = {Morote-Balboa, Marcos and Perche, T. Rick},
    year = {2026},
    url = {https://entanglement-harvesting-explorer.streamlit.app/},
    note = {Online tool for computing negativity and downloading numerical data.}
}

@article{FullyRelativisticEH,
  title = {Fully relativistic entanglement harvesting},
  author = {Perche, T. Rick and Polo-G\'omez, Jos\'e and Torres, Bruno de S. L. and Mart\'{\i}n-Mart\'{\i}nez, Eduardo},
  journal = {Phys. Rev. D},
  volume = {109},
  issue = {4},
  pages = {045018},
  numpages = {17},
  year = {2024},
  month = {Feb},
  publisher = {American Physical Society},
  doi = {10.1103/PhysRevD.109.045018},
  url = {https://link.aps.org/doi/10.1103/PhysRevD.109.045018}
}

@article{closedform2024,
  title = {Closed-form expressions for smeared bidistributions of a massless scalar field: Nonperturbative and asymptotic results in relativistic quantum information},
  author = {Perche, T. Rick},
  journal = {Phys. Rev. D},
  volume = {110},
  issue = {2},
  pages = {025013},
  numpages = {21},
  year = {2024},
  month = {Jul},
  publisher = {American Physical Society},
  doi = {10.1103/PhysRevD.110.025013},
  url = {https://link.aps.org/doi/10.1103/PhysRevD.110.025013}
}

@article{Horodecki_1996,
   title={Separability of mixed states: necessary and sufficient conditions},
   volume={223},
   ISSN={0375-9601},
   url={http://dx.doi.org/10.1016/S0375-9601(96)00706-2},
   DOI={10.1016/s0375-9601(96)00706-2},
   number={1–2},
   journal={Physics Letters A},
   publisher={Elsevier BV},
   author={Horodecki, Michał and Horodecki, Paweł and Horodecki, Ryszard},
   year={1996},
   month=nov, pages={1–8} }

@article{Peres_1996,
   title={Separability Criterion for Density Matrices},
   volume={77},
   ISSN={1079-7114},
   url={http://dx.doi.org/10.1103/PhysRevLett.77.1413},
   DOI={10.1103/physrevlett.77.1413},
   number={8},
   journal={Physical Review Letters},
   publisher={American Physical Society (APS)},
   author={Peres, Asher},
   year={1996},
   month=aug, pages={1413–1415} }

@article{RelQuantRevival2010,
  title = {Relativistic Quantum Revivals},
  author = {Strange, P.},
  journal = {Phys. Rev. Lett.},
  volume = {104},
  issue = {12},
  pages = {120403},
  numpages = {4},
  year = {2010},
  month = {Mar},
  publisher = {American Physical Society},
  doi = {10.1103/PhysRevLett.104.120403},
  url = {https://link.aps.org/doi/10.1103/PhysRevLett.104.120403}
}

@article{RevivalReview2004,
title = {Quantum wave packet revivals},
journal = {Physics Reports},
volume = {392},
number = {1},
pages = {1-119},
year = {2004},
issn = {0370-1573},
doi = {https://doi.org/10.1016/j.physrep.2003.11.002},
url = {https://www.sciencedirect.com/science/article/pii/S0370157303004381},
author = {R.W. Robinett}
}

@inbook{RevivalJC1993, place={Cambridge}, title={The Jaynes–Cummings Revival}, booktitle={Physics and Probability: Essays in Honor of Edwin T. Jaynes}, publisher={Cambridge University Press}, author={Shore, B.W. and Knight, P.L.}, editor={Grandy, Jr, W. T. and Milonni, P. W.Editors}, year={1993}, pages={15–32}}

@misc{borisTeleportRev,
      title={A review of applications of Quantum Energy Teleportation: from experimental tests to thermodynamics and spacetime engineering}, 
      author={Boris Ragula and Eduardo Martín-Martínez},
      year={2025},
      eprint={2505.04689},
      archivePrefix={arXiv},
      primaryClass={quant-ph},
      url={https://arxiv.org/abs/2505.04689}, 
}

@article{patriciaAndI,
  title = {Multimode nature of spacetime entanglement in QFT},
  author = {Agullo, Ivan and Bonga, B\'eatrice and Mart\'{\i}n-Mart\'{\i}nez, Eduardo and Nadal-Gisbert, Sergi and Perche, T. Rick and Polo-G\'omez, Jos\'e and Ribes-Metidieri, Patricia and Torres, Bruno de S. L.},
  journal = {Phys. Rev. D},
  volume = {111},
  issue = {8},
  pages = {085013},
  numpages = {23},
  year = {2025},
  month = {Apr},
  publisher = {American Physical Society},
  doi = {10.1103/PhysRevD.111.085013},
  url = {https://link.aps.org/doi/10.1103/PhysRevD.111.085013}
}

@Article{CongHorizons2020,
author={Cong, Wan
and Qian, Chen
and Good, Michael R.R.
and Mann, Robert B.},
title={Effects of horizons on entanglement harvesting},
journal={J. High Energy Phys.},
year={2020},
month={Oct},
day={12},
volume={2020},
number={10},
pages={67},
issn={1029-8479},
doi={10.1007/JHEP10(2020)067},
url={https://doi.org/10.1007/JHEP10(2020)067}
}

@article{adrianQuenchedHarv2025,
  title = {Quenched entanglement harvesting},
  author = {Lopez-Raven, Adrian and Mann, Robert B. and Louko, Jorma},
  journal = {Phys. Rev. D},
  volume = {112},
  issue = {8},
  pages = {085001},
  numpages = {10},
  year = {2025},
  month = {Oct},
  publisher = {American Physical Society},
  doi = {10.1103/lyhy-ftxz},
  url = {https://link.aps.org/doi/10.1103/lyhy-ftxz}
}

@article{plenio,
  title = {Logarithmic Negativity: A Full Entanglement Monotone That is not Convex},
  author = {Plenio, M. B.},
  journal = {Phys. Rev. Lett.},
  volume = {95},
  issue = {9},
  pages = {090503},
  numpages = {4},
  year = {2005},
  month = {Aug},
  publisher = {American Physical Society},
  doi = {10.1103/PhysRevLett.95.090503},
  url = {https://link.aps.org/doi/10.1103/PhysRevLett.95.090503}
}

@article{finiteEnt2009,
  title = {Critical and noncritical long-range entanglement in Klein-Gordon fields},
  author = {Marcovitch, S. and Retzker, A. and Plenio, M. B. and Reznik, B.},
  journal = {Phys. Rev. A},
  volume = {80},
  issue = {1},
  pages = {012325},
  numpages = {4},
  year = {2009},
  month = {Jul},
  publisher = {American Physical Society},
  doi = {10.1103/PhysRevA.80.012325},
  url = {https://link.aps.org/doi/10.1103/PhysRevA.80.012325}
}

@article{KlcoUVIR,
  title = {Entanglement Spheres and a {UV-IR} Connection in Effective Field Theories},
  author = {Klco, Natalie and Savage, Martin J.},
  journal = {Phys. Rev. Lett.},
  volume = {127},
  issue = {21},
  pages = {211602},
  numpages = {7},
  year = {2021},
  month = {Nov},
  publisher = {American Physical Society},
  doi = {10.1103/PhysRevLett.127.211602},
  url = {https://link.aps.org/doi/10.1103/PhysRevLett.127.211602}
}

@article{KlcoEntAllDist,
  title = {Detecting spacelike vacuum entanglement at all distances and promoting negativity to a necessary and sufficient entanglement measure in many-body regimes},
  author = {Gao, Boyu and Klco, Natalie},
  journal = {Phys. Rev. A},
  volume = {112},
  issue = {1},
  pages = {012430},
  numpages = {13},
  year = {2025},
  month = {Jul},
  publisher = {American Physical Society},
  doi = {10.1103/m9w1-ppqz},
  url = {https://link.aps.org/doi/10.1103/m9w1-ppqz}
}

@article{embezzlers2024,
  title = {Relativistic Quantum Fields Are Universal Entanglement Embezzlers},
  author = {van Luijk, Lauritz and Stottmeister, Alexander and Werner, Reinhard F. and Wilming, Henrik},
  journal = {Phys. Rev. Lett.},
  volume = {133},
  issue = {26},
  pages = {261602},
  numpages = {8},
  year = {2024},
  month = {Dec},
  publisher = {American Physical Society},
  doi = {10.1103/PhysRevLett.133.261602},
  url = {https://link.aps.org/doi/10.1103/PhysRevLett.133.261602}
}

@article{Layden_Martin-Martinez_Kempf_2016, title={Universal scheme for indirect quantum control}, volume={93}, ISSN={2469-9926, 2469-9934}, DOI={10.1103/PhysRevA.93.040301}, number={4}, journal={Phys. Rev. A}, author={Layden, David and Martin-Martinez, Eduardo and Kempf, Achim}, year={2016}, pages={040301} }

@article{Martin-Martinez_Sutherland_2014, title={Quantum gates via relativistic remote control}, volume={739}, ISSN={0370-2693}, DOI={10.1016/j.physletb.2014.10.038}, journal={Phys. Lett. B}, author={Martín-Martínez, Eduardo and Sutherland, Chris}, year={2014}, pages={74–82} }

@article{Martin-Martinez_Aasen_Kempf_2013, title={Processing quantum information with relativistic motion of atoms}, volume={110}, ISSN={0031-9007, 1079-7114}, DOI={10.1103/PhysRevLett.110.160501}, number={16}, journal={Phys. Rev. Lett.}, author={Martin-Martinez, Eduardo and Aasen, David and Kempf, Achim}, year={2013}, pages={160501} }

@article{beyondPertDet2013,
  title = {Detectors for probing relativistic quantum physics beyond perturbation theory},
  author = {Brown, Eric G. and Mart\'{\i}n-Mart\'{\i}nez, Eduardo and Menicucci, Nicolas C. and Mann, Robert B.},
  journal = {Phys. Rev. D},
  volume = {87},
  issue = {8},
  pages = {084062},
  numpages = {19},
  year = {2013},
  month = {Apr},
  publisher = {American Physical Society},
  doi = {10.1103/PhysRevD.87.084062},
  url = {https://link.aps.org/doi/10.1103/PhysRevD.87.084062}
}

@article{PetarNonpertEH,
  title = {Nonperturbative analysis of entanglement harvesting from coherent field states},
  author = {Simidzija, Petar and Mart\'{\i}n-Mart\'{\i}nez, Eduardo},
  journal = {Phys. Rev. D},
  volume = {96},
  issue = {6},
  pages = {065008},
  numpages = {19},
  year = {2017},
  month = {Sep},
  publisher = {American Physical Society},
  doi = {10.1103/PhysRevD.96.065008},
  url = {https://link.aps.org/doi/10.1103/PhysRevD.96.065008}
}

@book{ReedSimon1,
  author    = {Reed, Michael and Simon, Barry},
  title     = {Methods of Modern Mathematical Physics. I: Functional Analysis},
  publisher = {Academic Press},
  address   = {New York},
  year      = {1980}
}

@book{Kato,
  author    = {Kato, Tosio},
  title     = {Perturbation Theory for Linear Operators},
  edition   = {2},
  publisher = {Springer},
  address   = {Berlin},
  year      = {1976}
}

@misc{SERCMEE,
  author = {Morote-Balboa, Marcos and Mart\'in-Mart\'inez, Eduardo and Rick Perche, T.},
  title = {(in preparation)}
}

@article{MO_RbK_loss,
  title = {Universal Three-Body Physics in Ultracold {KRb} Mixtures},
  author = {Wacker, L. J. and J\o{}rgensen, N. B. and Birkmose, D. and Winter, N. and Mikkelsen, M. and Sherson, J. and Zinner, N. and Arlt, J. J.},
  journal = {Phys. Rev. Lett.},
  volume = {117},
  issue = {16},
  pages = {163201},
  numpages = {6},
  year = {2016},
  month = {Oct},
  publisher = {American Physical Society},
  doi = {10.1103/PhysRevLett.117.163201},
  url = {https://link.aps.org/doi/10.1103/PhysRevLett.117.163201}
}

\appendix

\onecolumngrid 

\section{Computation of the Matrix elements}\label{app:matrix}
In this appendix, we derive Eq.~\eqref{eq:Pnm} and use it to compute the matrices of the different propagators at play, $G_\tc{ab}^{\sct{++}}$, $W_\tc{aa}^{\sct{-+}}$, $W_\tc{bb}^{\sct{-+}}$ and $\Delta_\tc{ab}^{\sct{++}}$. 

In Eq.~\eqref{eq:coefficients-polynomial-development}, we define $P_{nm}$ as:
\begin{equation}
    P_{nm} \equiv P ({\Lambda}^\pm_{i, n},{\Lambda}^\pm_{j, m}).
\end{equation}
Writing this explicitly,
\begin{align}\label{app:Pnm-app}
    P_{nm}&= \int \dd V \dd V' {\Lambda}^\pm_{i, n}(\xx) {\Lambda}^\pm_{j, m} (\xx') P(\xx, \xx')\nonumber \\ \nonumber \\
    &= \int \dd V \dd V' \ee^{\pm \ii \Omega t}{\Lambda}_{i, n}(\xx) \ee^{\pm \ii \Omega t'}{\Lambda}_{j, m} (\xx') P(\xx, \xx'), \quad  \text{using Eq.~\eqref{eq:lambda-ring-pm},}\nonumber \\ \nonumber \\
    & = \int \dd V \dd V' \ee^{\pm \ii \Omega t}\ee^{\pm \ii \Omega t'} h_n(t,T) h_m(t',T) F(\bm x) F(\bm x') P(\xx, \xx'),\quad \text{using ${\Lambda}_{i,n}(\mf x) = h_n(t,T)F(\bm x)$}\nonumber \\ \nonumber \\
    & = \int \dd V \dd V' \ee^{\pm \ii \Omega t}\ee^{\pm \ii \Omega t'} h_n(t,T) h_m(t',T)  \frac{\ee^{- \frac{|\bm x|^2}{2\sigma^2}}}{(2\pi \sigma^2)^{3/2}} \frac{\ee^{- \frac{|\bm x' - \bm L|^2}{2\sigma^2}}}{(2\pi \sigma^2)^{3/2}}P(\xx, \xx'),\quad \text{using Eq.~\eqref{eq:choice-F}}\nonumber \\ \nonumber \\
    & = \int \dd V \dd V' \ee^{\pm \ii \Omega t}\ee^{\pm \ii \Omega t'} \frac{\pi^{-1/2}}{ \sqrt{2^{n+m} n! m!}T} H_n(t/T) \, \ee^{-\frac{t^2}{2T^2}}  H_m(t'/T) \, \ee^{-\frac{t'^2}{2T^2}}  \frac{\ee^{- \frac{|\bm x|^2}{2\sigma^2}}}{(2\pi \sigma^2)^{3/2}} \frac{\ee^{- \frac{|\bm x' - \bm L|^2}{2\sigma^2}}}{(2\pi \sigma^2)^{3/2}}P(\xx, \xx'),\quad \text{using Eq.~\eqref{eq:h(t)}}\nonumber \\ \nonumber \\
    & = \frac{\pi^{-1/2}}{\sqrt{2^{n+m} n! m!}\,T} H_n \bigg(\frac{\dd}{\dd \alpha} \bigg) H_m \bigg(\frac{\dd}{\dd \beta} \bigg) \int \dd V \dd V' \ee^{\pm \ii \Omega t}\ee^{\pm \ii \Omega t'} \ee^{\alpha t} \ee^{\beta t} \ee^{-\frac{t^2}{2T^2}}   \ee^{-\frac{t'^2}{2T^2}}  \frac{\ee^{- \frac{|\bm x|^2}{2\sigma^2}}}{(2\pi \sigma^2)^{3/2}} \frac{\ee^{- \frac{|\bm x' - \bm L|^2}{2\sigma^2}}}{(2\pi \sigma^2)^{3/2}}P(\xx, \xx')\nonumber \\ \nonumber \\
    &\equiv  \frac{\pi^{-1/2}}{\sqrt{2^{n+m} n! m!}\, T} H_n \bigg(\frac{\dd}{\dd \alpha} \bigg) H_m \bigg(\frac{\dd}{\dd \beta} \bigg) P (\Lambda_\alpha^\pm,\Lambda_\beta^\pm),
\end{align}
where we used Eq.~\eqref{eq:feynman-trick} in the last equality. The next step is to then find $P (\Lambda_\alpha^\pm,\Lambda_\beta^\pm)$ for the propagators we are interested in,  $H (\Lambda_\alpha^+,\Lambda_\beta^+)$, $\Delta (\Lambda_\alpha^+,\Lambda_\beta^+)$ and $W (\Lambda_\alpha^-,\Lambda_\beta^+)|_{\bm L = 0}$, and then compute their different derivatives with respect to $\alpha$ and $\beta$, which we will do in App.~\ref{app:leibniz}.
To find the corresponding expressions, we use the results of~\cite{closedform2024}, where, with the spatial and temporal smearing functions
\begin{equation}\label{eq:rick-switching}
    F_i(\bm x) =  \frac{\ee^{- \frac{|\bm x - \bm x_i|^2}{2\sigma_i^2}}}{(2\pi \sigma^2)^{3/2}}, \quad \quad \chi_i(t) = \ee^{- \frac{(t-t_i)^2}{2T_i^2}} \ee^{\ii \Omega_i t},
\end{equation}
the following closed-form expressions for the Hadamard, symmetric and Wightman propagators were found:
\begin{align}
    H(f_1,f_2) &= \frac{T_1T_2\ee^{\ii (\Omega_1 t_1+\Omega_2 t_2)}\ee^{- \frac{\Omega_2^2 T_2^2}{2} - \frac{\Omega_1^2 T_1^2}{2}}}{2\sqrt{2\pi}|\bm L|\sqrt{T_1^2 + T_2^2 + \sigma_1^2 + \sigma_2^2}} \Bigg[\ee^{- \frac{(t_0 - |\bm L|+\ii (\Omega_1 T_1^2- \Omega_2 T_2^2))^2}{2(T_1^2 + T_2^2 + \sigma_1^2 + \sigma_2^2)}}\text{erfi}\left(\frac{|\bm L|-t_0-\ii (\Omega_1 T_1^2- \Omega_2 T_2^2)}{\sqrt{2}\sqrt{T_1^2 + T_2^2 + \sigma_1^2 + \sigma_2^2}}\right)\nonumber\\
     &\:\:\:\:\:\:\:\:\:\:\:\:\:\:\:\:\:\:\:\:\:\:\:\:\:\:\:\:\:\:\:\:\:\:\:\:\:\:\:\:\:\:\:\:\:\:\:\:\:\:\:\:\:\:\:\:\:\:\:\:\:\:\:\:\:\:\:\:\:\:\:\:\:\:\:\:\:\:\:+\ee^{- \frac{(t_0 + |\bm L|+\ii (\Omega_1 T_1^2- \Omega_2 T_2^2))^2}{2(T_1^2 + T_2^2 + \sigma_1^2 + \sigma_2^2)}}\text{erfi}\left(\frac{|\bm L|+t_0+\ii (\Omega_1 T_1^2- \Omega_2 T_2^2)}{\sqrt{2}\sqrt{T_1^2 + T_2^2 + \sigma_1^2 + \sigma_2^2}}\right)\Bigg],\\ \nonumber\\
    \Delta(f_1,f_2) &= - \frac{T_1T_2\ee^{\ii (\Omega_1 t_1+\Omega_2 t_2)}\ee^{- \frac{\Omega_2^2 T_2^2}{2} - \frac{\Omega_1^2 T_1^2}{2}}}{2\sqrt{2\pi}|\bm L|\sqrt{T_1^2 + T_2^2 + 2\sigma^2}}\Bigg[\ee^{- \frac{(t_0 - |\bm L|+\ii (\Omega_1 T_1^2- \Omega_2 T_2^2))^2}{2(T_1^2 + T_2^2 + 2\sigma^2)}} \text{erf}\left(\tfrac{|\bm L|(T_1^2+T_2^2)+2\sigma^2(t_0+\ii (\Omega_1 T_1^2- \Omega_2 T_2^2))}{2\sigma\sqrt{T_1^2 + T_2^2}\sqrt{T_1^2 + T_2^2 + 2\sigma^2}}\right) \nonumber\\
    &\:\:\:\:\:\:\:\:\:\:\:+\ee^{- \frac{(t_0 + |\bm L|+\ii (\Omega_1 T_1^2- \Omega_2 T_2^2))^2}{2(T_1^2 + T_2^2 + 2\sigma^2)}}\text{erf}\left(\tfrac{|\bm L|(T_1^2+T_2^2) - 2\sigma^2(t_0+\ii (\Omega_1 T_1^2- \Omega_2 T_2^2))}{2\sigma\sqrt{T_1^2 + T_2^2}\sqrt{T_1^2 + T_2^2 + 2\sigma^2}}\right)\Bigg],\\ \nonumber\\
    W(f_1,f_2)&= \frac{T_1T_2\ee^{\ii (\Omega_1 t_1+\Omega_2 t_2)}\ee^{- \frac{\Omega_2^2 T_2^2}{2} - \frac{\Omega_1^2 T_1^2}{2}}}{4\sqrt{2\pi}|\bm L|\sqrt{T_1^2 + T_2^2 + \sigma_1^2 + \sigma_2^2}} \Bigg[\ee^{- \frac{(t_0 - |\bm L|+\ii (\Omega_1 T_1^2- \Omega_2 T_2^2))^2}{2(T_1^2 + T_2^2 + \sigma_1^2 + \sigma_2^2)}}\Bigg(-\ii - \text{erfi}\left(\tfrac{t_0 - |\bm L|+\ii (\Omega_1 T_1^2- \Omega_2 T_2^2)}{\sqrt{2}\sqrt{T_1^2 + T_2^2 + \sigma_1^2 + \sigma_2^2}}\right)\Bigg)\nonumber\\
     &\:\:\:\:\:\:\:\:\:\:\:\:\:\:\:\:\:\:\:\:\:\:\:\:\:\:\:\:\:\:\:\:\:\:\:\:\:\:\:\:\:\:\:\:\:\:\:\:\:\:\:\:\:\:\:\:\:\:\:\:\:\:\:\:\:\:\:\:\:\:\:\:\:\:\:\:\:\:\:+\ee^{- \frac{(t_0 + |\bm L|+\ii (\Omega_1 T_1^2- \Omega_2 T_2^2))^2}{2(T_1^2 + T_2^2 + \sigma_1^2 + \sigma_2^2)}}\Bigg(\ii + \text{erfi}\left(\tfrac{t_0 + |\bm L|+\ii (\Omega_1 T_1^2- \Omega_2 T_2^2)}{\sqrt{2}\sqrt{T_1^2 + T_2^2 + \sigma_1^2 + \sigma_2^2}}\right)\Bigg)\Bigg],
\end{align}
where $\bm L = \bm x_1 - \bm x_2$ and $t_0 = t_1 - t_2$.
Our goal now is to recover switching functions of the form used in Eq.~\eqref{eq:Pnm} from Eq.~\eqref{eq:rick-switching}. First, let us note that, in our case, we are considering identical point-like detectors that interact simultaneously in their comoving frame, and we can therefore set $\sigma_i  = 0$, $\bm x_1 = 0$, $\bm x_2 = \bm L$, and $t_i =  0$. We introduce the same time scale for the interactions by setting $T_i = T$. Then, for each propagator, we can recover the exponential factors $\ee^{\pm\alpha t},\ee^{\pm\beta t}$ by picking $\Omega_1$ and $\Omega_2$ appropriately. Indeed, to recover $H(\Lambda_\alpha^+,\Lambda_\beta^+)$ and $\Delta(\Lambda_\alpha^+,\Lambda_\beta^+)$, we set $\Omega_1 \equiv \Omega - \ii \alpha, \Omega_2 \equiv \Omega - \ii \beta$, and $W(\Lambda_\alpha^-,\Lambda_\beta^+)$ is obtained with $\Omega_1 \equiv -\Omega - \ii \alpha$ , $\Omega_2 \equiv \Omega - \ii \beta$, which gives
\begin{align}
   H (\Lambda_\alpha^+,\Lambda_\beta^+)&= \frac{T \ee^{-\Omega^2 T^2}}{4 |\bm L| \sqrt{\pi}} \ee^{\frac{1}{2} T^2 (\alpha^2+\beta^2)} \ee^{-\ii \Omega T^2 (\alpha+\beta)} \Bigg[\ee^{-\frac{(|\bm L|+(\alpha-\beta)T^2)^2}{4 T^2}} \text{erfi} \bigg(\frac{|\bm L|+(\alpha-\beta)^2 T^2}{2 T} \bigg)\nonumber \\
    &\quad \quad \quad \quad \quad \quad \quad \quad \quad+ \ee^{-\frac{(|\bm L|+(-\alpha+\beta)T^2)^2}{4 T^2}} \text{erfi} \bigg(\frac{|\bm L|+(-\alpha+\beta) T^2}{2 T} \bigg)\Bigg],\label{eq:generator-H-a-b} \\ \nonumber \\
\Delta (\Lambda_\alpha^+,\Lambda_\beta^+) &=  -\frac{T \ee^{-\frac{|\bm L|^2}{4 T^2}+T^4(\alpha+\beta-2\ii \Omega)^2}}{4 |\bm L| \sqrt{\pi}} \bigg(\ee^{\frac{1}{2} (\alpha - \beta) |\bm L| } + \ee^{-\frac{1}{2} (\alpha - \beta) |\bm L| }\bigg),\label{eq:generator-Delta-a-b}  \\ \nonumber \\
 W (\Lambda_\alpha^-,\Lambda_\beta^+)|_{\bm L = 0} &= \dfrac{ \ee^{ - T^2 \Omega^2}}{4 \pi} \ee^{\frac{1}{2} T^2(\alpha^2+\beta^2)}\ee^{\ii T^2 \Omega (\alpha-\beta)}  - \dfrac{T(\Omega-\ii \frac{\alpha-\beta}{2}) }{4\sqrt{\pi}} \ee^{\frac{1}{4} T^2 (\alpha+\beta)^2} \Bigg[1 - \text{erf} \bigg( T \Omega - \ii T \frac{\alpha-\beta}{2}\bigg) \Bigg].\label{eq:generator-W-a-b}
\end{align}

\section{Reducing the expressions for derivatives of products}\label{app:leibniz}

Even though the individual computations of the derivatives that give rise to the matrix components $P_{nm}$ can be done in closed form, implementing them numerically remains a computationally expensive task. In this appendix, we express the derivatives of $P(\alpha,\beta) = P(\Lambda_\alpha^\pm,\Lambda_\beta^\pm)$ in terms of derivatives of simpler functions that can be stored and efficiently computed numerically to reconstruct the matrix $P_{nm}$ as given by Eq.~\eqref{app:Pnm-app}.

We first perform the change of variables
\begin{equation}\label{eq:u-v-a-b}
    u = \alpha + \beta, \quad v = \alpha - \beta
\end{equation}
so that
\begin{equation}\label{eq:a-b-u-b}
    \alpha = \frac{u+v}{2}, \quad \beta = \frac{u-v}{2}, \quad \alpha^2+\beta^2 = \frac{u^2+v^2}{2}.
\end{equation}
We can then rewrite Eqs.~\eqref{eq:generator-H-a-b}-~\eqref{eq:generator-W-a-b} in terms of these new variables. \\
Let us start with $H$:
\begin{align}\label{eq:generators-u-v}
   H (u, v)&= \frac{T \ee^{-\Omega^2 T^2}}{4 L \sqrt{\pi}} \ee^{\frac{1}{4} T^2 (u^2+v^2)} \ee^{-\ii \Omega T^2 u} \Bigg[\ee^{-\frac{(L+vT^2)^2}{4 T^2}} \text{erfi} \bigg(\frac{L+v T^2}{2 T} \bigg)+ \ee^{-\frac{(L-vT^2)^2}{4 T^2}} \text{erfi} \bigg(\frac{L-v T^2}{2 T} \bigg)\Bigg] \nonumber  \\ \nonumber  \\ 
   &= \frac{T \ee^{-\Omega^2 T^2}}{4 L \sqrt{\pi}} \ee^{\frac{1}{4} T^2 u^2} \ee^{-\ii \Omega T^2 u} \Bigg[ \ee^{\frac{1}{4} T^2 v^2}\ee^{-\frac{(L+vT^2)^2}{4 T^2}} \text{erfi} \bigg(\frac{L+v T^2}{2 T} \bigg)+\ee^{\frac{1}{4} T^2 v^2} \ee^{-\frac{(L-vT^2)^2}{4 T^2}} \text{erfi} \bigg(\frac{L-v T^2}{2 T} \bigg)\Bigg] \nonumber  \\ \nonumber  \\
   &= F_H(u) [G_H(v)+G_H(-v)],
\end{align}
where we have defined:
\begin{equation}\label{eq:def-F-G}
    F_H(u)= \frac{T \ee^{\frac{1}{4} T^2 (u-2\ii \Omega)^2}}{4 L \sqrt{\pi}}, \quad G_H(v)=\ee^{\frac{-L^2}{4 T^2}}\ee^{-\frac{L v}{2}} \text{erfi} \bigg(\frac{L+v T^2}{2 T} \bigg).
\end{equation}
The different derivatives of $H(\alpha,\beta)$ are then given by:
\begin{equation}
    \frac{\partial^{i+j}}{\partial \alpha^i \partial \beta^j} H(\alpha,\beta) = \bigg( \frac{\partial}{\partial u} + \frac{\partial}{\partial v}\bigg)^i \bigg( \frac{\partial}{\partial u} - \frac{\partial}{\partial v}\bigg)^j F_H(u) [G_H(v)+G_H(-v)].
\end{equation}
We can now use the Leibniz rule to see how the operator acts on $F_H(u) G_H(v)$:
\begin{align}
   & \bigg( \frac{\partial}{\partial u} + \frac{\partial}{\partial v}\bigg)^i \bigg( \frac{\partial}{\partial u} - \frac{\partial}{\partial v}\bigg)^j \bigg( F_H(u) G_H(v) \bigg) = \bigg( \frac{\partial}{\partial u} + \frac{\partial}{\partial v}\bigg)^i \Bigg\{\bigg( \frac{\partial}{\partial u} - \frac{\partial}{\partial v}\bigg)^j \bigg( F_H(u) G_H(v) \bigg) \Bigg\} \nonumber \\ \nonumber \\
    &=\bigg( \frac{\partial}{\partial u} + \frac{\partial}{\partial v}\bigg)^i \Bigg\{ \sum_{p=0}^j \binom{j}{p} \bigg[ \bigg( \frac{\partial}{\partial u} - \frac{\partial}{\partial v}\bigg)^{j-p} F_H(u)\bigg] \bigg[ \bigg( \frac{\partial}{\partial u} - \frac{\partial}{\partial v}\bigg)^{p} G_H(v)\bigg] \Bigg\} \nonumber \\ \nonumber \\
    &= \bigg( \frac{\partial}{\partial u} + \frac{\partial}{\partial v}\bigg)^i \Bigg\{ \sum_{p=0}^j \binom{j}{p} \bigg[ \sum_{r=0}^{j-p} \binom{j-p}{r} (-1)^r \bigg(\frac{\partial}{\partial u} \bigg)^{j-p-r} \bigg(\frac{\partial}{\partial v} \bigg)^{r} F_H(u)\bigg] \bigg[ \sum_{s=0}^{p} \binom{p}{s} (-1)^s \bigg(\frac{\partial}{\partial u} \bigg)^{p-s} \bigg(\frac{\partial}{\partial v} \bigg)^{s} G_H(v)\bigg]  \Bigg\} \nonumber \\ \nonumber \\
    &= \bigg( \frac{\partial}{\partial u} + \frac{\partial}{\partial v}\bigg)^i \Bigg\{ \sum_{p=0}^j \binom{j}{p} \bigg[ \sum_{r=0}^{j-p} \binom{j-p}{r} (-1)^r \bigg(\frac{\partial}{\partial u} \bigg)^{j-p-r}\! \underbrace{\bigg(\frac{\partial}{\partial v} \bigg)^{r} F_H(u)}_{\not=0 \iff r=0}\bigg] \bigg[ \sum_{s=0}^{p} \binom{p}{s} (-1)^s  \bigg(\frac{\partial}{\partial v} \bigg)^{s} \!\!\underbrace{\bigg(\frac{\partial}{\partial u} \bigg)^{p-s} \!G_H(v)}_{\substack{\not=0 \iff p-s=0 \\ \iff s=p}}\bigg]  \Bigg\} \nonumber \\ \nonumber \\
    &= \bigg( \frac{\partial}{\partial u} + \frac{\partial}{\partial v}\bigg)^i \Bigg\{ \sum_{p=0}^j \binom{j}{p} \bigg[ \bigg(\frac{\partial}{\partial u} \bigg)^{j-p}  F_H(u)\bigg] \bigg[  (-1)^p  \bigg(\frac{\partial}{\partial v} \bigg)^{p} G_H(v)\bigg]  \Bigg\} \nonumber \\ \nonumber \\
    &=   \sum_{p=0}^j \binom{j}{p} \bigg( \frac{\partial}{\partial u} + \frac{\partial}{\partial v}\bigg)^i \bigg[ \bigg(\frac{\partial}{\partial u} \bigg)^{j-p}  F_H(u)\bigg] \bigg[  (-1)^p  \bigg(\frac{\partial}{\partial v} \bigg)^{p} G_H(v)\bigg]   \nonumber \\ \nonumber \\
    &=\sum_{p=0}^j \binom{j}{p} \sum_{q=0}^i \binom{i}{q} \Bigg\{ \bigg( \frac{\partial}{\partial u} + \frac{\partial}{\partial v}\bigg)^{i-q} \bigg[ \bigg(\frac{\partial}{\partial u} \bigg)^{j-p}  F_H(u)\bigg] \Bigg\} \Bigg\{ \bigg[  (-1)^p \bigg( \frac{\partial}{\partial u} + \frac{\partial}{\partial v}\bigg)^{q} \bigg(\frac{\partial}{\partial v} \bigg)^{p} G_H(v)\bigg]  \Bigg\} \nonumber \\ \nonumber \\
    &=\sum_{p=0}^j \sum_{q=0}^i \binom{j}{p} \binom{i}{q} (-1)^p \Bigg\{ \bigg[\sum_{r=0}^{i-q} \binom{i-q}{r} \bigg( \frac{\partial}{\partial u}\bigg)^{i-q-r} \bigg( \frac{\partial}{\partial v}\bigg)^{r} \bigg] \bigg[ \bigg(\frac{\partial}{\partial u} \bigg)^{j-p}  F_H(u)\bigg]\Bigg\} \nonumber \\ 
    & \quad \quad \quad \quad \quad \quad \quad \quad \quad \quad \quad \quad \quad \quad \quad \times  \left\{ \bigg[\sum_{s=0}^{q} \binom{q}{s} \bigg( \frac{\partial}{\partial u}\bigg)^{q-s} \bigg( \frac{\partial}{\partial v}\bigg)^{s} \bigg] \bigg[ \bigg(\frac{\partial}{\partial v} \bigg)^{p}  G_H(v)\bigg]\right\} \nonumber \\ \nonumber \\
    &=\sum_{p=0}^j \sum_{q=0}^i \binom{j}{p} \binom{i}{q} (-1)^p  \bigg[\sum_{r=0}^{i-q} \binom{i-q}{r} \bigg( \frac{\partial}{\partial u}\bigg)^{i+j-p-q-r} \bigg( \frac{\partial}{\partial v}\bigg)^{r} F_H(u)\bigg] \bigg[\sum_{s=0}^{q} \binom{q}{s} \bigg( \frac{\partial}{\partial u}\bigg)^{q-s} \bigg( \frac{\partial}{\partial v}\bigg)^{p+s}  G_H(v)\bigg] \nonumber \\ \nonumber \\
    &=\sum_{p=0}^j \sum_{q=0}^i \binom{j}{p} \binom{i}{q} (-1)^p  \bigg[\sum_{r=0}^{i-q} \binom{i-q}{r} \bigg( \frac{\partial}{\partial u}\bigg)^{i+j-p-q-r} \underbrace{\bigg( \frac{\partial}{\partial v}\bigg)^{r} F_H(u)}_{\not = 0 \iff r=0}\bigg] \bigg[\sum_{s=0}^{q} \binom{q}{s}  \bigg( \frac{\partial}{\partial v}\bigg)^{p+s} \underbrace{\bigg( \frac{\partial}{\partial u}\bigg)^{q-s}  G_H(v)}_{\not = 0 \iff s=q}\bigg] \nonumber \\ \nonumber \\
    &=\sum_{p=0}^j \sum_{q=0}^i \binom{j}{p} \binom{i}{q} (-1)^p  \bigg[ \bigg( \frac{\partial}{\partial u}\bigg)^{i+j-p-q}  F_H(u)\bigg] \bigg[  \bigg( \frac{\partial}{\partial v}\bigg)^{p+q}   G_H(v)\bigg].
\end{align}
At this stage, one only requires knowledge of the derivatives of the functions $F_H(u)$ and $G_H(v)$ to obtain the general derivatives of $H(\alpha,\beta)$.

Following the same procedure for $\Delta$ and $W$:
\begin{align}
     \frac{\partial^{i+j}}{\partial \alpha^i \partial \beta^j} \Delta(\alpha,\beta) &= \sum_{p=0}^i \sum_{q=0}^j \binom{i}{p} \binom{j}{q} (-1)^q (1+(-1)^{p+q} \bigg[\bigg( \frac{\partial}{\partial u}\bigg)^{i+j-p-q} F_\Delta (u) \bigg] \bigg[\bigg( \frac{\partial}{\partial v}\bigg)^{p+q} G_\Delta (v) \bigg], \\ \nonumber \\
     \frac{\partial^{i+j}}{\partial \alpha^i \partial \beta^j} W(\alpha,\beta) &= \sum_{p=0}^i \sum_{q=0}^j \binom{i}{p} \binom{j}{q} (-1)^q \bigg[\bigg( \frac{\partial}{\partial u}\bigg)^{i+j-p-q} F_W (u) \bigg] \bigg[\bigg( \frac{\partial}{\partial v}\bigg)^{p+q} G_W (v) \bigg],
\end{align}
with:
\begin{align}
    F_\Delta(u) &= -\frac{T \ee^{-\frac{L^2}{4T^2}}}{4L\sqrt{\pi}} \ee^{\frac{1}{4}T^2(u-2\ii \Omega)^2}, \quad G_\Delta(v) = \ee^{\frac{Lv}{2}}, \\ \nonumber \\
    F_W(u)&= \frac{\ee^{\frac{1}{4}T^2 u^2}}{4 \pi}, \quad G_W(v) = \ee^{\frac{1}{4} T^2 (v+2\ii \Omega)^2} - T\sqrt{\pi} \bigg(\Omega - \frac{1}{2} \ii v\bigg) \bigg[ 1 - \text{erf}\bigg(T \Omega - \frac{1}{2} \ii v\bigg) \bigg].
\end{align}
For the sake of simplicity, we will introduce the dimensionless variables:
\begin{equation}
    \mu = u T, \quad \nu = v T, \quad\w =\Omega T, \quad \ell=\frac{L}{T},
\end{equation}
and the different functions become:
\begin{align}
    F_H(u)&= \frac{ \ee^{\frac{1}{4} (\mu-2\ii \omega)^2}}{4 \ell \sqrt{\pi}}, \quad G_H(v)=\ee^{-\frac{1}{4} \ell^2}\ee^{-\frac{1}{2}\ell \nu} \text{erfi} \bigg(\frac{1}{2}(\ell + \nu) \bigg), \\ \nonumber \\
    F_\Delta(u) &= -\frac{ \ee^{-\frac{1}{4} \mu^2}}{4 \ell\sqrt{\pi}} \ee^{\frac{1}{4}(\mu-2\ii \omega)^2}, \quad G_\Delta(v) = \ee^{\frac{1}{2} \ell \nu}, \\ \nonumber \\
    F_W(u)&= \frac{\ee^{\frac{1}{4}\mu^2}}{4 \pi}, \quad G_W(v) = \ee^{\frac{1}{4}  (\nu+2\ii \omega)^2} - \sqrt{\pi} \bigg(\omega - \frac{1}{2} \ii \nu\bigg) \bigg[ 1 - \text{erf}\bigg( \omega - \frac{1}{2} \ii \nu\bigg) \bigg].
\end{align}
It is important to note that this change of variable also affects the differential operators:
\begin{equation}
    \bigg( \frac{\partial}{\partial u} \bigg)^k = T^k \bigg( \frac{\partial}{\partial \mu} \bigg)^k, \quad \bigg( \frac{\partial}{\partial v} \bigg)^k = T^k \bigg( \frac{\partial}{\partial \nu} \bigg)^k.
\end{equation}
Finally, we can write:
\begin{align}
    G_{nm}&=   H_n \bigg( \dfrac{\dd}{\dd \alpha} \bigg) \,  H_m \bigg( \dfrac{\dd}{\dd \beta} \bigg)  H (\ee^{\alpha t} \Lambda^+_{\tc{a}},\ee^{\beta t'} \Lambda^+_{\tc{b}})\Bigg\lvert_{\alpha=\beta=0} \nonumber \\ \nonumber\\
   &=   \Bigg[ n! \sum_{r=0}^{\lfloor \frac{n}{2}\rfloor} \frac{(-1)^r}{r!(n-2r)!} \bigg(2 \frac{d}{d \alpha}\bigg)^{n-2r} \Bigg] \, \Bigg[ m! \sum_{s=0}^{\lfloor \frac{m}{2}\rfloor} \frac{(-1)^s}{s!(m-2s)!} \bigg(2 \frac{d}{d \beta}\bigg)^{m-2s} \Bigg] H (\ee^{\alpha t} \Lambda^+_{\tc{a}},\ee^{\beta t'} \Lambda^+_{\tc{b}})\Bigg\lvert_{\alpha=\beta=0}  \nonumber \\ \nonumber\\
   &= \sum_{r=0}^{\lfloor \frac{n}{2}\rfloor} \sum_{s=0}^{\lfloor \frac{m}{2}\rfloor} \frac{(-1)^{r+s}\, 2^{n+m-2(r+s)}}{r!s!(n-2r)!(m-2s)!} \bigg( \frac{d}{d \alpha}\bigg)^{n-2r} \bigg( \frac{d}{d \beta}\bigg)^{m-2s} H (\ee^{\alpha t} \Lambda^+_{\tc{a}},\ee^{\beta t'} \Lambda^+_{\tc{b}})\Bigg\lvert_{\alpha=\beta=0} 
    \nonumber \\ \nonumber\\
    &=\sum_{r=0}^{\lfloor \frac{n}{2}\rfloor} \sum_{s=0}^{\lfloor \frac{m}{2}\rfloor}  \frac{(-1)^{r+s}\, 2^{n+m-2(r+s)}}{r!s!(n-2r)!(m-2s)!} \sum_{i=0}^{n-2r} \sum_{j=0}^{m-2s} \binom{n-2r}{i} \binom{m-2s}{j}(-1)^j (1+(-1)^{i+j}) \nonumber \\ 
    &\quad  \times T^{n+m-2(r+s)}  \Bigg[ \bigg(  \frac{d}{d \mu}\bigg)^{n+m-2(r+s)-i-j} \bigg(  \frac{ \ee^{\frac{1}{4} (\mu-2\ii \w)^2}}{4 \ell \sqrt{\pi}}\bigg)\Bigg]\Bigg\lvert_{\mu=0}  \Bigg[ \bigg(  \frac{d}{d \nu}\bigg)^{i+j} \bigg(\ee^{-\frac{1}{4} \ell^2} \ee^{-\frac{1}{2} \ell \nu} \text{erfi} \bigg(\frac{1}{2} (\ell + \nu) \bigg) \bigg)\Bigg]\Bigg\lvert_{\nu=0} , \label{eq:Gnm} \\ 
    W_{nm} &= \sum_{r=0}^{\lfloor \frac{n}{2}\rfloor} \sum_{s=0}^{\lfloor \frac{m}{2}\rfloor}  \frac{(-1)^{r+s}\, 2^{n+m-2(r+s)}}{r!s!(n-2r)!(m-2s)!} \sum_{i=0}^{n-2r} \sum_{j=0}^{m-2s} \binom{n-2r}{i} \binom{m-2s}{j}(-1)^j (1+(-1)^{i+j}) \,T^{n+m-2(r+s)} \nonumber \\ 
    & \times   \bigg[ \bigg( \frac{d}{d \mu}\bigg)^{n+m-2(r+s)-i-j} \bigg( \frac{\ee^{\frac{1}{4} \mu^2}}{4 \pi}\bigg)\bigg]\bigg\lvert_{\mu=0}  \bigg[ \bigg( \frac{d}{d \nu}\bigg)^{i+j} \bigg(\ee^{\frac{1}{4}  (\nu+2\ii \w)^2} -  \sqrt{\pi} \bigg(\w - \ii \frac{1}{2}\nu\bigg) \bigg[ 1 - \text{erf} \bigg( \w - \ii \frac{1}{2}\nu\bigg)\bigg] \bigg)\bigg]\bigg\lvert_{\nu=0}, \label{eq:Wnm} \\ 
     \Delta_{nm} &= \sum_{r=0}^{\lfloor \frac{n}{2}\rfloor} \sum_{s=0}^{\lfloor \frac{m}{2}\rfloor}  \frac{(-1)^{r+s}\, 2^{n+m-2(r+s)}}{r!s!(n-2r)!(m-2s)!} \sum_{i=0}^{n-2r} \sum_{j=0}^{m-2s} \binom{n-2r}{i} \binom{m-2s}{j}(-1)^j (1+(-1)^{i+j}) \nonumber \\ 
    & \times T^{n+m-2(r+s)}  \bigg[ \bigg( \frac{d}{d \mu}\bigg)^{n+m-2(r+s)-i-j} \bigg( -\frac{ \ee^{-\frac{1}{4} \mu^2}}{4 \ell\sqrt{\pi}} \ee^{\frac{1}{4}(\mu-2\ii \omega)^2}\bigg)\bigg]\bigg\lvert_{\mu=0}  \bigg[ \bigg( \frac{d}{d \nu}\bigg)^{i+j} \bigg(\ee^{\frac{1}{2} \ell \nu} \bigg)\bigg]\bigg\lvert_{\nu=0}. \label{eq:Deltanm}
\end{align}

\section{Proof of convergence of the rescaled Hermite expansion}
\label{app:proof}

In this appendix we prove that any function in $L^2([-1,1])$ can be expanded in terms of rescaled Hermite functions, as in Eq.~\eqref{eq:chi}. Let
\begin{equation}
    h_n(x)
    =
    \frac{1}{\pi^{1/4}\sqrt{2^n n!}}
    H_n(x)\ee^{-x^2/2}
\end{equation}
be the standard normalized Hermite functions on $L^2(\mathbb R)$, and define
their rescalings by
\begin{equation}
    h_n^{(\sigma)}(x)
    =
    \sigma^{-1/2}h_n(x/\sigma).
\end{equation}
For each fixed $\sigma>0$, the family
$\{h_n^{(\sigma)}\}_{n\geq 0}$ is an orthonormal basis of
$L^2(\mathbb R)$. We shall consider the $N$-dependent scale
\begin{equation}
    \sigma_N
    =
    \frac{1}{\sqrt{2N+1}},
    \qquad
    \varepsilon_N:=\sigma_N^2
    =
    \frac{1}{2N+1}.
\end{equation}
Given $f\in L^2(\mathbb R)$, define
\begin{equation}
    f_N(x)
    =
    \sum_{n=0}^N
    \big\langle h_n^{(\sigma_N)},f\big\rangle
    h_n^{(\sigma_N)}(x).
    \label{eq:def_f_N}
\end{equation}
Equivalently, $f_N$ is the orthogonal projection of $f$ onto
\begin{equation}
    \mathcal H_N
    =
    \operatorname{span}\{h_0^{(\sigma_N)},\ldots,h_N^{(\sigma_N)}\}.
\end{equation}

The main result is the following proposition, where we denote $\mathbf{1}_{[a,b]}$ as the characteristic function of the interval $[a,b]$,
\begin{equation}
    \mathbf{1}_{[a,b]}(x) = \begin{cases}
        1, x\in [a,b],\\
        0, x\notin [a,b].
    \end{cases}
\end{equation}

\begin{proposition}
\label{prop:semiclassical_projection}
For every $f\in L^2(\mathbb R)$,
\begin{equation}
    f_N
    \longrightarrow
    \mathbf 1_{[-1,1]}f
    \qquad
    \text{in }L^2(\mathbb R).
    \label{eq:f_N_limit}
\end{equation}
In particular, the sequence $\{f_N\}_{N\geq 0}$ converges in
$L^2(\mathbb R)$ and its limit is equal to $f$ if and only if
$f$ is supported in $[-1,1]$.
\end{proposition}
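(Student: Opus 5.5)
The plan is to recognize $f_N$ as a spectral projection of a semiclassical harmonic oscillator and take a strong-resolvent limit. For $\varepsilon>0$ set
\begin{equation*}
    A_\varepsilon=-\varepsilon^2\frac{\dd^2}{\dd x^2}+x^2 ,
\end{equation*}
which is self-adjoint on $L^2(\mathbb R)$. Write the standard oscillator as $-\partial_y^2+y^2$, with $h_n(y)$ an eigenfunction of eigenvalue $2n+1$, and substitute $y=x/\sigma$. Then $h_n^{(\sigma)}$ is an eigenfunction of $-\sigma^4\partial_x^2+x^2$ with eigenvalue $\sigma^2(2n+1)$. Hence $\{h_n^{(\sigma_N)}\}$ is the eigenbasis of $A_{\varepsilon_N}$ with eigenvalues $\varepsilon_N(2n+1)=(2n+1)/(2N+1)$. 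The choice $\varepsilon_N=1/(2N+1)$ is made precisely so that the cutoff $n\le N$ corresponds to eigenvalues $\le 1$. Therefore
\begin{equation*}
    f_N=\mathbf 1_{(-\infty,1]}(A_{\varepsilon_N})f ,
\end{equation*}
and the next eigenvalue above the cutoff is $1+2\varepsilon_N$. The candidate limit is the multiplication operator $M$ by $x^2$. Its spectral projection is $\mathbf 1_{(-\infty,1]}(M)=\mathbf 1_{[-1,1]}$, so the proposition amounts to the strong convergence $\mathbf 1_{(-\infty,1]}(A_{\varepsilon_N})\to\mathbf 1_{(-\infty,1]}(M)$.

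First, I show $A_\varepsilon\to M$ in the strong resolvent sense as $\varepsilon\to0$. The space $C_c^\infty(\mathbb R)$ is a common core for all $A_\varepsilon$ and for $M$. For $\varphi\in C_c^\infty$ we have $\|A_\varepsilon\varphi-M\varphi\|=\varepsilon^2\|\varphi''\|\to0$. The core criterion of Reed--Simon (Thm.~VIII.25(a) in~\cite{ReedSimon1}) then gives strong resolvent convergence.

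Second, I pass to spectral projections (Thm.~VIII.24(b) of~\cite{ReedSimon1}). If $a<b$ are not eigenvalues of $M$, then $\mathbf 1_{(a,b)}(A_{\varepsilon})\to\mathbf 1_{(a,b)}(M)$ strongly. The operator $M$ has no eigenvalues at all, since each level set $\{x^2=\lambda\}$ has Lebesgue measure zero. Since $A_\varepsilon\ge0$, I can take $a=-1$ and $b$ arbitrary.

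The expected obstacle is that the cutoff sits exactly at the edge of the spectrum of $A_{\varepsilon_N}$ (eigenvalue $1$ belongs to it), so I cannot apply the theorem directly at $b=1$. I handle this with a squeeze. Fix $\delta>0$. For $N$ large, $2\varepsilon_N<\delta$, so
\begin{equation*}
    \mathbf 1_{(-1,1-\delta)}(A_{\varepsilon_N})\le \mathbf 1_{(-\infty,1]}(A_{\varepsilon_N})\le \mathbf 1_{(-1,1+\delta)}(A_{\varepsilon_N})
\end{equation*}
as commuting projections. For commuting projections $P_-\le P\le P_+$ and any projection $Q$,
\begin{equation*}
    \|(P-Q)f\|\le \|(P_+-P_-)f\|+\|(P_--Q)f\| .
\end{equation*}
Moreover $\|(P_+-P_-)f\|^2=\langle f,P_+f\rangle-\langle f,P_-f\rangle$ converges to $\|\mathbf 1_{\{1-\delta\le x^2<1+\delta\}}f\|^2$. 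Letting $N\to\infty$ and then $\delta\to0$, dominated convergence (using $|\{x^2=1\}|=0$) shows $f_N\to\mathbf 1_{[-1,1]}f$ in $L^2$.

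The final claim follows at once. The limit exists for every $f$ and equals $\mathbf 1_{[-1,1]}f$, and this coincides with $f$ in $L^2$ exactly when $f$ vanishes almost everywhere outside $[-1,1]$.
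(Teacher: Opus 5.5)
Your proposal is correct and follows essentially the paper's route: you identify $f_N=\mathbf 1_{(-\infty,1]}(H_{\varepsilon_N})f$, prove strong resolvent convergence to multiplication by $x^2$, and pass to spectral projections using that $\{x^2=1\}$ is a null set. The differences are only in execution. You cite the common-core criterion where the paper uses the explicit identity $R_\varepsilon g-R_0g=\varepsilon^2R_\varepsilon u''$ plus a density argument. Your squeeze re-derives the continuity-point form of the projection theorem that the paper cites; that theorem only requires $1$ not to be an eigenvalue of the limit $M$, so the eigenvalue of $H_{\varepsilon_N}$ at $1$ is not a genuine obstacle.
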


\begin{proof}
We write the finite Hermite projection as a spectral projection of a
semiclassical harmonic oscillator. For $\varepsilon>0$, let
\begin{equation}
    H_\varepsilon
    =
    -\varepsilon^2\frac{d^2}{dx^2}+x^2
    \label{eq:H_epsilon}
\end{equation}
be the self-adjoint operator on $L^2(\mathbb R)$ associated with the closed,
nonnegative quadratic form
\begin{equation}
    q_\varepsilon[u]
    =
    \varepsilon^2
    \int_{\mathbb R}|u'(x)|^2\,dx
    +
    \int_{\mathbb R}x^2|u(x)|^2\,dx,
    \label{eq:q_epsilon}
\end{equation}
with form domain
\begin{equation}
    \mathcal Q
    =
    H^1(\mathbb R)
    \cap
    \{u\in L^2(\mathbb R): xu\in L^2(\mathbb R)\}.
\end{equation}
The operator $H_\varepsilon$ is nonnegative and self-adjoint by the
Friedrichs representation theorem for closed semibounded quadratic forms
\cite{ReedSimon1,Kato}.

The standard Hermite functions satisfy
\begin{equation}
    \left(
        -\frac{d^2}{dy^2}+y^2
    \right)h_n(y)
    =
    (2n+1)h_n(y).
\end{equation}
Setting $y=x/\sigma$ and $\varepsilon=\sigma^2$, one obtains
\begin{equation}
    H_\varepsilon h_n^{(\sigma)}
    =
    (2n+1)\varepsilon\,h_n^{(\sigma)}.
    \label{eq:scaled_hermite_eigenvalue}
\end{equation}
In particular, for $\varepsilon=\varepsilon_N=(2N+1)^{-1}$,
\begin{equation}
    H_{\varepsilon_N}h_n^{(\sigma_N)}
    =
    \frac{2n+1}{2N+1}h_n^{(\sigma_N)}.
\end{equation}
Hence
\begin{equation}
    \frac{2n+1}{2N+1}\leq 1
    \qquad\Longleftrightarrow\qquad
    n\leq N.
\end{equation}
Therefore the projection in Eq.~\eqref{eq:def_f_N} is precisely the spectral
projection of $H_{\varepsilon_N}$ onto the interval $(-\infty,1]$:
\begin{equation}
    f_N
    =
    \mathbf 1_{(-\infty,1]}(H_{\varepsilon_N})f.
    \label{eq:f_N_spectral_projection}
\end{equation}

We now identify the strong limit of these spectral projections. Let $M$ be
the multiplication operator
\begin{equation}
    (Mu)(x)=x^2u(x),
\end{equation}
with domain
\begin{equation}
    D(M)=\{u\in L^2(\mathbb R): x^2u\in L^2(\mathbb R)\}.
\end{equation}
We will first prove that
\begin{equation}
    H_\varepsilon
    \longrightarrow
    M
    \qquad
    \text{in the strong resolvent sense as }\varepsilon\to0.
    \label{eq:strong_resolvent_claim}
\end{equation}

Recall the following standard criterion. If $A_j$ and $A$ are self-adjoint
operators and $z$ is a point in the common resolvent set, if we have that for each $g\in L^2(\mathbb R)$,
\begin{equation}
   \qquad(A_j-z)^{-1}g
    \longrightarrow
    (A-z)^{-1}g,    
\end{equation}
then $A_j\to A$ in the strong resolvent sense. This property can also be used to characterize the strong resolvent convergence~\cite{ReedSimon1,Kato}. Since $H_\varepsilon\geq0$ and
$M\geq0$, we consider the point $z=-1$, which lies outside their spectrum and within the resolvent set.

Define
\begin{equation}
    R_\varepsilon
    =
    (H_\varepsilon+1)^{-1},
    \qquad
    R_0
    =
    (M+1)^{-1}.
\end{equation}
The operator $R_0$ is multiplication by $(1+x^2)^{-1}$:
\begin{equation}
    (R_0g)(x)=\frac{g(x)}{1+x^2}.
\end{equation}
Moreover, by the spectral theorem, since $H_\varepsilon\geq0$,
\begin{equation}
    \|R_\varepsilon\|_{L^2\to L^2}
    =
    \sup_{\lambda\in\sigma(H_\varepsilon)}
    \frac{1}{1+\lambda}
    \leq 1.
    \label{eq:resolvent_bound}
\end{equation}
Similarly,
\begin{equation}
    \|R_0\|_{L^2\to L^2}\leq 1.
\end{equation}
These bounds are uniform in $\varepsilon$.

Let $g\in C_c^\infty(\mathbb R)$ and set
\begin{equation}
    u=R_0g=\frac{g}{1+x^2}.
\end{equation}
Then $u\in C_c^\infty(\mathbb R)$, in particular $u''\in L^2(\mathbb R)$,
and
\begin{equation}
    (M+1)u=g.
\end{equation}
Since
\begin{equation}
    (H_\varepsilon+1)u
    =
    (M+1)u-\varepsilon^2u''
    =
    g-\varepsilon^2u'',
\end{equation}
we have
\begin{equation}
    g
    =
    (H_\varepsilon+1)u+\varepsilon^2u''.
\end{equation}
Applying $R_\varepsilon=(H_\varepsilon+1)^{-1}$ gives
\begin{equation}
    R_\varepsilon g
    =
    u+\varepsilon^2R_\varepsilon u''.
\end{equation}
Therefore
\begin{equation}
    R_\varepsilon g-R_0g
    =
    \varepsilon^2R_\varepsilon u''.
\end{equation}
Using the uniform bound \eqref{eq:resolvent_bound}, we obtain
\begin{equation}
    \|R_\varepsilon g-R_0g\|_{L^2}
    \leq
    \varepsilon^2\|u''\|_{L^2}
    \longrightarrow0.
    \label{eq:resolvent_convergence_core}
\end{equation}
Thus $R_\varepsilon g\to R_0g$ for each
$g\in C_c^\infty(\mathbb R)$.

We now extend the convergence to arbitrary $g\in L^2(\mathbb R)$.
Let $g_k\in C_c^\infty(\mathbb R)$ satisfy
\begin{equation}
    \|g-g_k\|_{L^2}\longrightarrow0.
\end{equation}
For fixed $k$,
\begin{equation}
    R_\varepsilon g_k
    \longrightarrow
    R_0g_k
\end{equation}
by Eq.~\eqref{eq:resolvent_convergence_core}. Furthermore, by the uniform
bounds on $R_\varepsilon$ and $R_0$,
\begin{align}
    \|R_\varepsilon g-R_0g\|_{L^2}
    &\leq
    \|R_\varepsilon(g-g_k)\|_{L^2}
    +
    \|R_\varepsilon g_k-R_0g_k\|_{L^2}
    +
    \|R_0(g_k-g)\|_{L^2}
    \nonumber\\
    &\leq
    2\|g-g_k\|_{L^2}
    +
    \|R_\varepsilon g_k-R_0g_k\|_{L^2}.
\end{align}
Taking $\limsup_{\varepsilon\to0}$ gives
\begin{equation}
    \limsup_{\varepsilon\to0}
    \|R_\varepsilon g-R_0g\|_{L^2}
    \leq
    2\|g-g_k\|_{L^2}.
\end{equation}
Finally, sending $k\to\infty$, we obtain
\begin{equation}
    R_\varepsilon g
    \longrightarrow
    R_0g
    \qquad
    \text{for every }g\in L^2(\mathbb R).
\end{equation}
Hence $H_\varepsilon\to M$ in the strong resolvent sense.

We now use the standard convergence theorem for spectral projections:
if $A_j\to A$ in the strong resolvent sense, and $E_j$, $E$ denote the
corresponding spectral measures, then
\begin{equation}
    E_j((-\infty,\lambda])g
    \longrightarrow
    E((-\infty,\lambda])g
\end{equation}
for every $g$, provided $\lambda$ is a continuity point of the spectral
family of $A$, equivalently $E(\{\lambda\})=0$
\cite{ReedSimon1,Kato}.

For the multiplication operator $M=x^2$, the spectral projection onto
$(-\infty,1]$ is multiplication by the characteristic function of the
interval $\{x:x^2\leq1\}$. Therefore
\begin{equation}
    \mathbf 1_{(-\infty,1]}(M)f
    =
    \mathbf 1_{\{x^2\leq1\}}f
    =
    \mathbf 1_{[-1,1]}f.
\end{equation}
Moreover, $1$ is a
continuity point of the spectral family of $M$,a s can be seen from
\begin{equation}
    \mathbf 1_{\{x^2=1\}}f
    =
    \mathbf 1_{\{-1,1\}}f
    =
    0
    \qquad
    \text{in }L^2(\mathbb R),
\end{equation}
because the set $\{-1,1\}$ has Lebesgue measure zero.

Combining the strong resolvent convergence
$H_{\varepsilon_N}\to M$ with the spectral-projection theorem gives
\begin{equation}
    \mathbf 1_{(-\infty,1]}(H_{\varepsilon_N})f
    \longrightarrow
    \mathbf 1_{(-\infty,1]}(M)f
    =
    \mathbf 1_{[-1,1]}f
    \qquad
    \text{in }L^2(\mathbb R).
\end{equation}
Using Eq.~\eqref{eq:f_N_spectral_projection}, this is precisely
\begin{equation}
    f_N
    \longrightarrow
    \mathbf 1_{[-1,1]}f
    \qquad
    \text{in }L^2(\mathbb R).
\end{equation}
This proves the proposition.
\end{proof}

\section{Example of convergence of the rescaled Hermite expansion with a compactly supported function}

Notice that although Appendix~\ref{app:proof} shows that the rescaled Hermite expansion converges to a compactly supported function, it does not necessarily ensure that the smeared propagators also converge when evaluated at the truncated expansion. In this appendix we will show an explicit example that display convergence of the smeared propagators. Explicitly, we will consider an example of the Hermite expansion used to reproduce an entanglement harvesting setup with a compactly supported switching function. In particular, we consider the switching function used in~\cite{patriciaAndI}
\begin{equation}\label{eq:switching-example}
\chi(t) =  \left( 1- \frac{4 t^2}{\bar{T}^2} \right)^{5/2} , \quad t \in [-\bar{T}/2, \bar{T}/2], \chi(t) = 0  \quad \text{otherwise},
\end{equation}
which is compactly supported in $[-\bar{T}/2, \bar{T}/2]$. %In~\cite{patriciaAndI}, the authors consider two inertial comoving finite-sized UDW detectors with spatial shape restricted to spheres of radius $R$, separated by $L = 42R$ with switching functions defined by $T = 40R$, ensuring spacelike separation with $L/T = 1.05$.% where $R$ is the radius of the sphere giving the spatial shape of the detectors with spatial smearing functions supported in a sphere of radius $R$.
For an explicit example, we fix an arbitrary time scale $T_0$ and consider pointlike detectors separated by a distance $L=5 T_0$, as in the main text. For each $N$ we consider the truncated switching functions
\begin{equation}
    \chi_N(t) = \sum_{n=0}^N \langle h_{n,N},\chi\rangle_{L^2}h_{n,N}(t),
\end{equation}
where $h_{n,N}(t) = h_{n}(t,T_N)$, with $T_N$ given by Eqs.~\eqref{eq:TN} and~\eqref{eq:f(N)}. This ensures that in the limit of $N\to \infty$ the detectors are effectively spacelike separated with switching functions supported in $[-L/2,L/2]$.

\begin{figure}[h!]
    \centering
   \includegraphics[width=0.6\linewidth]{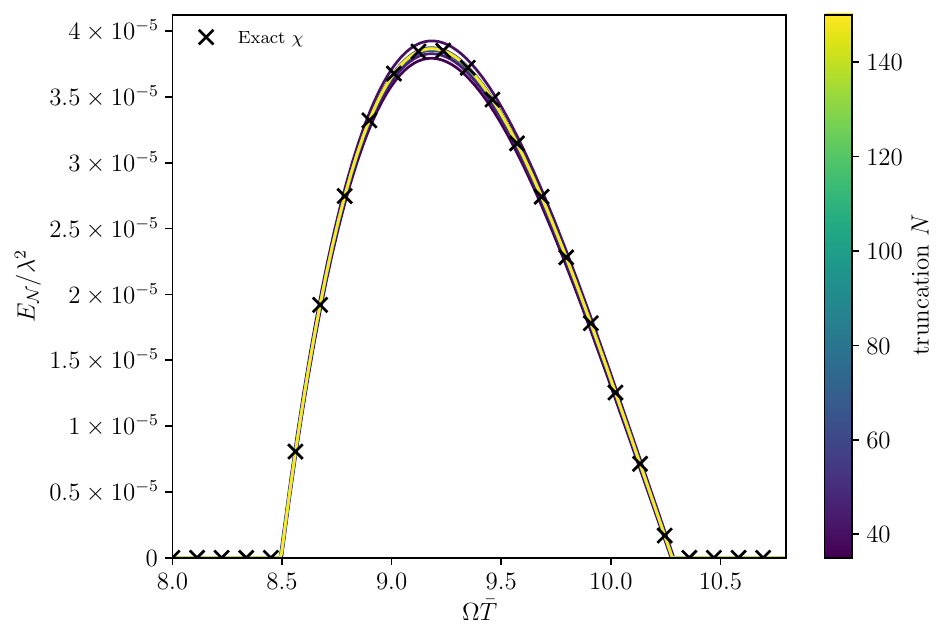}
\vspace{-5mm}
    \caption{Negativity of two pointlike UDW detectors with switching function given by Eq.~\eqref{eq:switching-example}, expanded in the Hermite basis for different truncations, as a function of the energy gap.}
\label{figure:negativity_convergence}
\end{figure}
 In Fig.~\ref{figure:negativity_convergence}, we plot the negativity between the two detectors using the truncated switching functions $\chi_N$ as a function of the energy gap for different truncations of the Hermite expansion and we compare it to the negativity values obtained by considering the non-truncated switching function $\chi$, denoted by black crosses. We see that as $N$ increases, the results using the truncated switching functions converge to the values obtained with the exact form for $\chi$. 
\begin{figure}[h!]
    \centering
    \includegraphics[width=0.49\linewidth]{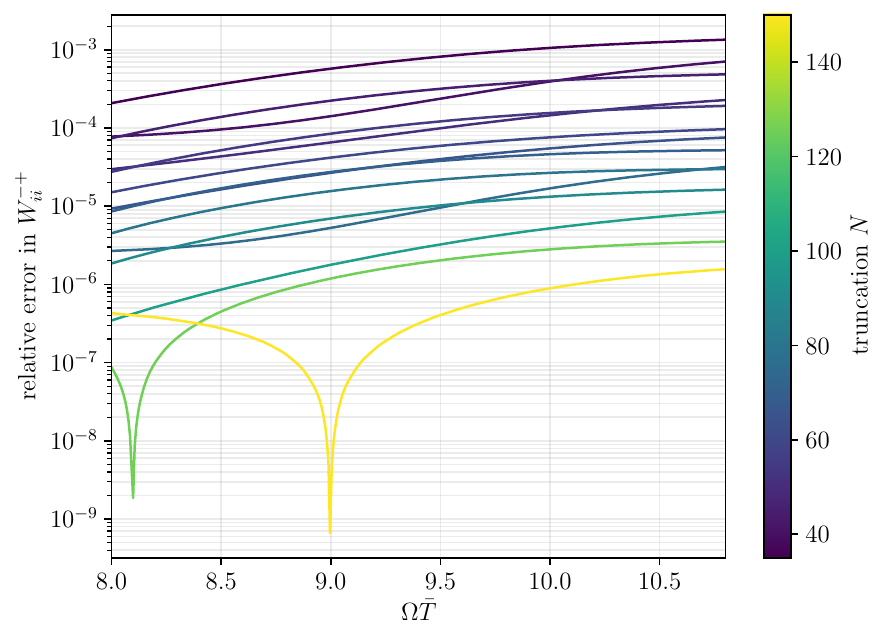}
    \hfill
    \includegraphics[width=0.49\linewidth]{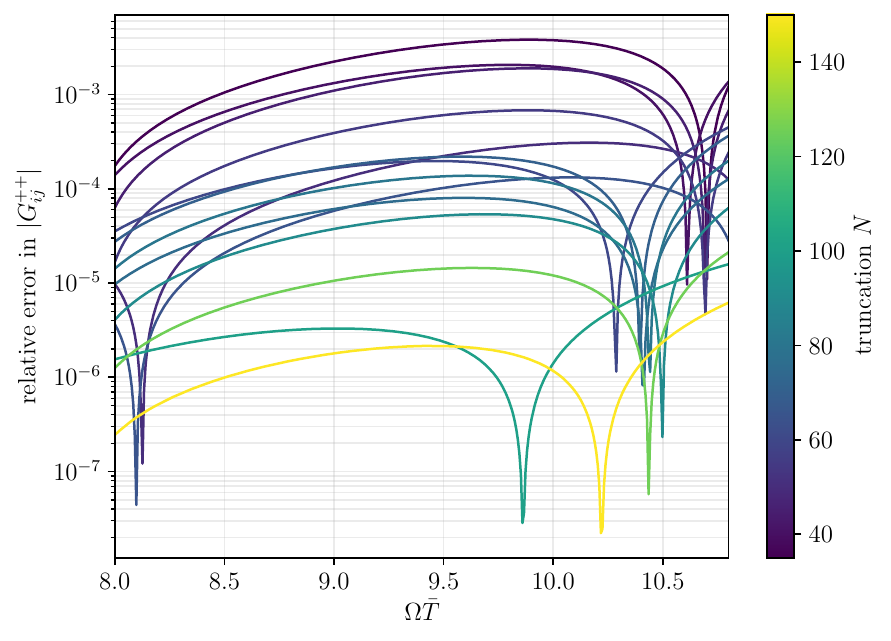}
    \vspace{-5mm}
    \caption{Relative error in the $|G|$ and $W$ terms given by the truncated Hermite expansion.}
    \label{figure:relerror}
\end{figure}

To show convergence, in Fig.~\ref{figure:relerror}, we plot the relative error of the smeared Wightman and Feynman propagators $W(\Lambda_\tc{a}^-,\Lambda_\tc{a}^+) = W(\Lambda_\tc{b}^-,\Lambda_\tc{b}^+)$, $|G_{\tc{f}}(\Lambda_\tc{a}^+,\Lambda_\tc{b}^+)|$ between the exact $\chi$ and its truncated form $\chi_N$ for multiple values of $N$. For $N = 150$ the relative error remains below $10^{-5}$.

%$a \equiv T_0/2$ to approximate their dimensionless ratio $L/(T/2) = 2.1$. The only difference between the two setups is our detectors being pointlike, whose only consequences is that our ratio $L/a$ cannot be exactly $2.1$. In particular, we take $a=2.384 T_0$ and $L/a \approx 2.097$.  \\ \\
%i know you don't like this but i couldn't skip the line otherwise

% In Figs.~\ref{fig:values}, we plot the analytical values of $|\mathcal{M}|$ and $W$ and the values given by the Hermite expansion in the pointlike case.
% We can overall observe that the negativity curves converge to the analytical results and, by $N=150$, the Hermite expansions effectively reproduces the analytical results at very high precision.
% \begin{figure}
%   \centering
%   \includegraphics[width=0.48\linewidth]{Figures/M_values.pdf}
%   %\caption{1a}
%   \label{fig:M_values}
%   \centering
%   \includegraphics[width=0.48\linewidth]{Figures/W_values.pdf}
%   %\caption{1b}
% \caption{Comparison of the $|\mathcal{M}|$ and $W$ terms between the analytical results and the truncated Hermite expansion at $N=150$.}
% \label{fig:values}
% \end{figure}

\end{document}